\newtheorem{corollary}{Corollary}
\newtheorem{lemma}{Lemma}
\newtheorem{theorem}{Theorem}
\newtheorem{definition}{Definition}
\newtheorem{proposition}{Proposition}
\newtheorem{remark}{Remark}
\newtheorem{note}{Note}
\newcommand{\e}{\textnormal{e}}
\newcommand{\One}{\textnormal{One}}
\newcommand{\MAC}{\textnormal{MAC}}
\newcommand{\BC}{\textnormal{BC}}
\newcommand{\linfb}{\textnormal{linfb}}
\newcommand{\Oz}{\textnormal{Oz}}
\newcommand{\Jafar}{\textnormal{Jafar}}
\newcommand{\fb}{\textnormal{fb}}
\newcommand{\nofb}{\textnormal{nofb}}
\newcommand{\SISO}{\textnormal{SISO}}
\newcommand{\SIMO}{\textnormal{SIMO}}
\newcommand{\MISO}{\textnormal{MISO}}
\newcommand{\B}{\textnormal{B}}
\newcommand{\BHone}{\mat{H}_1^{\B}}
\newcommand{\BHtwo}{\mat{H}_2^{\B}}
\newcommand{\BH}{\mat{H}^{\B}}
\newcommand{\bBHone}{\mat{\bar H}_1^{\B}}
\newcommand{\bBHtwo}{\mat{\bar H}_2^{\B}}
\newcommand{\bBH}{\mat{\bar H}^{\B}}
\newcommand{\tBHone}{\trans{(\mat{ H}_1^{\B})}}
\newcommand{\tBHtwo}{\trans{(\mat{ H}_2^{\B})}}
\newcommand{\tr}[1]{\textnormal{tr}\left(#1\right)}
\title{MIMO MAC-BC Duality  with Linear-Feedback Coding Schemes}
\author{Selma Belhadj Amor, Yossef Steinberg, and Mich\`ele Wigger
\thanks{This work was in part presented at the {\em International Zurich Seminar on Communications}, in Zurich, Switzerland, February 2014.}
\thanks{S.~Belhadj Amor and M.~Wigger are with the Department of Communications and Electronics, Telecom Paristech, Paris, France. email:\{belhadjamor,wigger\}@telecom-paristech.fr.}\thanks{Y.~Steinberg is with the Department of Electrical Engineering at the Technion---Israel Institute of Technology, Haifa, Israel. email:ysteinbe@ee.technion.ac.il.}
\thanks{
The work of S.~Belhadj Amor and M.~Wigger has been supported by the city of Paris under the ``Emergences" program. The work of Y.~Steinberg has been supported by the Israel Science Foundation (grant no.~684/11).}
}
\begin{document}

 \maketitle 

\begin{abstract}
We show that for the multi-antenna Gaussian multi-access channel (MAC) and broadcast channel (BC) with perfect feedback, 
the rate regions achieved by linear-feedback coding schemes (called \emph{linear-feedback capacity regions}) coincide when the same total input-power constraint is imposed on both channels and when the  MAC channel matrices are the transposes of the BC channel matrices. Such a pair of MAC and BC is called \emph{dual}. We also identify sub-classes of linear-feedback coding schemes that achieve the  linear-feedback capacity regions of these two channels and present multi-letter expressions for the linear-feedback capacity regions. Moreover, within the two sub-classes of  coding schemes that achieve the linear-feedback capacity regions for a given MAC and its dual BC, we identify for each MAC scheme   a BC scheme and for each BC scheme  a MAC scheme so that the two schemes have same total input power and  achieve the same rate regions.

In the two-user case, when the  transmitters or the  receiver  are single-antenna, the capacity region for the Gaussian MAC is known \cite{OZAROW84}, \cite{JAFAR06} and the capacity-achieving scheme is a linear-feedback coding scheme. With our results we can thus  determine the linear-feedback capacity region of the two-user Gaussian BC when either  transmitter or  receivers are single-antenna and we can identify the corresponding linear-feedback capacity-achieving coding schemes. Our results show that the control-theory inspired linear-feedback coding scheme by  Elia~\cite{Elia04},  by Wu~{\it et al.}~\cite{Wu05}, and by Ardestanizadeh~{\it et al.} \cite{AMF12} is sum-rate optimal among all linear-feedback coding schemes for the symmetric single-antenna Gaussian BC with equal channel gains. 

In the $K\geq 3$-user case, Kramer \cite{Kramer02} and Ardestanizadeh~{\it et al.}~\cite{AWKJ} determined the linear-feedback sum-capacity for the symmetric single-antenna Gaussian MAC with equal channel gains. Using our duality result, in this paper we  identify the linear-feedback sum-capacity for the $K\geq 3$-user single-antenna Gaussian BC with equal channel gains. It is equal to the sum-rate achieved by Ardestanizadeh~{\it et al.}'s linear-feedback coding scheme \cite{AMF12}. 

Our results extend also to the setup where only a subset of the  feedback links are present.

\end{abstract}

\begin{IEEEkeywords}
Broadcast channel (BC), multiple-access channel (MAC), Gaussian noise, channel capacity, duality, linear-feedback coding schemes, perfect feedback, multiple-input multiple-output (MIMO) channels.
\end{IEEEkeywords}
\section{Introduction}
\label{sec:intro}
Unlike for point-to-point channels, in multi-user networks feedback can enlarge  capacity. 
For most multi-user networks the capacity region with feedback is however still unknown. Notable exceptions are the two-user memoryless single-input single-output (SISO) Gaussian  multi-access channel (MAC) whose capacity region with feedback was determined by Ozarow~\cite{OZAROW84}, and the two-user single-input multi-output (SIMO) and  multi-input single-output (MISO) memoryless Gaussian MACs, whose capacity regions were determined by Jafar and Goldsmith \cite{JAFAR06}. For more than two users or in the general multi-input multi-output (MIMO) case, the capacity region of the memoryless Gaussian MAC with feedback is still open. For $K>2$ transmitters, Kramer \cite{Kramer02} determined the sum-capacity of the SISO Gaussian MAC under equal power constraints $P$ at all the transmitters when this  $P$ is sufficiently large. 

 Ozarow's coding scheme \cite{OZAROW84}, which achieves the capacity region of the two-user SISO Gaussian MAC with feedback, is a variation of the Schalkwijk-Kailath scheme for point-to-point channels. Each transmitter maps its message to a message point and sends this message point during one of the first two channel uses. In  channel uses $3$ and thereafter both transmitters send scaled versions of the linear minimum mean squared estimation (LMMSE) errors of their message points when observing all previous outputs. Ozarow showed that this scheme achieves the sum-capacity of the two-user SISO Gaussian MAC with perfect feedback. To achieve the entire capacity region, one of the two transmitters has to combine this scheme with a nofeedback scheme using rate-splitting. 
The described scheme falls into the class of \emph{linear-feedback coding schemes} \cite{AWKJ}, 
where the transmitters can use the feedback signals only in a linear way. That means, a transmitter's channel input for a given channel use is a \emph{linear} combination of the previously observed feedback signals and some information-carrying code symbols
which only depend on the transmitter's message but not on the feedback. 

Jafar and Goldsmith's \cite{JAFAR06} capacity-achieving schemes for the two-user SIMO and MISO Gaussian MACs and Kramer's scheme for the $K$-user SISO Gaussian MAC are variations of Ozarow's scheme and also belong to the class of linear-feedback coding schemes. It has recently been shown \cite{AWKJ} that 
under equal input-power constraints $P$ at all $K$ transmitters, irrespective of the values of $P$ and $K$,  Kramer's scheme achieves the largest sum-rate among all  linear-feedback coding schemes.

The capacity region of the memoryless Gaussian BC with perfect feedback is unknown even with only two receivers and in the SISO case. Achievable regions  have been proposed by Ozarow \& Leung~\cite{OZAROW-LEUNG}, Elia~\cite{Elia04},  Kramer~\cite{Kramer02}, Wu~{\it et al.}~\cite{Wu05}, Ardestanizadeh~{\it et al.}~\cite{AMF12},  Gastpar~{\it et al.}~\cite{GLSW11}, Wu \& Wigger~\cite{WU}, Shayevitz \& Wigger~\cite{SHAYEVITZ}, and Venkataramanan \& Pradhan~\cite{VENKATARAMAN}. The schemes in \cite{OZAROW-LEUNG}, \cite{Elia04},  \cite{Kramer02}, \cite{Wu05}, \cite{AMF12},  \cite{GLSW11} are linear-feedback coding schemes and  outperform the other schemes \cite{WU}, \cite{SHAYEVITZ}, \cite{VENKATARAMAN} when these latter are specialized to the SISO Gaussian BC.  
For example, for some setups where the noises at the two receivers are correlated, the scheme in~\cite{GLSW11} provides the largest achievable rates known to date. 
Also, in the asymptotic regime where the allowed input power $P\to \infty$ it achieves the sum-capacity, irrespective of the correlation between the noise sequences at the two receivers. 

For finite input power $P$ and when the two noise sequences are uncorrelated, the largest achievable sum-rate known  to date for the \emph{symmetric} SISO Gaussian BC is attained by 
the linear-feedback coding schemes in \cite{Elia04}, \cite{Wu05}, and \cite{AMF12}, which are designed based on control-theoretic considerations.  Specifically, they achieve the same sum-rate over the symmetric SISO Gaussian BC  under power constraint $P$ as Ozarow's scheme \cite{OZAROW84} achieves over the Gaussian  MAC under a \emph{sum-power} constraint $P$. Thus, there is a duality in terms of achievable sum-rate between the control-theoretic schemes for the BC in \cite{Elia04}, \cite{Wu05}, and \cite{AMF12} and Ozarow's capacity-achieving scheme for the MAC. It is unknown whether the  schemes in \cite{Elia04}, \cite{Wu05}, and \cite{AMF12} achieve the sum-capacity with perfect feedback for symmetric BCs, and previous to this work, it was also unknown  whether for the symmetric SISO Gaussian BC it is  sum-rate optimal among all linear-feedback coding schemes. As detailed shortly, our results in this paper show that this is indeed the case.

Without feedback, the following {duality relation} is well known~\cite{VJG03,VisTse,WSS06}: under the same sum input-power constraint the capacity regions of the MIMO Gaussian MAC and BC coincide when the channel matrices of the MAC and BC are transposes of each other. 
Such a pair of MAC and BC is called \emph{dual}. 

Our main contribution in this work is the following new duality result:  
with perfect feedback and when restricting to  linear-feedback coding schemes, the set of all achievable rates, coincide for 
the MIMO Gaussian MAC and BC when the two channels are dual and when the same sum input-power constraint is imposed on their inputs.
This result is particularly interesting in the two-user case and when either transmitter(s) or receiver(s) are single-antenna (SISO, MISO, and SIMO setups) because for these setups computable single-letter characterizations
of the linear-feedback capacity regions of the Gaussian MAC are known. With our duality result, we thus  immediately obtain single-letter characterizations of the linear-feedback capacity regions for the two-user SISO, SIMO, and MISO Gaussian BC. For more than $K\geq 3$ users the linear-feedback sum-capacity of the SISO Gaussian MAC is known when the channel gains are equal \cite{Kramer02, AWKJ}; with our results we thus obtain the linear-feedback sum-capacity of the SISO Gaussian BC when the channel gains are equal. Our results in particular show that the control-theory inspired linear-feedback coding schemes  proposed by  Elia~\cite{Elia04},  by Wu~{\it et al.}~\cite{Wu05}, and by Ardestanizadeh~{\it et al.} \cite{AMF12} are sum-rate optimal among all linear-feedback coding schemes for the symmetric SISO Gaussian BC with equal channel gains, irrespective of the number of receivers $K\geq 2$. 

We also introduce a class of (multi-letter) linear-feedback schemes for the MIMO Gaussian MAC and BC that achieve the linear-feedback capacity regions. Within this class we can identify the pairs of schemes  that achieve the same rate-regions over dual MACs and BCs. 
Since we know the optimal linear-feedback schemes for the two-user SISO, SIMO, and MISO Gaussian MAC \cite{OZAROW84,JAFAR06}, we can identify the optimal linear-feedback schemes for the two-user SISO, MISO, and SIMO Gaussian BC. 

Our results extend also to a setup where only some of the feedback links are present. 

The remainder of this paper is organized as follows. 
In Section~\ref{sec:prels} we explain the notations used in this paper and  introduce some preliminaries. In 
Section~\ref{sec:BC}, we consider the two-user MIMO Gaussian BC with perfect feedback and in Section~\ref{sec:MAC}  the two-user MIMO Gaussian MAC with perfect feedback: specifically, we describe the channel model, introduce the class of linear-feedback coding schemes, and summarize previous results.
Section~\ref{sec:main} presents our main results on MAC-BC duality with linear-feedback schemes, 
and  the linear-feedback capacity-achieving schemes for MAC and BC. In Sections~\ref{sec:ext_one} and~\ref{sec:ext_many}, we explain how our results extend to setups with partial feedback and to arbitrary $K\geq 2$  users.
Finally, Section~\ref{sec:proofs} contains the major proofs.

\section{Notation and Preliminaries}
\label{sec:prels}
In the following, a random variable is denoted by an upper-case letter (e.g $X$, $Y$, $Z$) and its realization by a lower-case letter (e.g $x$, $y$, $z$).
 An $n$-dimensional random column-vector and its realization are denoted by boldface symbols (e.g.~$\vect{X}$, $\vect{x}$). We use $\|\cdot\|$ to indicate the Euclidean norm and $\E{\cdot}$ for the expectation operator.  The abbreviation i.i.d. stands for \emph{independently and identically distributed}.
 
Sets are denoted by calligraphic letters (e.g., $\set{X}$, $\set{Y}$, $\set{Z}$) and $\set{X}\times \set{Y}$ denotes the Cartesian product of the sets  $\set{X}$ and $\set{Y}$.  The set of real numbers is denoted by $\Reals$ and its $d$-fold Cartesian product by $\Reals^d$. We use $\textnormal{cl}(X)$ to denote the convex closure of the set $\set{X}$.

Throughout the paper, $\log(\cdot)$ refers to the binary logarithm-function.
 
 To denote matrices we use the font $\mat{A}$. For the transpose of a matrix $\mat{A}$ we write $\trans{\mat{A}}$, for its determinant $|\mat{A}|$, and for its trace $\textnormal{tr}(\mat{A})$. For the Kronecker product of two matrices $\mat{A}$ and $\mat{B}$ we write $\mat{A}\otimes \mat{B}$. We use
 $\mat{I}_d$ to denote the $d$-by-$d$ identity matrix, where we  drop the subscript whenever the dimensions are clear from the context. The symbol $\mat{E}_d$ denotes the $d$-by-$d$ \emph{exchange matrix} which is 0 everywhere except on the counter-diagonal where it is 1. For example, 
 \begin{equation}
 \mat{E}_3= \begin{bmatrix} 0 &0 & 1\\
  0& 1& 0\\ 
  1& 0& 0\end{bmatrix}.
 \end{equation}
 Again, we drop the subscript whenever the dimensions are clear.
 We can now define the \emph{reverse image matrix operator} $\mat{\bar{\cdot}}$: For a given  $d_1$-by-$d_2$ matrix $\mat{A}$, 
 \begin{equation}\label{eq:reverse}
 \mat{\bar{A}}\eqdef \mat{E}_{d_2} \trans{\mat{A}}\mat{E}_{d_1}. 
 \end{equation}

\begin{note}\label{note:bar}
The reverse image matrix operator satisfies the following properties: 
\begin{enumerate}
\item Applying the operator twice results in the identity operation: $\mat{A}= \mat{\bar{\mat{\bar{A}}}}$.
\item The operator commutes with the matrix inverse-operator and the product operator: 
\begin{IEEEeqnarray}{rCl}
(\mat{\bar{A}})^{-1} &=& \overline{\mat{(\mat{A}^{-1})}}\\
\mat{\bar{A}}\mat{\bar{B}}&=&\mat{ \overline{(\mat{B}\mat{A})}}.
\end{IEEEeqnarray}
\item The operator maps a strictly-lower block-triangular $\eta\kappa_1$-by-$\eta\kappa_2$ matrix of block sizes $\kappa_1\times \kappa_2$ into a strictly-lower block-triangular $\eta \kappa_2$-by-$\eta\kappa_1$ matrix of block sizes $\kappa_2\times \kappa_1$. 
\end{enumerate}
\end{note}

\section{MIMO Gaussian BC with Feedback}
\label{sec:BC}
\subsection{Setup}
\label{sec:setupBC}
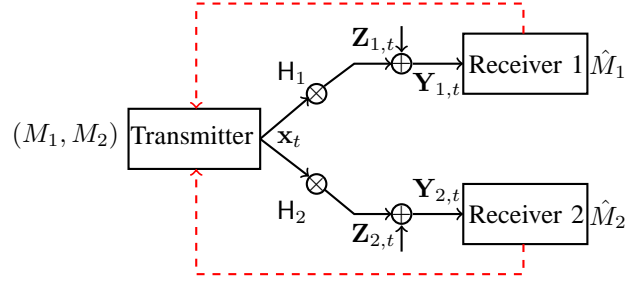
\begin{figure}[ht]
\centerline{
\begin{tikzpicture}[xscale=1,yscale=1]
\draw [thick] (0,2.1) rectangle (1.75,2.9);
\node[right] at (-0.1,2.55){Transmitter};
\node[left] at (0,2.55){$(M_1,M_2)$};
\draw [->][thick](1.75,2.5)--(2.4,3.05);
\draw [thick](2.5,3.1) circle [radius=0.13];
\node[above left] at (2.5,3.1){$\mat{H}_1$};
\node  at (2.5,3.1){$\times$};
\draw [->][thick](2.6,3.2)--(3,3.5)--(3.5,3.5);
\node[right] at (1.85,2.5){$\vect{x}_t$};
\draw [thick](3.63,3.5) circle [radius=0.13];
\node at (3.63,3.5){$+$};
\draw [->][thick](3.63,4)--(3.63,3.65);
\node [left] at (3.68,3.8){$\vect{Z}_{1,t}$};
\draw [->][thick](3.78,3.5)--(4.45,3.5);
\draw [thick] (4.45,3.1) rectangle (6.1,3.9);
\node[right] at (4.4,3.5){Receiver~1};
\node[right] at (6,3.5){$\hat{M}_1$};
\draw [->][thick](1.75,2.5)--(2.4,2);
\draw [thick](2.5,1.9) circle [radius=0.13];
\node[below left] at (2.5,1.8){$\mat{H}_2$};
\node  at (2.5,1.9){$\times$};
\draw [->][thick](2.6,1.83)--(3,1.5)--(3.5,1.5);
\draw [thick](3.63,1.5) circle [radius=0.13];
\node at (3.63,1.5){$+$};
\draw [->][thick](3.63,1)--(3.63,1.35);
\node [left] at (3.68,1.2){$\vect{Z}_{2,t}$};
\draw [->][thick](3.78,1.5)--(4.45,1.5);
\draw [thick] (4.45,1.1) rectangle (6.1,1.9);
\node[right] at (4.4,1.5){Receiver~2};
\node[right] at (6,1.5){$\hat{M}_2$};
\node[below] at (4.1,3.5){$\vect{Y}_{1,t}$};
\node[above] at (4.1,1.5){$\vect{Y}_{2,t}$};
\draw [->][dashed,thick][red](5.25,3.9)--(5.25,4.3)--(0.9,4.3)--(0.9,2.9);
\draw [->][dashed,thick][red](5.25,1.1)--(5.25,0.7)--(0.9,0.7)--(0.9,2.1); 
\end{tikzpicture}}
\caption{Two-user MIMO Gaussian BC with feedback.\label{fig:bc}} 
\end{figure}
We consider the two-user memoryless MIMO Gaussian BC with perfect-output feedback depicted in Figure~\ref{fig:bc}. The transmitter is equipped with $\kappa$ transmit-antennas and each Receiver $i$, 
for $i\in \{1,2\}$, is equipped with $\nu_i$ receive-antennas.
At each time $t\in \mathbb{N}$, if $\vect{x}_t$ denotes the  real vector-valued input symbol sent by the transmitter, 
Receiver~$i\in\{1,2\}$ observes the real vector-valued channel output
\begin{IEEEeqnarray}{rCL}\label{eq:BCmodel}
\vect{Y}_{i,t}&=&\mat{H}_i \vect{x}_t+\vect{Z}_{i,t},
\end{IEEEeqnarray}
where $\mat{H}_i$, for $i\in\{1,2\}$, is a deterministic real $\nu_i$-by-$\kappa$ channel matrix known to transmitter and receivers and $\{\vect{Z}_{1,t}\}_{t=1}^n$ and $\{\vect{Z}_{2,t}\}_{t=1}^n$ are independent sequences of i.i.d.~centered Gaussian random vectors of identity covariance matrix.

The transmitter wishes to convey a message $M_1$ to Receiver~1 and an independent message $M_2$ to Receiver~2. The messages are independent of the noise sequences $\{\vect{Z}_{1,t}\}_{t=1}^n$ and $\{\vect{Z}_{2,t}\}_{t=1}^n$ and uniformly distributed over the sets $\mathcal{M}_1\triangleq\{1,\dots, \lfloor 2^{n R_1}\rfloor\}$ and $\mathcal{M}_2\triangleq\{1,\dots, \lfloor 2^{n R_2}\rfloor\}$, where 
$R_1$ and $R_2$ denote the rates of transmission and $n$ the blocklength. 

The transmitter observes causal, noise-free output feedback from both receivers. Thus, the time-$t$ channel input $\vect{X}_t$ can depend on both messages $M_1$ and $M_2$ and on all previous channel outputs $ \vect{Y}_{1,1},\ldots, \vect{Y}_{1,t-1}$ and $\vect{Y}_{2,1}, \ldots, \vect{Y}_{2,t-1}$:
\begin{IEEEeqnarray}{rCL}\label{eq:inputsBC}
\vect{X}_t=\varphi_t^{(n)}(M_1,M_2, \vect{Y}_{1,1},\ldots, \vect{Y}_{1,t-1},\vect{Y}_{2,1}, \ldots, \vect{Y}_{2,t-1}), \quad t\in\{1,\dots,n\},\IEEEeqnarraynumspace
\end{IEEEeqnarray}
for some encoding function of the form:
\begin{IEEEeqnarray}{rCL}
\varphi_t^{(n)}&:&\mathcal{M}_1 \times \mathcal{M}_2 \times \mathbb{R}^{\nu_1 (t-1)}\times\mathbb{R}^{\nu_2(t-1)}\to \mathbb{R}^\kappa.\label{eq:bc_encoding}
\end{IEEEeqnarray}

We impose an \emph{expected average block-power constraint}
\begin{IEEEeqnarray}{C}
\label{eq:powerconstraintBC}
\frac1n \sum_{t=1}^n \mathbf{E}[ \|\vect{X}_t\|^2]\leq P,
\end{IEEEeqnarray}
where the expectation is over the messages and the realizations of the channel.

Each Receiver~$i$ decodes its corresponding message $M_i$ by means of a decoding function $\phi_{i}^{(n)}$ of the form 
\begin{IEEEeqnarray}{rCl}
 \phi_{i}^{(n)}&\colon& \mathbb{R}^{\nu_i n}\to \mathcal{M}_i,\quad i\in \{1,2\}\label{eq:bc_decoding}.
 \end{IEEEeqnarray}
That means,  based on the output sequence $\vect{Y}_{i,1},\dots,\vect{Y}_{i,n}$, Receiver~$i$ produces the guess 
\begin{equation}
\hat{M}_i^{(n)}=\phi_{i}^{(n)}(\vect{Y}_{i,1},\dots,\vect{Y}_{i,n}).
\end{equation}

An error occurs in the communication if 
\begin{equation}
 (\hat{M}_1\neq M_1) \text{ or } (\hat{M}_2 \neq M_2).
\end{equation}

Thus, the average probability of error is 
\begin{IEEEeqnarray}{rCL} 
 P_{\e,\BC}^{(n)}&\triangleq&\textnormal{Pr}\big[(\hat{M}_1\neq M_1) \text{ or } (\hat{M}_2 \neq M_2)\big].
\end{IEEEeqnarray}

A \emph{$(\lfloor2^{nR_1}\rfloor,\lfloor 2^{nR_2}\rfloor,n)$ MIMO BC feedback-code of 
power $P$} is composed of a sequence of encoding functions $\{g_t^{(n)}\}_{t=1}^n$ as in~\eqref{eq:bc_encoding} and satisfying~\eqref{eq:powerconstraintBC} and of two decoding functions $\phi_1^{(n)}$ and $\phi_2^{(n)}$ as in~\eqref{eq:bc_decoding}.

We say that a rate-pair $(R_1,R_2)$ is achievable over the MIMO Gaussian BC with feedback under a power constraint $P$, if there exists a sequence of $\{(\lfloor 2^{n R_1}\rfloor ,\lfloor 2^{n R_2}\rfloor,n)\}_{n=1}^{\infty}$ MIMO BC feedback-codes such that the average probability of error $P_{\e,\BC}^{(n)}$ tends to zero as the blocklength tends to infinity. 
The closure of the union of all achievable regions is called \emph{capacity region}. 
We denote it by $\set{C}_{\BC}^{\fb}(\mat{H}_1,\mat{H}_2,P)$. The supremum of the sum $R_1+R_2$, where $(R_1,R_2)$ are in $\set{C}_{\BC}^{\fb}(\mat{H}_1,\mat{H}_2,P)$ is called \emph{sum-capacity} and is denoted $C_{\BC,\Sigma}^{\fb}(\mat{H}_1,\mat{H}_2,P)$.
\subsection{Linear-feedback schemes for MIMO BC}
We restrict attention to \emph{linear-feedback  coding schemes} where the transmitter's channel input is a \emph{linear}  combination of the previous feedback signals and an information-carrying vector that depends only on the messages $(M_1,M_2)$ (but not on the feedback). 
Specifically, we assume that the channel input vector has the form
\begin{IEEEeqnarray}{rCl}
\vect{X}_t&=&\vect{W}_t+\sum_{i=1}^2 \sum_{\tau=1}^{t-1} \mat{A}_{i,\tau,t}\vect{Y}_{i,\tau}, \quad t\in \{1,\dots,n\}, \label{eq:lin_com_1_bc}
\end{IEEEeqnarray}
where
 $\vect{W}_t=\xi_t^{(n)}(M_1,M_2)$
and where $\{\mat{A}_{i,\tau,t}\}$ are arbitrary $\kappa$-by-$\nu_i$ matrices.

The mappings $\big\{\xi_t^{(n)}\colon\set{M}_1 \times \set{M}_2\to \Reals^\kappa\big\}_{t=1}^n$ and the decoding operations $\phi_{1}^{(n)}$ and $\phi_{2}^{(n)}$ can be arbitrary.

Taking a linear combination of the information-carrying vector $\vect{W}_t$ and the past output vectors $ \vect{Y}_{1,1},\ldots, \vect{Y}_{1,t-1}$ and $\vect{Y}_{2,1},\ldots,$ $\vect{Y}_{2,t-1}$ is equivalent to taking a (different) linear combination of (a different information-carrying vector) $\vect{\tilde W}_t$ and the past noise vectors  $ \vect{Z}_{1,1},\ldots, \vect{Z}_{1,t-1}$ and $\vect{Z}_{2,1}, \ldots, \vect{Z}_{2,t-1}$. 
Hence, we can equivalently write \eqref{eq:lin_com_1_bc} as
\begin{IEEEeqnarray}{rCl}
\vect{X}_t&=&\vect{\tilde  W}_t+\sum_{i=1}^2 \sum_{\tau=1}^{t-1} \mat{B}_{i,\tau,t}\vect{Z}_{i,\tau}, \quad t\in \{1,\dots,n\}, \label{eq:lin_com_2_bc}
\end{IEEEeqnarray}
where $\vect{\tilde  W}_t=\tilde {\xi}_t^{(n)}(M_1,M_2)$,
for some arbitrary function $\tilde{\xi}_t^{(n)}\colon \set{M}_1\times\set{M}_2\to \Reals^\kappa$, and $\{\mat{B}_{i,\tau,t}\}$ are arbitrary $\kappa$-by-$\nu_i$ matrices.

The set of all rate-pairs achieved by linear-feedback schemes is called  \emph{linear-feedback capacity region} and is denoted $\set{C}_{\BC}^{\linfb}({\mat{H}_1}, {\mat{H}_2};P)$. 
The largest sum-rate achieved by a linear-feedback scheme is called  \emph{linear-feedback sum-capacity} and is denoted $C_{\BC,\Sigma}^{\linfb}({\mat{H}_1},{\mat{H}_2};P)$.

\subsection{Previous Results}
Without feedback, the capacity region of the MIMO Gaussian BC,   $\set{C}_\BC^\nofb\left(\mat{H}_1,\mat{H}_2;P\right)$ was determined by Weingarten, Steinberg, and Shamai~\cite{WSS06}.\footnote{Recently, Nair has extended their result to also allow for an additional common message to be sent to the two receivers.}

With feedback, the capacity region is unknown even in the scalar case. 
Achievable regions---based on linear-feedback schemes---have been proposed in~\cite{Kramer02,OZAROW-LEUNG,Elia04,Wu05,AMF12,GLSW11}. 
Non-linear feedback schemes have been proposed in \cite{WU,SHAYEVITZ,VENKATARAMAN}. 
The best known achievable regions are due to {linear-feedback schemes}. 


\section{MIMO Gaussian MAC with Feedback}\label{sec:MAC}
\subsection{Setup}
\begin{figure}[ht]
\centerline{
\begin{tikzpicture}[xscale=1,yscale=1]
\draw [thick] (-0.5,3.1) rectangle (1.5,3.9);
\node[right] at (-0.55,3.5){Transmitter~1};
\node[left] at (-0.5,3.5){$M_1$};
\draw [->][thick](1.5,3.5)--(2.35,3.5);
\draw [->][thick](2.6,3.5)--(3.05,2.6);
\draw [thick](2.45,3.5) circle [radius=0.13];
\node[above right] at (2.5,3.5){$\trans{\mat{H}_1}$};
\node [right] at (2.2,3.5){$\times$};
\node[above] at (2,3.5){$\vect{x}_{1,t}$};
\draw [->][thick](1.5,1.5)--(2.35,1.5);
\draw [->][thick](2.6,1.5)--(3.05,2.44);
\draw [thick](2.45,1.5) circle [radius=0.13];
\node [right] at (2.2,1.5){$\times$};
\node[below right] at (2.5,1.5){$\trans{\mat{H}_2}$};
\node[below] at (2,1.5){$\vect{x}_{2,t}$};
\draw [thick] (-0.5,1.1) rectangle (1.5,1.9);
\node[right] at (-0.55,1.5){Transmitter~2};
\node[left] at (-0.5,1.5){$M_2$};
\draw [thick] (4,2.1) rectangle (5.5,2.9);
\node[right] at (4,2.55){Receiver};
\node[right] at (5.4,2.55){$(\hat{M}_1,\hat{M}_2)$};
\draw [->][thick](3.3,2.5)--(4,2.5);
\node[below] at (3.7,2.5){$\vect{Y}_t$};
\draw [thick](3.15,2.5) circle [radius=0.13];
\node  at (3.15,2.5){$+$};
\draw [->][thick](3.15,3.18)--(3.15,2.62);
\node [right] at (3.15,3.02){$\vect{Z}_t$};
\draw [->][dashed,thick][red](4.75,2.9)--(4.75,4.3)--(0.55,4.3)--(0.55,3.9);
\draw [->][dashed,thick][red](4.75,2.1)--(4.75,0.7)--(0.55,0.7)--(0.55,1.1);
\end{tikzpicture}}
\caption{Two-user MIMO Gaussian MAC with feedback.\label{fig:mac}}
\end{figure}
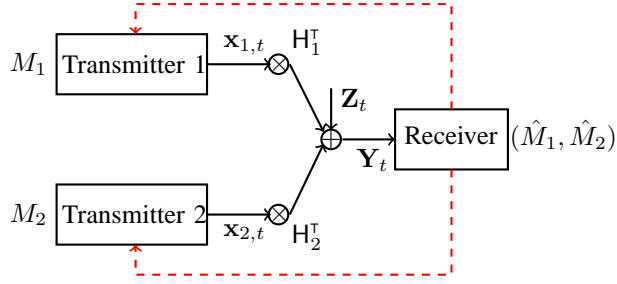

We consider the two-user memoryless MIMO Gaussian MAC with perfect output-feedback in Figure~\ref{fig:mac}. 
Each Transmitter $i$, for $i\in\{1,2\}$, is equipped with $\nu_i$ transmit-antennas and the receiver is equipped with $\kappa$ receive-antennas.
At each time $t\in \mathbb{N}$, if $\vect{x}_{1,t}$ and $\vect{x}_{2,t}$ denote the vector signals sent by Transmitters 1 and~2, the receiver observes the real vector-valued channel output 
\begin{equation}
\vect{Y}_t=\trans{\mat{H}_1} \vect{x}_{1,t}+\trans{\mat{H}_2} \vect{x}_{2,t}+\vect{Z}_t,\label{eq:mac_channel_output}
\end{equation}
where $\mat{H}_i$, for $i\in\{1,2\}$, is a deterministic real $\nu_i$-by-$\kappa$ channel matrix known to transmitters and receiver and $\{\vect{Z}_t\}$ is a sequence of independent and identically distributed $\kappa$-dimensional centered 
Gaussian random vectors of identity covariance matrix. 

The goal of communication is that Transmitters~1~and~2 convey the independent messages $M_1$ and $M_2$ to the common receiver, {where the pair $(M_1,M_2)$ is  independent of the noise sequence $\{\vect{Z}_t\}$.}
{(Recall that $M_i$ is uniformly distributed over $\mathcal{M}_i=\{1,\dots, \lfloor 2^{n R_i}\rfloor\}$)}

The two transmitters observe perfect feedback from the channel outputs. Thus, Transmitter~$i$'s, $i\in\{1,2\}$, channel input at time~$t$, $\vect{X}_{i,t},$ can depend on its message $M_i$ and the prior output vectors $\vect{Y}_1,\dots,\vect{Y}_{t-1}$ :
\begin{IEEEeqnarray}{rCL}\label{eq:MACinputs}
\vect{X}_{i,t}=\varphi_{i,t}^{(n)} (M_i, \vect{Y}_1,\dots,\vect{Y}_{t-1}),\quad t\in\{1,\dots,n\},
\end{IEEEeqnarray}
for some encoding functions of the form:
\begin{IEEEeqnarray}{rCL}
\varphi_{i,t}^{(n)}&:&\mathcal{M}_i \times \mathbb{R}^{\kappa (t-1)}\to \mathbb{R}.\label{eq:mac_encoding}
\end{IEEEeqnarray}

The channel input sequences $\{\vect{X}_{1,t}\}_{t=1}^n$ and $\{\vect{X}_{2,t}\}_{t=1}^n$ have to satisfy a \emph{total expected average  block-power constraint $P$}:
\begin{IEEEeqnarray}{C}
\label{eq:powerconstraintMAC}
\frac1n \sum_{t=1}^n \left( \mathbf{E}[ \|\vect{X}_{1,t}\|^2]+\mathbf{E}[ \|\vect{X}_{2,t}\|^2]\right) \leq P,
\end{IEEEeqnarray}
where the expectation is over the messages and the realizations of the channel.

The receiver decodes the messages $(M_1,M_2)$ by means of a decoding function $\phi^{(n)}$ of the form 
\begin{IEEEeqnarray}{rCl}
\phi^{(n)}&\colon& \mathbb{R}^{\kappa n}\to     \mathcal{M}_1\times \mathcal{M}_2.\label{eq:mac_decoding} 
\end{IEEEeqnarray}
This means, based on the output sequence $\vect{Y}_1,\ldots, \vect{Y}_n$, the receiver produces its guess 
\begin{equation}\label{eq:estMAC}
(\hat{M}_1,\hat{M}_2)=\phi^{(n)}(\vect{Y}_1,\ldots, \vect{Y}_n). 
\end{equation}

An error occurs in the communication if 
\begin{IEEEeqnarray}{rCl}
(\hat{M}_1,\hat{M}_2) &\neq& (M_1,M_2),
\end{IEEEeqnarray} 
and thus the average probability of error is defined as
\begin{IEEEeqnarray}{rCL} 
 \label{eq:error-mac}
 P_{\e,\MAC}^{(n)}&\triangleq&\textnormal{Pr}\big[ (\hat{M}_1,\hat{M}_2) \neq (M_1,M_2)      \big].
\end{IEEEeqnarray} 

A \emph{$(\lfloor 2^{nR_1}\rfloor,\lfloor 2^{nR_2}\rfloor,n)$ MIMO MAC feedback-code of sum-power $P$} is a triple $$\left(\{\varphi_{1,t}^{(n)}\}_{t=1}^n,\{\varphi_{2,t}^{(n)}\}_{t=1}^n,\Phi^{(n)}\right)$$ 
where $\{\varphi_{1,t}^{(n)}\}_{t=1}^n$ and $\{\varphi_{2,t}^{(n)}\}_{t=1}^n$ are of the form~\eqref{eq:mac_encoding} and satisfy~\eqref{eq:powerconstraintMAC} and  $\phi^{(n)}$ is as in~\eqref{eq:mac_decoding}.

We say that a rate-pair $(R_1,R_2)$ is achievable over the Gaussian MIMO  MAC with feedback under a sum-power constraint $P$, 
if there exists a sequence of $\{(\lfloor 2^{n R_1}\rfloor ,\lfloor 2^{n R_2}\rfloor ,n)\}_{n=1}^\infty$ MIMO MAC 
feedback-codes such that the average probability of a decoding error $P_{\e,\MAC}^{(n)}$ tends to zero as the blocklength $n$ tends to infinity.
The closure of the union of all achievable regions is called \emph{capacity region}. We denote it by $\set{C}_{\MAC}^{\fb}(\trans{\mat{H}_1},\trans{\mat{H}_2},P)$.
The supremum of the sum $R_1+R_2$ over all pairs $(R_1,R_2)$ in $\set{C}_{\MAC}^{\fb}(\trans{\mat{H}_1},\trans{\mat{H}_2},P)$ is called \emph{sum-capacity} and is denoted by $C_{\MAC,\Sigma}^{\fb}(\trans{\mat{H}_1},\trans{\mat{H}_2},P)$.

\subsection{Linear-feedback schemes for MIMO MAC}
In the present paper, we focus on the class of \emph{linear-feedback} coding schemes where the channel inputs at  Transmitter $i$, for $i\in\{1,2\}$, are given by \emph{linear} combinations of the previous feedback signals and an information-carrying vector that only depends on the message $M_i$ (but not on the feedback).

Specifically, we assume that the channel input vectors have the form
\begin{IEEEeqnarray}{rCl}
\vect{X}_{i,t}&=& \vect{W}_{i,t} + \sum_{\tau=1}^{t-1} \mat{C}_{i,\tau,t} \vect{Y}_\tau,\quad i\in\{1,2\},\quad t\in \{1,\dots,n\},
\label{eq:lin_com}
\end{IEEEeqnarray}
where $\vect{W}_{i,t}$ is an information-carrying vector
\begin{IEEEeqnarray}{rCl}
\vect{W}_{i,t}&=&\xi_{i,t}^{(n)}(M_i),\label{eq:wi}
\end{IEEEeqnarray}
and $\{\mat{C}_{i,\tau,t}\}$ are arbitrary $\nu_i$-by-$\kappa$ matrices.

The mappings $\{\xi_{i,t}^{(n)}\colon \set{M}_i \to \Reals^{\nu_i n}\}$ as well as the decoder mapping~$\phi^{(n)}$ can be arbitrary (also non-linear).

The set of all rate-pairs achieved by linear-feedback schemes is called  \emph{linear-feedback capacity region} and is denoted $\set{C}_{\MAC}^{\linfb}(\trans{\mat{H}_1}, \trans{\mat{H}_2};P)$. 
The largest sum-rate achieved by a linear-feedback scheme is called  \emph{linear-feedback sum-capacity} and is denoted $C_{\MAC,\Sigma}^{\linfb}\left(\trans{\mat{H}_1},\trans{\mat{H}_2};P\right)$.

\begin{remark}\label{rem:trans}
For any channel matrices $\trans{\mat{H}}_1$ and $\trans{\mat{H}}_2$ and power constraint $P>0$:
\begin{equation} 
\set{C}_{\MAC}^{\linfb}\left(\trans{\mat{H}_1},\trans{\mat{H}_2};P\right)= \set{C}_{\MAC}^{\linfb}({\mat{\bar H}_1},{\mat{\bar H}_2};P). 
\end{equation}
\end{remark}
\begin{proof} Consider the MIMO MAC with channel matrices $(\trans{\mat{H}_1},\trans{\mat{H}_2})$. If each transmitter multiplies its input vectors by $\mat{E}$ (from the left) before sending the result over the MAC and if the receiver and the transmitters multiply their observed vectors by $\mat{E}$ (from the left) before attempting to decode the messages or before using the feedback, then the MIMO MAC is transformed into a MIMO MAC with channel matrices $({\mat{\bar H}_1},{\mat{\bar H}_2})$. And in the same way the MIMO MAC with channel matrices $({\mat{\bar H}_1},{\mat{\bar H}_2})$ can be transformed into a MIMO MAC with channel matrices $(\trans{\mat{H}_1},\trans{\mat{H}_2})$. This proves the remark.
\end{proof}

\subsection{Previous Results}

Without feedback, the capacity region  of the Gaussian MIMO  MAC under a sum-power constraint $P$, $\set{C}_{\MAC}^{\nofb}\left(\trans{\mat{H}_1},\trans{\mat{H}_2};P\right)$ is readily obtained from the results in~\cite{CHENG-VERDU93}.

With perfect feedback, the capacity region of the MIMO Gaussian MAC under sum-power constraint $P$ is known only in few special cases. An example is the scalar case $\nu_1=\nu_2=\kappa=1$, which we also  call single-input single-output (SISO) setup. In this setup, the channel matrices $(\trans{\mat{H}_1},\trans{\mat{H}_2})$ reduce to the scalar coefficients $(h_1,h_2)$. Ozarow \cite{OZAROW84} determined the capacity region of the scalar Gaussian MAC with perfect feedback under individual power constraints $P_1$ and $P_2$ on the two transmitters' input sequences. It is given by 
\begin{IEEEeqnarray}{rCl}
\set{R}_{\Oz}\left(h_1,h_2;P_1,P_2\right)&=& \bigcup_{\rho\in[0,1]} \set{R}_{\Oz}^\rho(h_1,h_2;P_1,P_2)
\end{IEEEeqnarray}
where for each $\rho\in[0,1]$,  $\set{R}_{\Oz}^\rho(h_1,h_2;P_1,P_2)$ denotes the set of all nonnegative rate-pairs $(R_1,R_2)$ that satisfy
 \begin{subequations}
\begin{IEEEeqnarray}{cCl}
 R_1&\leq& \frac12\log\left(1+h_1^2 P_1 (1-\rho^2)\right),\\
 R_2&\leq&\frac12\log\left(1+h_2^2 P_2 (1-\rho^2)\right),\\
 R_1+R_2&\leq&\hspace*{-0.1cm}\frac12\log\left(1+h_1^2 P_1+h_2^2 P_2+2 \sqrt{h_1^2h_2^2 P_1 P_2}\rho\right).
 \end{IEEEeqnarray}
 \end{subequations} 
From Ozarow's result, we can directly deduce the capacity region of the scalar Gaussian MAC with perfect feedback under a sum-power constraint: 
\begin{IEEEeqnarray}{rCl}
\set{C}_{\MAC,\SISO}^{\fb}\left(h_1, h_2;P\right) & = &\bigcup_{\mathclap{\substack{P_1,P_2\geq 0:\\P_1+P_2=P}}}
\qquad \set{R}_{\Oz}\left(h_1,h_2;P_1,P_2\right).\IEEEeqnarraynumspace\label{eq:mac_capacity}
\end{IEEEeqnarray}
Thus the  capacity region $C_{\MAC,\SISO}^{\fb}$ is achieved by applying Ozarow's scheme with different power splits between the two transmitters. 
The sum-capacity $C_{\MAC,\SISO,\Sigma}^{\fb}\left(h_1, h_2;P\right)$ is 
\begin{IEEEeqnarray}{rCl}\label{sumcapMAC}
C_{\MAC,\SISO,\Sigma}^{\fb}\left(h_1, h_2;P\right)
& = &  \sup_{\mathclap{\substack{P_1,P_2\geq 0:\\P_1+P_2=P}}} \quad \frac{1}{2} \log \Big( 1+ h_1^2 P_1 +h_2^2 P_2 +2 \sqrt{h_1^2 h_2^2P_1P_2}\cdot\rho^\star\left(h_1,h_2;P_1,P_2\right)\Big)\IEEEeqnarraynumspace
 \end{IEEEeqnarray}
where $\rho^\star\left(h_1,h_2;P_1,P_2\right)$ is the unique solution in $[0,1]$ to the following quartic equation in $\rho$
\begin{IEEEeqnarray}{rCl}\label{eq:rhostar}
1+h_1^2 P_1+h_2^2 P_2+2 \sqrt{h_1^2 h_2^2 P_1 P_2} \rho&= &\left(1+h_1^2 P_1 (1- \rho^2)\right) \left(1+h_2^2 P_2 (1- \rho^2)\right). 
\end{IEEEeqnarray}
In Appendix~\ref{app:cor}, we show that in a symmetric setup where  $h_1=h_2=h$, 
\begin{IEEEeqnarray}{rCl}\label{eq:MACsym}
C_{\MAC,\SISO,\Sigma}^{\fb}\left(h,h;P\right) &=&\frac12 \log\left(1+h^2 P(1+\rho^\star(h,h;P/2,P/2))\right).
\end{IEEEeqnarray}
Ozarow's scheme is a \emph{linear feedback scheme} since it combines a Schalkwijk-Kailath~\cite{SK66} type scheme at both transmitters with a no feedback scheme at one of the two transmitters. Specifically, one transmitter sends scaled versions of the linear minimum mean squared estimation  (LMMSE) errors when estimating its message point (which depend only on the message) based on the previous feedback signals. The other transmitter sends the sum of the symbols of a no-feedback scheme and the scaled LMMSE errors about its message point based on the previous feedback signals. Since any no-feedback scheme is a linear-feedback scheme and also the LMMSE errors are by definition linear in the feedback, the overall Ozarow-scheme is also a linear-feedback scheme.
Thus, in the SISO case,
\begin{IEEEeqnarray}{rCl}\label{eq:linopt1}
\set{C}_{\MAC,\SISO}^{\fb}\left(h_1, h_2;P\right) & = & \set{C}_{\MAC,\SISO}^{\linfb}\left(h_1, h_2;P\right),
\end{IEEEeqnarray}
and
\begin{IEEEeqnarray}{rCl}\label{eq:linopt2}
C_{\MAC,\SISO,\Sigma}^{\fb}\left(h_1, h_2;P\right) & = & C_{\MAC,\SISO,\Sigma}^{\linfb}\left(h_1, h_2;P\right).
\end{IEEEeqnarray}

Jafar {\it et al.} \cite{JAFAR06} derived the capacity region with perfect feedback under individual power constraints in the multi-input single-output (MISO) case ($\nu_1, \nu_2$ arbitrary and $\kappa=1$) and in the single-input multi-output (SIMO) case ($\nu_1=\nu_2=1$ and $\kappa$ arbitrary). In both cases the capacity is achieved by a variation of Ozarow's scheme. Based on these results we immediately obtain the linear-feedback capacity region under a total sum-power constraint.
In the MISO case, the channel matrices $\trans{\mat{H}}_1$ and $\trans{\mat{H}}_2$ reduce to the $1\times\nu_1$ and $1\times\nu_2$ vectors $\trans{\vect{h}_1}$ and $\trans{\vect{h}_2}$ and the channel output can be written as
\begin{IEEEeqnarray}{rCl}
\label{eq:MISO_MAC}
Y_t&=&\trans{\vect{h}_1} \vect{x}_{1,t}+\trans{\vect{h}_2} \vect{x}_{2,t}+Z_t.
\end{IEEEeqnarray}
The linear-feedback capacity region is given by
\begin{IEEEeqnarray}{rCl}\label{eq:cap_MISO}
\set{C}_{\MAC,\MISO}^{\linfb}(\trans{\vect{h}_1}, \trans{\vect{h}_2};P)&=& \set{C}_{\MAC,\MISO}^{\fb}(\trans{\vect{h}_1}, \trans{\vect{h}_2};P) \nonumber \\
& = & \set{C}_{\MAC,\SISO}^{\fb}(\|{\vect{h}_1}\|, \|{\vect{h}_2}\|;P),\label{eq:linoptMISO}
\end{IEEEeqnarray}
where notice that the last expression involves the \emph{SISO} capacity region  $\set{C}_{\MAC,\SISO}^{\fb}(\|{\vect{h}_1}\|, \|{\vect{h}_2}\|;P)$.
In the SIMO case, the channel matrices reduce to the $\kappa\times 1$ vectors $\trans{\vect{h}}_1,\trans{\vect{h}}_2$ and the channel output vector can be written as
\begin{IEEEeqnarray}{rCl}
\vect{Y}_t=\vect{h}_1 x_{1,t}+\vect{h}_2 x_{2,t}+\vect{Z}_t.
\end{IEEEeqnarray}
The linear-feedback capacity region  is given by:  
\begin{IEEEeqnarray}{rCl}
\set{C}_{\MAC,\SIMO}^{\linfb}(\trans{\vect{h}_1}, \trans{\vect{h}_2};P) &=& \set{C}_{\MAC,\SIMO}^{\fb}(\trans{\vect{h}_1}, \trans{\vect{h}_2};P) \label{eq:linoptSIMO} \\
& = &\bigcup_{\mathclap{\substack{P_1,P_2\geq 0:\\P_1+P_2=P}}}  \;
\textnormal{cl}\left( \bigcup_{\rho\in[0,1]} \set{R}_{\Jafar}^\rho(\trans{\vect{h}}_1,\trans{\vect{h}}_2;P_1,P_2)\right)\label{eq:exp}\IEEEeqnarraynumspace
\end{IEEEeqnarray}
where for each $\rho\in[0,1]$,  $\set{R}_{\Jafar}^\rho(\trans{\vect{h}}_1,\trans{\vect{h}}_2;P_1,P_2)$ denotes the set of all nonnegative rate-pairs $(R_1,R_2)$ that satisfy
 \begin{subequations}
\begin{IEEEeqnarray}{cCll}
 R_1&\leq& \frac12&\log\left(1+ \|\vect{h}_1\|^2 P_1 (1-\rho^2)\right),\\
 R_2&\leq&\frac12&\log\left(1+ \|\vect{h}_2\|^2 P_2 (1-\rho^2)\right),\\
 R_1+R_2&\leq&\frac12&\log\bigr(1+  \|\vect{h}_1\|^2 P_1+  \|\vect{h}_2\|^2P_2\nonumber\\
 &&&+2 \rho \beta \sqrt{ \|\vect{h}_1\|^2  \|\vect{h}_2\|^2P_1 P_2} \nonumber\\&&&+  \|\vect{h}_1\|^2  \|\vect{h}_2\|^2P_1 P_2 (1-\rho^2)(1-\beta^2)\bigr).\IEEEeqnarraynumspace
 \end{IEEEeqnarray}
 \end{subequations}

\section{Main Results}
\label{sec:main}
\subsection{ Main Results: MAC-BC Duality with Linear-Feedback}
\begin{theorem}
\label{thm:dual}
 \begin{equation}  
  \set{C}_{\BC}^{\linfb}\left(\mat{H}_1,\mat{H}_2;P\right)
= \set{C}_{\MAC}^{\linfb}\left(\trans{\mat{H}_1},\trans{\mat{H}_2};P\right).\end{equation}
\end{theorem}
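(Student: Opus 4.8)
The plan is to pass to a multi-letter (block) description of both schemes, set up a one-to-one correspondence between BC and MAC schemes through the reverse-image operator $\mat{\bar{\cdot}}$, and check that this correspondence preserves both the total power and the achievable rates; equality of the two regions then follows from the two resulting inclusions.

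First I would stack the $n$ channel uses into single vectors. For the BC, writing $\vect{X}$ for the stacked input and $\vect{Y}_i,\vect{Z}_i,\vect{\tilde W}$ for the analogously stacked outputs, noises, and information vectors, the noise form \eqref{eq:lin_com_2_bc} becomes $\vect{X}=\vect{\tilde W}+\mat{B}_1\vect{Z}_1+\mat{B}_2\vect{Z}_2$ with each $\mat{B}_i$ a \emph{strictly-lower} block-triangular $n\kappa$-by-$n\nu_i$ matrix of block size $\kappa\times\nu_i$, and the channel reads $\vect{Y}_i=(\mat{I}_n\otimes\mat{H}_i)\vect{X}+\vect{Z}_i$. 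For the MAC, \eqref{eq:lin_com} becomes $\vect{X}_i=\vect{W}_i+\mat{C}_i\vect{Y}$ with $\mat{C}_i$ strictly-lower block-triangular of block size $\nu_i\times\kappa$; since each $(\mat{I}_n\otimes\trans{\mat{H}_i})\mat{C}_i$ is then strictly-lower block-triangular, the fixed-point equation for $\vect{Y}$ has the invertible coefficient $\mat{M}=\mat{I}-\sum_i(\mat{I}_n\otimes\trans{\mat{H}_i})\mat{C}_i$, so I can solve for $\vect{Y}$ and hence for the $\vect{X}_i$ explicitly in terms of the information vectors and the single noise $\vect{Z}$.

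Next I would argue that, for a fixed blocklength, fixed feedback matrices, and information vectors that may be restricted to jointly Gaussian (these being optimal for purely linear, Gaussian-noise schemes), the achievable rate region is the region cut out by log-determinant mutual-information expressions in the stacked variables, and that the whole linear-feedback capacity region is the closure of the union of these regions over all parameters. It then suffices to exhibit, for every BC choice of $(\mat{B}_1,\mat{B}_2,\text{information covariances})$, a MAC choice realizing the same log-determinant terms at the same power, and conversely. The correspondence I would use sets the MAC feedback matrices to the reverse images of the BC ones, $\mat{C}_i=\mat{\bar B}_i=\mat{E}\trans{\mat{B}_i}\mat{E}$ (time reversal composed with transposition), which by the third item of Note~\ref{note:bar} is again strictly-lower block-triangular and hence a legal MAC scheme; the channel transposition demanded by duality is already present in $\trans{\mat{H}_i}$. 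Power is preserved since $\mat{E}$ is orthogonal, giving $\tr{\mat{\bar B}_i\trans{\mat{\bar B}_i}}=\tr{\mat{B}_i\trans{\mat{B}_i}}$, while the information-vector energies are matched by the accompanying covariance transformation, so the common budget $P$ is maintained.

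The crux, and the step I expect to be the main obstacle, is verifying that the two families of log-determinant expressions coincide under this map. Here I would reduce each individual-rate and sum-rate constraint to determinants of output covariance matrices and manipulate them with $|\mat{A}|=|\trans{\mat{A}}|$, $|\mat{E}\mat{A}\mat{E}|=|\mat{A}|$, and the push-through identity $|\mat{I}+\mat{A}\mat{B}|=|\mat{I}+\mat{B}\mat{A}|$, using the first two items of Note~\ref{note:bar} to move $\mat{\bar{\cdot}}$ through inverses and products. Two points need care: the BC carries two independent noise blocks $\vect{Z}_1,\vect{Z}_2$ whereas the dual MAC carries a single noise block $\vect{Z}$, so the determinant manipulations must be performed at the correct block dimensions; and the information covariances must be paired so that the BC receivers' decoding metrics map exactly onto the MAC receiver's joint and single-user metrics. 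Once the expressions are shown to agree for each fixed $n$ and each admissible parameter choice, taking closures of unions yields $\set{C}_{\BC}^{\linfb}(\mat{H}_1,\mat{H}_2;P)\subseteq\set{C}_{\MAC}^{\linfb}(\trans{\mat{H}_1},\trans{\mat{H}_2};P)$; since $\mat{\bar{\cdot}}$ is an involution by the first item of Note~\ref{note:bar}, running the same construction backwards gives the reverse inclusion, and hence the claimed equality.
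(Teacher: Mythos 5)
Your overall architecture---multi-letter (stacked) descriptions of both channels, a bijection between feedback matrices via the reverse-image operator, power preservation through the orthogonality of $\mat{E}$, and equality of the two unions---is exactly the skeleton of the paper's proof (Propositions~\ref{prop:bc}, \ref{prop:mac}, \ref{prop:equal_regions}, Note~\ref{note:bar}, Remark~\ref{rem:trans}). The gap is in the step you yourself identify as the crux. After the inner linear-feedback code is absorbed, the transformed BC in \eqref{eq:outputsBC} has \emph{colored, cross-correlated} noise at each receiver, and its no-feedback capacity region is a dirty-paper-coding region whose converse rests on the enhancement argument of \cite{WSS06}; it is not cut out by per-user and sum log-determinant constraints that you can match term-by-term against the MAC's polymatroid. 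Trying to ``reduce each individual-rate and sum-rate constraint to determinants of output covariance matrices'' and push $\mat{E}$ and transposes through them is, in effect, an attempt to re-derive the no-feedback MIMO MAC--BC duality theorem from scratch, and that derivation (the covariance transformations of \cite{VJG03,VisTse}) is a substantial result you neither cite nor reconstruct. The paper avoids this entirely: it whitens each BC receiver's noise by $\mat{S}_i^{-1}$ and normalizes the MAC by $\mat{Q}_i^{-1}$, verifies the single matrix identity $\mat{E}\mat{S}_i^{-1}\BH_i\mat{E}=\trans{(\bBH_i\mat{Q}_i^{-1})}$, and then \emph{invokes} the known no-feedback duality for the resulting white-noise dual pair. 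Without that reduction, your determinant manipulations have no well-defined target on the BC side.

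A second, related omission is the MAC power accounting. In the MAC the feedback matrix $\mat{C}_i^{\B}$ multiplies the channel output $\vect{Y}$, which contains both users' signals as well as the noise, so $\E{\|\vect{X}_i\|^2}$ is \emph{not} simply $\E{\|\vect{U}_i\|^2}+\tr{\mat{D}_i^{\B}\trans{(\mat{D}_i^{\B})}}$. The paper must precode $\vect{U}_i$ by $\mat{Q}_i^{-1}$ (the inverse square root of $\mat{M}_i$ in \eqref{eq:Mi}) precisely so that the signal-dependent power contributions collapse and the budget splits as in Lemma~\ref{lem1}; this same $\mat{Q}_i^{-1}$ is what appears in the duality identity above. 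Your phrase ``the information-vector energies are matched by the accompanying covariance transformation'' glosses over this, and without the $\mat{Q}_i^{-1}$ normalization neither the power bookkeeping nor the channel-matrix matching closes. Finally, note that the paper needs Remark~\ref{rem:trans} to pass from the MAC with matrices $\bar{\mat{H}}_i$ (which is what the exchange-matrix conjugation naturally produces) back to the MAC with $\trans{\mat{H}_i}$; you fold this into ``the channel transposition demanded by duality is already present,'' which skips the time-reversal step.
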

\begin{proof}
Follows by Propositions~\ref{prop:bc},~\ref{prop:mac}, and~\ref{prop:equal_regions} ahead,  by point~2 of Note~\ref{note:bar}, and because the capacity regions of the MACs with channel matrices $\trans{\mat{H}}_1$ and $\trans{\mat{H}}_2$ and  $\bar{\mat{H}}_1$ and $\bar{\mat{H}}_2$ coincide, see Remark~\ref{rem:trans}. \end{proof}

Theorem~\ref{thm:dual} implies the following corollary on the sum-capacities:
\begin{corollary}
\label{cor:sumrate-MIMO}
 \begin{IEEEeqnarray}{rCl}
  C_{\BC,\Sigma}^{\linfb}\left(\mat{H}_1,\mat{H}_2; P\right) &=&C_{\MAC,\Sigma}^{\linfb}\left(\trans{\mat{H}_1},\trans{\mat{H}_2}; P\right).  \end{IEEEeqnarray}
\end{corollary}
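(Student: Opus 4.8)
The plan is to derive this from Theorem~\ref{thm:dual} in a single step, since the linear-feedback sum-capacity is by definition nothing but a functional of the linear-feedback capacity region. First I would recall from Sections~\ref{sec:BC} and~\ref{sec:MAC} that both sum-capacities are defined as the largest value of $R_1+R_2$ attained within the corresponding linear-feedback capacity region, i.e.,
\[
C_{\BC,\Sigma}^{\linfb}\left(\mat{H}_1,\mat{H}_2;P\right) = \sup_{(R_1,R_2)\in \set{C}_{\BC}^{\linfb}\left(\mat{H}_1,\mat{H}_2;P\right)} \left(R_1+R_2\right),
\]
and analogously $C_{\MAC,\Sigma}^{\linfb}\left(\trans{\mat{H}_1},\trans{\mat{H}_2};P\right)$ is the supremum of $R_1+R_2$ over the region $\set{C}_{\MAC}^{\linfb}\left(\trans{\mat{H}_1},\trans{\mat{H}_2};P\right)$.

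The key (and only) step is then to invoke Theorem~\ref{thm:dual}, which asserts that these two regions are the \emph{identical} subset of $\Reals^2$. Since the supremum of a fixed linear functional $(R_1,R_2)\mapsto R_1+R_2$ taken over two sets that are equal must itself be equal, the two sum-capacities coincide, which is exactly the claim.

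I do not anticipate any genuine obstacle here: the entire mathematical content resides in Theorem~\ref{thm:dual}, and the corollary is a one-line consequence of the set-level equality of the two regions. The only point meriting a brief sanity check is that both sum-capacities are indeed defined as the supremum of the \emph{same} objective $R_1+R_2$ over their respective regions, as recorded in the channel-model sections; once this is confirmed, no further argument is needed.
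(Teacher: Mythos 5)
Your proposal is correct and is exactly the argument the paper intends: the corollary is stated as an immediate consequence of Theorem~\ref{thm:dual}, since both sum-capacities are the supremum of $R_1+R_2$ over regions that the theorem shows to be identical. No further comment is needed.
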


In the scalar case, Theorem~\ref{thm:dual} and Corollary~\ref{cor:sumrate-MIMO} combined with~\eqref{eq:linopt1} and \eqref{eq:linopt2} specialize to: 
\begin{corollary}
\label{thm:dual_scalar}
\begin{IEEEeqnarray}{rCl}  
  \set{C}_{\BC,\SISO}^{\linfb}\left(h_1,h_2;P\right) &=&\set{C}_{\MAC,\SISO}^{\linfb}\left(h_1,h_2;P\right)\\
  &=& \set{C}_{\MAC,\SISO}^{\fb}\left(h_1,h_2;P\right)
\end{IEEEeqnarray}
and 
 \begin{IEEEeqnarray}{rCl}  \label{eq:SISOSum}
  {C}_{\BC,\SISO,\Sigma}^{\linfb}\left(h_1,h_2;P\right)&=&{C}_{\MAC,\SISO,\Sigma}^{\linfb}\left(h_1,h_2;P\right)\\
  &=&{C}_{\MAC,\SISO,\Sigma}^{\fb}\left(h_1,h_2;P\right).
\end{IEEEeqnarray} 
\end{corollary}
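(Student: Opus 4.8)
The plan is to treat Corollary~\ref{thm:dual_scalar} as a direct specialization of the general MIMO duality results to the scalar setting, combined with the already-recorded optimality of Ozarow's linear-feedback scheme. No new machinery is needed: every piece is available from Theorem~\ref{thm:dual}, Corollary~\ref{cor:sumrate-MIMO}, and the identities~\eqref{eq:linopt1}--\eqref{eq:linopt2}. First I would observe that in the scalar case $\kappa=\nu_1=\nu_2=1$, each channel matrix $\mat{H}_i$ degenerates to the scalar $h_i$, and since the transpose of a scalar is the scalar itself we have $\trans{\mat{H}_i}=h_i$. Specializing Theorem~\ref{thm:dual} to this setting, the general region identity
\begin{equation*}
\set{C}_{\BC}^{\linfb}\left(\mat{H}_1,\mat{H}_2;P\right)=\set{C}_{\MAC}^{\linfb}\left(\trans{\mat{H}_1},\trans{\mat{H}_2};P\right)
\end{equation*}
becomes exactly $\set{C}_{\BC,\SISO}^{\linfb}(h_1,h_2;P)=\set{C}_{\MAC,\SISO}^{\linfb}(h_1,h_2;P)$, which is the first claimed equality.

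For the second equality in the region chain, I would invoke~\eqref{eq:linopt1}, which records Ozarow's result that in the SISO setting the linear-feedback capacity region already coincides with the full feedback capacity region, i.e.\ $\set{C}_{\MAC,\SISO}^{\linfb}(h_1,h_2;P)=\set{C}_{\MAC,\SISO}^{\fb}(h_1,h_2;P)$. Chaining this with the specialized duality from the previous paragraph closes the region part of the corollary.

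The sum-capacity statement is handled in precisely the same two-step manner. Specializing Corollary~\ref{cor:sumrate-MIMO} to scalar matrices gives $C_{\BC,\SISO,\Sigma}^{\linfb}(h_1,h_2;P)=C_{\MAC,\SISO,\Sigma}^{\linfb}(h_1,h_2;P)$, and then~\eqref{eq:linopt2} supplies $C_{\MAC,\SISO,\Sigma}^{\linfb}(h_1,h_2;P)=C_{\MAC,\SISO,\Sigma}^{\fb}(h_1,h_2;P)$; concatenating the two yields~\eqref{eq:SISOSum}.

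I do not expect any genuine obstacle, since all the substance has been relegated to the deep results cited above: the duality itself lives in Theorem~\ref{thm:dual} (proved via Propositions~\ref{prop:bc},~\ref{prop:mac}, and~\ref{prop:equal_regions}), while the collapse of linear-feedback capacity onto full-feedback capacity is Ozarow's theorem. The only point requiring a moment of care is notational consistency: I must note that the argument $\trans{\mat{H}_i}$ appearing in the MAC regions reduces to the scalar $h_i$ used in~\eqref{eq:mac_capacity}, so that the specialized duality regions and the SISO feedback capacity region in~\eqref{eq:linopt1} are literally the same object and may be equated without further justification.
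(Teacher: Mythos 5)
Your proposal is correct and follows exactly the paper's own route: the paper obtains this corollary by specializing Theorem~\ref{thm:dual} and Corollary~\ref{cor:sumrate-MIMO} to the scalar case and combining them with~\eqref{eq:linopt1} and~\eqref{eq:linopt2}, which is precisely your two-step argument for both the region and the sum-capacity chains.
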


 \begin{figure}[ht]
 \centerline{
  \includegraphics[width=8cm]{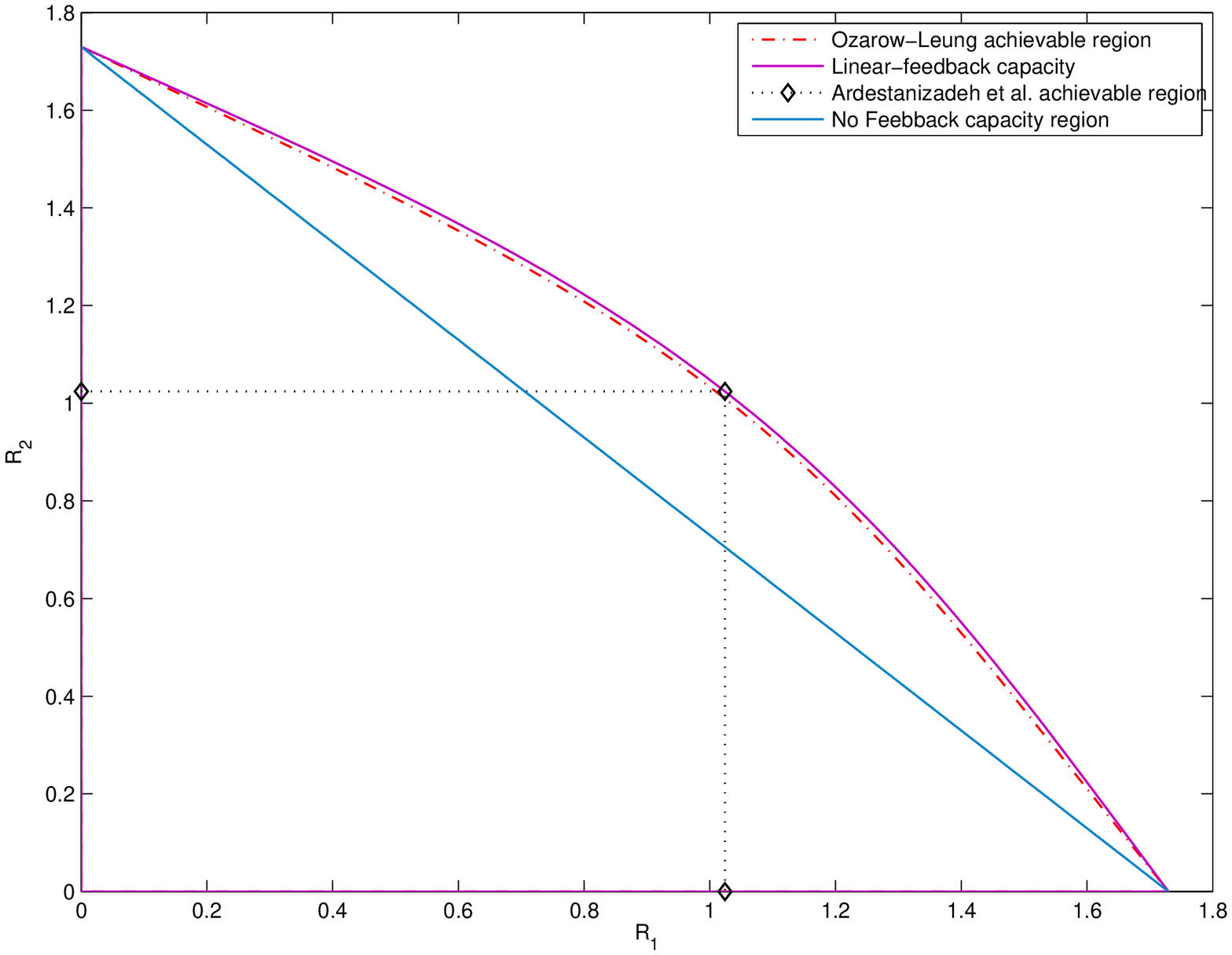}}
  \vspace*{-0.5cm}
  \caption{Achievable regions for the symmetric SISO Gaussian BC  with perfect feedback, with channel coefficients $h_1=h_2=1$ and power constraint $P=10$.\label{fig:sym_bc}}
 \end{figure}

 \begin{figure}[ht]
 \centerline{
  \includegraphics[width=8cm]{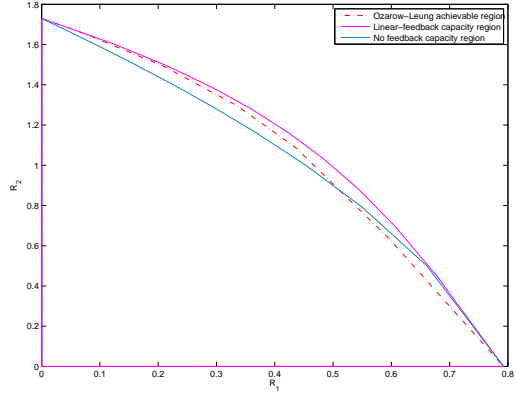}}
  \vspace*{-0.5cm}
  \caption{Achievable regions for the non-symmetric SISO Gaussian BC with perfect feedback, with channel coefficients $h_1=\frac{1}{\sqrt{5}},h_2=1$ and power constraint $P=10$\label{fig:non_sym_bc}.}
 \end{figure}
 Figures~\ref{fig:sym_bc} and~\ref{fig:non_sym_bc} compare the linear-feedback capacity region for the SISO Gaussian BC to the nofeedback capacity region \cite{COVER72,BERGMANS} and to Ozarow \& Leung's achievable region~\cite{OZAROW-LEUNG}.

Using also~\eqref{eq:MACsym}, in the symmetric case we obtain:
\begin{corollary}
\label{cor:sum-rate_BC}
If $h_1=h_2=h$, then 
\begin{IEEEeqnarray}{rCl}
C_{\BC,\SISO,\Sigma}^{\linfb}(h,h; P)
&=& \frac{1}{2} \log \left( 1+ h^2 P +h^2P\cdot\rho^\star(h,h;{P}/{2},{P}/{2})\right),\IEEEeqnarraynumspace\label{eq:maxsymsumrate}
\end{IEEEeqnarray}
where recall that $\rho^\star\left(h_1,h_2;P_1,P_2\right)$ is defined as the solution to the quartic equation in~\eqref{eq:rhostar}.
\end{corollary}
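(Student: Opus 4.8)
The plan is to obtain this closed-form symmetric sum-rate by chaining two facts that are already available in the excerpt: the SISO sum-rate duality and the explicit symmetric MAC feedback sum-capacity. At the level of the corollary no new coding-theoretic argument is required; it is a pure specialization.

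First I would specialize the sum-rate duality~\eqref{eq:SISOSum} of Corollary~\ref{thm:dual_scalar} to $h_1=h_2=h$, which yields
\begin{equation}
C_{\BC,\SISO,\Sigma}^{\linfb}(h,h;P) = C_{\MAC,\SISO,\Sigma}^{\fb}(h,h;P).
\end{equation}
This transfers the problem to the MAC side, where the symmetric feedback sum-capacity is known in closed form. I would then substitute the symmetric MAC formula~\eqref{eq:MACsym},
\begin{equation}
C_{\MAC,\SISO,\Sigma}^{\fb}(h,h;P) = \frac{1}{2}\log\left(1 + h^2 P\bigl(1 + \rho^\star(h,h;P/2,P/2)\bigr)\right),
\end{equation}
and finally expand the product $h^2 P(1+\rho^\star) = h^2 P + h^2 P\,\rho^\star$ inside the logarithm to recover exactly the claimed expression~\eqref{eq:maxsymsumrate}.

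There is essentially no obstacle at the level of this corollary itself: once~\eqref{eq:SISOSum} and~\eqref{eq:MACsym} are granted, it reduces to the one-line substitution above. The substantive content instead lives in the two borrowed ingredients. The harder of the two is~\eqref{eq:MACsym}, whose derivation (deferred to Appendix~\ref{app:cor}) requires showing that the supremum over power splits in~\eqref{sumcapMAC} is attained at the balanced split $P_1=P_2=P/2$. I expect this to follow from a symmetry argument: under the exchange $P_1\leftrightarrow P_2$ the defining quartic~\eqref{eq:rhostar} is invariant in the symmetric channel, so $\rho^\star(h,h;P_1,P_2)=\rho^\star(h,h;P_2,P_1)$, and combined with concavity of the sum-rate objective this forces the maximizer to be the equal split. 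Granting that optimization, the corollary is immediate.
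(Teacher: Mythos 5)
Your proof of the corollary itself is correct and is exactly the paper's route: Corollary~\ref{thm:dual_scalar} (equation~\eqref{eq:SISOSum}) reduces the BC quantity to $C_{\MAC,\SISO,\Sigma}^{\fb}(h,h;P)$, and substituting~\eqref{eq:MACsym} and expanding $h^2P(1+\rho^\star)$ gives~\eqref{eq:maxsymsumrate}; the paper offers nothing more than this chaining.

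The one place where you diverge is your sketch of the borrowed ingredient~\eqref{eq:MACsym}. You propose to show that the balanced split $P_1=P_2=P/2$ maximizes~\eqref{sumcapMAC} via symmetry of $\rho^\star$ plus \emph{concavity} of the objective in $\alpha$. Symmetry is indeed immediate from the invariance of the quartic~\eqref{eq:rhostar} under $P_1\leftrightarrow P_2$, but concavity of $\alpha\mapsto\sqrt{\alpha(1-\alpha)}\,\rho^\star(h,h;\alpha P,(1-\alpha)P)$ is not established anywhere and would be delicate, since $\rho^\star$ is only defined implicitly. The paper's Appendix~\ref{app:cor} avoids this entirely with an operational argument: if some $\alpha^\star\neq 1/2$ gave a strictly larger value, time-sharing between the splits $(\alpha^\star P,(1-\alpha^\star)P)$ and $((1-\alpha^\star)P,\alpha^\star P)$ would respect individual power constraints $P/2$ at each transmitter yet strictly exceed Ozarow's sum-capacity under those constraints --- a contradiction. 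If you intend to actually prove~\eqref{eq:MACsym} rather than cite it, you should replace the concavity step with this time-sharing argument (or supply a proof of concavity, which the paper does not).
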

The achievability of the sum-rate in~\eqref{eq:maxsymsumrate} was already established by the control-theory-inspired scheme in \cite{AMF12}. Our result shows thus that for the symmetric scalar Gaussian BC the scheme in  \cite{Elia04}, \cite{Wu05}, \cite{AMF12} is indeed sum-rate optimal among all linear-feedback coding schemes. 

In the SIMO and the MISO case, Theorem~\ref{thm:dual} combined with~\eqref{eq:linoptMISO} and \eqref{eq:linoptSIMO} specialize to: 
 \begin{corollary} Consider the SIMO and MISO cases where the channel matrices reduce to vectors. Let  $\vect{h}_1$ and $\vect{h}_2$ be $\kappa$-dimensional row-vectors. Then,
   \begin{IEEEeqnarray}{rCl}  
    \set{C}_{\BC,\MISO}^{\linfb}({\vect{h}_1},{\vect{h}_2};P)&=&\set{C}_{\MAC,\SIMO}^{\fb}(\trans{\vect{h}}_1,\trans{\vect{h}}_2;P)
\end{IEEEeqnarray} 
Let now $\vect{h}_1$ and $\vect{h}_2$ be $\nu_1$ and $\nu_2$-dimensional column-vectors. Then, 
\begin{IEEEeqnarray}{rCl}
 \set{C}_{\BC,\SIMO}^{\linfb}(\vect{h}_1,\vect{h}_2;P)&=&\set{C}_{\MAC,\MISO}^{\fb}(\trans{\vect{h}_1},\trans{\vect{h}_2};P)\\
  &=& \set{C}_{\MAC,\SISO}^{\fb}(\|\vect{h}_1\|,\|\vect{h}_2\|;P).
  \end{IEEEeqnarray} 
See~\eqref{eq:mac_capacity},  \eqref{eq:linoptMISO},  and \eqref{eq:exp} for computable single-letter characterizations of $\set{C}_{\MAC,\SISO}^{\fb}$, $\set{C}_{\MAC,\SIMO}^{\fb}$, and $\set{C}_{\MAC,\MISO}^{\fb}$. 
  \end{corollary}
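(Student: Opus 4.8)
The plan is to obtain both parts of the corollary as direct specializations of the general duality in Theorem~\ref{thm:dual}, once the antenna configurations on the two sides of the duality are matched correctly under transposition. No fresh computation is needed; the entire content is the bookkeeping of dimensions together with the linear-feedback optimality facts of Jafar \emph{et al.} recorded in~\eqref{eq:linoptMISO} and~\eqref{eq:linoptSIMO}.

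For the MISO BC I would first note that a single receive antenna at each receiver means $\nu_1=\nu_2=1$, so each $\mat{H}_i$ is a $1\times\kappa$ matrix, i.e.\ the row-vector $\vect{h}_i$. Applying Theorem~\ref{thm:dual} to these matrices gives
\begin{equation}
\set{C}_{\BC,\MISO}^{\linfb}(\vect{h}_1,\vect{h}_2;P)=\set{C}_{\MAC}^{\linfb}(\trans{\vect{h}_1},\trans{\vect{h}_2};P).
\end{equation}
The crucial observation is that each $\trans{\vect{h}_i}$ is a $\kappa\times 1$ matrix, so the dual MAC has $\nu_i=1$ transmit antenna at each transmitter and $\kappa$ receive antennas---this is exactly the SIMO MAC. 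I would then invoke~\eqref{eq:linoptSIMO}, which states that for the SIMO MAC the linear-feedback capacity region coincides with the full feedback capacity region, to replace $\set{C}_{\MAC,\SIMO}^{\linfb}$ by $\set{C}_{\MAC,\SIMO}^{\fb}$ and conclude the first identity.

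For the SIMO BC the roles reverse: a single transmit antenna means $\kappa=1$, so each $\mat{H}_i$ is a $\nu_i\times 1$ column-vector $\vect{h}_i$. Theorem~\ref{thm:dual} now yields
\begin{equation}
\set{C}_{\BC,\SIMO}^{\linfb}(\vect{h}_1,\vect{h}_2;P)=\set{C}_{\MAC}^{\linfb}(\trans{\vect{h}_1},\trans{\vect{h}_2};P),
\end{equation}
and since each $\trans{\vect{h}_i}$ is a $1\times\nu_i$ matrix, the dual MAC has $\kappa=1$ receive antenna and $\nu_i$ transmit antennas, i.e.\ the MISO MAC. Applying the chain of equalities in~\eqref{eq:linoptMISO} then replaces $\set{C}_{\MAC,\MISO}^{\linfb}$ first by $\set{C}_{\MAC,\MISO}^{\fb}$ and in turn by the scalar region $\set{C}_{\MAC,\SISO}^{\fb}(\|\vect{h}_1\|,\|\vect{h}_2\|;P)$, which is the second chain of identities.

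The only point that requires genuine care---and the main, albeit minor, obstacle---is the bookkeeping in the transpose step: one must verify that the counts of transmit and receive antennas swap correctly, so that a MISO broadcast link maps precisely to a SIMO multiple-access link and, symmetrically, a SIMO broadcast link maps to a MISO multiple-access link. Once this dimensional dictionary is fixed, the statement follows with no further work as a pure corollary of Theorem~\ref{thm:dual} and the optimality of linear feedback for the SIMO and MISO MAC.
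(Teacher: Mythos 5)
Your proposal is correct and matches the paper's own (implicit) argument: the paper states this corollary as a direct specialization of Theorem~\ref{thm:dual} combined with~\eqref{eq:linoptMISO} and~\eqref{eq:linoptSIMO}, which is exactly the dimension bookkeeping plus substitution you carry out. The identification of the dual of the MISO BC as the SIMO MAC (and of the SIMO BC as the MISO MAC) is done correctly, so nothing is missing.
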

%

\subsection{Linear-Feedback Capacity-Achieving Schemes for MAC and BC}\label{sec:dualcoding}
We first describe a class of linear-feedback coding schemes for the BC and the MAC that can achieve  the linear-feedback capacity regions $\set{C}_{\BC}^{\linfb}$ and $\set{C}_{\MAC}^{\linfb}$. 
This allows us to find multi-letter expressions for these capacity regions. 
We then identify pairs of linear-feedback schemes for the BC and the MAC that are dual in the sense that they achieve the same rate-regions. 

The idea of our schemes is to divide the blocklength $n$ into   subblocks of equal length $\eta$ ($\eta$ is a design parameter of our schemes) and to apply an inner code that uses the feedback to transform each subblock of $\eta$ channel uses of the original MIMO BC or MAC into a single channel use of a new MIMO BC or MAC with more transmit and receive antennas. An outer code 
is then applied to communicate 
over the new MIMO BC or MAC without using the feedback. 

We now explain this class of schemes in more detail.

\subsubsection{A class of linear-feedback schemes for the BC}\label{sec:schemeBC}
Fix the blocklength $n$.  The schemes in our class are characterized by the following parameters: 
\begin{itemize}
\item a positive integer $\eta$; 
\item $\kappa$-by-$\nu_1$ matrices $\{\mat{A}_{1,\tau, \ell} \}$, for $\ell=2,\ldots, \eta$ and $\tau=1,\ldots, \ell-1$;   
\item    $\kappa$-by-$\nu_2$ matrices $\{\mat{A}_{2, \tau,\ell} \}$, for $\ell=2,\ldots, \eta$ and $\tau=1,\ldots, \ell-1$;
\item an  encoding mapping $f^{(n')} \colon \set{M}_1 \times \set{M}_2 \to \set{R}^{(\kappa \eta) n'}$ that produces $n'\triangleq \lfloor \frac{n}{\eta}\rfloor$ codevectors (column-vectors) of size $\kappa\eta$
 and 
\item two decoding mappings $g_1^{(n')}\colon \Reals^{(\nu_1 \eta) n'} \to \set{M}_1$ and $g_2^{(n')}\colon \Reals^{(\nu_2 \eta) n'} \to \set{M}_2$ that each decode a block of $n'$ output vectors (column-vectors) of size $\nu_1 \eta$ and $\nu_2 \eta$.
\end{itemize}
As already mentioned, the parameter $\eta$ characterizes the length of the subblocks in our scheme. That means, in our scheme the total blocklength $n$ is divided into  $n'$ subblocks of equal length $\eta$.\footnote{For general blocklength $n$ there will  be a few spare channel uses at the end of each block which we do ignore in our schemes. Since throughout we are interested in the performance limits as $n\to\infty$,  this technicality does not influence our results and will therefore be ignored in the sequel.} The matrices $\{\mat{A}_{1,\tau, \ell} \}$ and  $\{\mat{A}_{2, \tau,\ell} \}$ describe the inner code that is used  within each of the $n'$ subblocks of length $\eta$. Finally, the parameters $f^{(n')}, g_1^{(n')}, g_2^{(n')}$ describe the outer code that is applied to code over the $n'$ subblocks without using the feedback. 

Before describing how the inner code works and how we should choose the encoding and decoding functions of the outer code, we need some definitions. 
Let 
\begin{IEEEeqnarray}{rCl}
\vect{X}&\triangleq& \trans{\begin{pmatrix}
                         \trans{\vect{X}_1},&\dots,&\trans{\vect{X}_\eta}
                        \end{pmatrix}},
\end{IEEEeqnarray}
denote the $\eta\kappa$-dimensional column-vector that is obtained by stacking the first $\eta$  channel input vectors $\vect{X}_1,\ldots, \vect{X}_\eta$ (which are all $\kappa$-dimensional column-vectors) on top of each other. Similarly, for $i\in\{1,2\}$, let 
\begin{IEEEeqnarray}{rCl}      
 \vect{Z}_i&\triangleq& \trans{\begin{pmatrix}
                         \trans{\vect{Z}_{i,1}},&\dots,&\trans{\vect{Z}_{i,\eta}}
                        \end{pmatrix}},       \\  
 \vect{Y}_i&\triangleq& \trans{\begin{pmatrix}
                         \trans{\vect{Y}_{1}},&\dots,&\trans{\vect{Y}_{\eta}}
                       \end{pmatrix}},
\end{IEEEeqnarray}
denote the $\eta \nu_i$ dimensional column-vectors that are obtained by stacking the first $\eta$ noise vectors $\vect{Z}_{1,i},\ldots, \vect{Z}_{i,\eta}$ or channel output vectors $\vect{Y}_{1,i},\ldots, \vect{Y}_{i,\eta}$ on top of each other. Define for $i\in\{1,2\}$, the channel matrices of the $\eta$-length subblocks:
\begin{equation}\label{eq:blockmat}
\BH_i\triangleq \mat{I}_\eta \otimes \mat{H}_i.
\end{equation}
The input-output relation for the first block of $\eta$ channel uses is then summarized as
\begin{equation}\label{eq:input_output}
\vect{Y}_i = \BH_i \vect{X} + \vect{Z}_i, \qquad i\in\{1,2\}.
\end{equation}

Let $\vect{U}$ denote the $\eta \kappa$-dimensional vector produced by outer encoder $f^{(n')}$ for this first block, and define, for $i\in\{1,2\}$, the $\eta\kappa$-by-$\eta\nu_i$ strictly-lower block-triangular matrix
\begin{IEEEeqnarray}{rCl}\label{eq:AB}
\mat{A}_i^{\B}&=&\begin{bmatrix}   
\mat{0}&& \dots &&\mat{0} \\ 
\mat{A}_{i,1,2} &\mat{0}  \\
\mat{A}_{i,1,3}&\mat{A}_{i,2,3} &\mat{0} \\ 
\vdots&&&\ddots\\
\mat{A}_{i,1,\eta}&\mat{A}_{i,2,\eta}&\dots&\mat{A}_{i,(\eta-1),\eta}&\mat{0}
\end{bmatrix},
\end{IEEEeqnarray}
where here $\mat{0}$ denotes the $\kappa$-by-$\nu_i$ matrix with all zero entries.

We now describe how the inner code---specified by the matrices $\{\mat{A}_{1,\tau, \ell} \}$ and $\{\mat{A}_{2,\tau, \ell} \}$--- transforms the first block of $\eta$ channel uses of our original MIMO Gaussian BC into a single channel use of the new MIMO BC. All the other blocks are transformed in a similar way. 
In our scheme, we choose the encoder to produce the following $\eta$ channel inputs in the first block: 
\begin{IEEEeqnarray}{rCl}
\label{eq:bc_inputs}
\vect{X}& = &\left( \mat{I}- \mat{A}_1^\B\BHone-  \mat{A}_2^{\B}\BHtwo\right) \vect{U} +\mat{A}_1^{\B} \vect{Y}_1+ \mat{A}_2^{\B} \vect{Y}_2.
\end{IEEEeqnarray}
(The reason for precoding the codeword vector $\vect{U}$ by the matrix $\left( \mat{I}- \mat{A}_1^\B\BHone-  \mat{A}_2^{\B}\BHtwo\right)$ will become clearer shortly, see~\eqref{eq:echteinputs}.)
By~\eqref{eq:input_output}, the inputs  can also be written as 
\begin{IEEEeqnarray}{rCl}\label{eq:bc_inputs77}
\vect{X}&= & \left( \mat{I}- \mat{A}_1^\B\BHone-  \mat{A}_2^{\B}\BHtwo\right) \vect{U} +\mat{A}_1^{\B}(\BHone \vect{X}+ \vect{Z}_1)+ \mat{A}_2^{\B} (\BHone\vect{X}+\vect{Z}_2)
\end{IEEEeqnarray}
and thus, 
\begin{IEEEeqnarray}{rCl}\label{eq:bc_inputs771}
 \left( \mat{I}- \mat{A}_1^\B\BHone-  \mat{A}_2^{\B}\BHtwo\right) \vect{X} &= & \left( \mat{I}- \mat{A}_1^\B\BHone-  \mat{A}_2^{\B}\BHtwo\right) \vect{U} +\mat{A}_1^{\B} \vect{Z}_1+ \mat{A}_2^{\B} \vect{Z}_2.
\end{IEEEeqnarray}
Multiplying both sides of~\eqref{eq:bc_inputs771} from the left by the invertible matrix $ \left( \mat{I}- \mat{A}_1^\B\BHone-  \mat{A}_2^{\B}\BHtwo\right)^{-1}$ results in:
\begin{IEEEeqnarray}{rCl}\label{eq:echteinputs}
\vect{X}&= & \vect{U} + \mat{B}_1^{\B} \vect{Z}_1 + \mat{B}_2^{\B} \vect{Z}_2,
\end{IEEEeqnarray}
where we defined 
\begin{equation}\label{eq:B}
\mat{B}_i^{\B}\triangleq\left( \mat{I}- \mat{A}_1^\B\BHone-  \mat{A}_2^{\B}\BHtwo\right)^{-1} \mat{A}_i^{\B}, \quad i\in \{1,2\}.
\end{equation}
By~\eqref{eq:input_output} the corresponding outputs can be written as
\begin{subequations}\label{eq:outputsBC}
\begin{IEEEeqnarray}{rCl}
\label{eq:output1}
\vect{Y}_1& = \BHone \vect{U} + (\mat{I}+\BHone\mat{B}_1^{\B}) \vect{Z}_1 + \BHone \mat{B}_2^{\B} \vect{Z}_2,\\
\vect{Y}_2& = \BHtwo \vect{U} + (\mat{I}+\BHtwo\mat{B}_2^{\B} )\vect{Z}_2 + \BHtwo \mat{B}_1^{\B} \vect{Z}_1.\label{eq:output2}
\end{IEEEeqnarray}
\end{subequations}

Inspecting~\eqref{eq:echteinputs}, we see that the  channel inputs $\{\vect{X}_t\}_{t=1}^n$ to our original MIMO BC satisfy the average block-power constraint~\eqref{eq:powerconstraintBC} if 
\begin{equation}
\textnormal{tr}\left(\mat{ B}_1^{\B} \trans{ \mat{(\mat{ B}_1^{\B})}}\right)+ \textnormal{tr}\left(\mat{B}_2^{\B} \trans{\mat{(\mat{ B}_2^{\B})}}\right) \leq \eta P
\end{equation} and if the $n'$ codevectors 
produced by the outer encoder $f^{(n')}$ are average block-power constrained to  power
\begin{equation}\label{eq:bc_mimo_pc}
\eta P- \textnormal{tr}\left(\mat{ B}_1^{\B} \trans{ \mat{(\mat{ B}_1^{\B})}}\right)- \textnormal{tr}\left(\mat{B}_2^{\B} \trans{\mat{(\mat{ B}_2^{\B})}}\right).
\end{equation}

\begin{definition}\label{def:RBC}Let $\set{R}_{\BC}\left(\eta, \mat{B}_1^{\B}, \mat{B}_2^{\B},\BHone, \BHtwo;P\right)$ denote the capacity region of the MIMO Gaussian BC 
in~\eqref{eq:outputsBC} \emph{without feedback} when the vector-input $\vect{U}$ is average block-power constrained to~\eqref{eq:bc_mimo_pc}.
\end{definition}

The outer code $\{f^{(n')},g_1^{(n')}, g_2^{(n')}\}$ is designed to achieve the nofeedback capacity of the new MIMO Gaussian BC  in~\eqref{eq:outputsBC}  under average input-power constraint $\eta P- \textnormal{tr}(\mat{ B}_1^{\B} \trans{ \mat{(\mat{ B}_1^{\B})}})- \textnormal{tr}(\mat{B}_2^{\B} \trans{\mat{(\mat{ B}_2^{\B})}})$.

Combining all this, we conclude that over the original MIMO Gaussian BC with feedback our overall scheme (consisting of inner and outer code) achieves the rate region $\set{R}_{\BC}\left(\eta, \mat{B}_1^{\B}, \mat{B}_2^{\B},\BHone, \BHtwo;P\right)$ scaled by a factor $\frac{1}{\eta}$. 
In view of the following Note~\ref{lm1}, it thus follows that our schemes achieve the rate region in~\eqref{eq:reg_bc} ahead.

\begin{note}\label{lm1} Let $\set{T}\triangleq\set{T}_1\times \set{T}_2$ 
where $\set{T}_i$, for $i\in\{1,2\}$, denotes the set of strictly-lower block-triangular matrices with block matrices of size $\kappa\times\nu_i$. The mapping described by~\eqref{eq:B} has the form
\begin{IEEEeqnarray}{rCCC}
\omega \colon &\set{T} &\to &\set{T} \nonumber \\
&(\mat{A}_1^\B, \mat{A}_2^\B)& \mapsto &(\mat{B}_1^{\B}, \mat{B}_2^{\B}), 
\end{IEEEeqnarray}
and is bijective.
\end{note}

\begin{proof}
See Appendix~\ref{sec:pflm1}.
\end{proof}

\begin{proposition}
\label{prop:bc}
The linear-feedback capacity region of the MIMO Gaussian BC with channel matrices $\mat{H}_1$ and $\mat{H}_2$ under a sum-power constraint $P$ is:
\begin{IEEEeqnarray}{rCl}
\set{C}_{\BC}^{\linfb}\left(\mat{H}_1,\mat{H}_2;P\right)&=&\textnormal{cl}\left(\bigcup_{\eta,\mat{B}_1^{\B},\mat{B}_2^{\B}}\frac{1}{\eta}\set{R}_{\BC}\left(\eta,\mat{ B}_1^{\B},\mat{B}_2^{\B},\mat{H}_1^\B,\mat{H}_2^\B;P\right)\right)\label{eq:reg_bc}
\end{IEEEeqnarray}
where the union is over all positive integers $\eta$ and all strictly-lower block-triangular $(\eta \kappa)$-by-$(\eta \nu_1)$ and $(\eta \kappa)$-by-$(\eta \nu_2)$  matrices $\mat{B}_1^{\B}$ and $\mat{B}_2^{\B}$ with blocks of sizes $\kappa\times\nu_1$ and $\kappa\times \nu_2$ that satisfy 
\begin{equation}
\textnormal{tr}\left(\mat{B}_1^{\B} \trans{\mat{(\mat{B}_1^{\B})}}\right)+\textnormal{tr}\left(\mat{B}_2^{\B} \trans{\mat{(\mat{B}_2^{\B})}}\right)\leq \eta P.
\end{equation}
\end{proposition}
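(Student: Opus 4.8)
The plan is to prove the two set-inclusions separately. The inclusion ``$\supseteq$'' (achievability) is essentially assembled from the construction preceding the statement, whereas the inclusion ``$\subseteq$'' (converse) is the part that needs a genuine argument.

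For achievability, fix a positive integer $\eta$ and a pair of strictly-lower block-triangular matrices $(\mat{B}_1^{\B},\mat{B}_2^{\B})$ meeting the trace constraint. By Note~\ref{lm1} the map $\omega$ in \eqref{eq:B} is a bijection on $\set{T}$, so there is a unique pair $(\mat{A}_1^{\B},\mat{A}_2^{\B})$ realizing these $\mat{B}$-matrices; in particular $(\mat{I}-\mat{A}_1^{\B}\BHone-\mat{A}_2^{\B}\BHtwo)$ is invertible and the precoding in \eqref{eq:bc_inputs} is well defined. I would then run the inner code \eqref{eq:bc_inputs} on each of the $n'=\lfloor n/\eta\rfloor$ consecutive subblocks of length $\eta$, which by \eqref{eq:echteinputs}--\eqref{eq:outputsBC} turns each subblock into one memoryless use of the transformed \emph{feedback-free} MIMO Gaussian BC of Definition~\ref{def:RBC}, the information vector $\vect{U}$ being power-limited to \eqref{eq:bc_mimo_pc}. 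Applying over the $n'$ subblocks an outer code $\{f^{(n')},g_1^{(n')},g_2^{(n')}\}$ that achieves the nofeedback capacity $\set{R}_{\BC}(\eta,\mat{B}_1^{\B},\mat{B}_2^{\B},\BHone,\BHtwo;P)$ of this transformed BC (such a code exists by the MIMO BC capacity result \cite{WSS06}), and letting $n'\to\infty$, the overall scheme achieves every rate pair in $\frac1\eta\set{R}_{\BC}(\cdots)$ over the original BC. Taking the union over all $(\eta,\mat{B}_1^{\B},\mat{B}_2^{\B})$ and the closure establishes ``$\supseteq$''.

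For the converse I would start from an arbitrary linear-feedback code of blocklength $n$ of the form \eqref{eq:lin_com_2_bc} and simply set $\eta=n$ (so $n'=1$). Stacking the $n$ inputs, outputs, and noises as in the construction, the causality of \eqref{eq:lin_com_2_bc} makes the coefficient matrices $\mat{B}_1^{\B},\mat{B}_2^{\B}$ strictly-lower block-triangular, the input takes exactly the form \eqref{eq:echteinputs} with $\vect{U}=\vect{\tilde W}$ a function of $(M_1,M_2)$ only, and the block-power constraint \eqref{eq:powerconstraintBC} forces $\vect{U}$ to obey \eqref{eq:bc_mimo_pc} because $\vect{U}$ is independent of the white noises. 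Hence \emph{every} linear-feedback code is, with $\eta=n$, a single-shot feedback-free code for the transformed BC \eqref{eq:outputsBC}. A Fano/weak-converse argument for this transformed Gaussian MIMO BC---using its known nofeedback capacity region $\set{R}_{\BC}(n,\mat{B}_1^{\B},\mat{B}_2^{\B},\BHone,\BHtwo;P)$ \cite{WSS06}---then bounds the transmitted bit counts $(nR_1,nR_2)$ by that region up to the usual additive ($O(1)$) and multiplicative ($(1-P_{\e,\BC}^{(n)})^{-1}$) Fano penalties. Dividing by $n$ places $(R_1,R_2)$ within a vanishing gap of $\frac1n\set{R}_{\BC}(n,\cdots;P)$, and letting $n\to\infty$ and invoking the closure yields ``$\subseteq$''.

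The main obstacle is the converse, and specifically the legitimacy of applying the converse of the transformed channel to a \emph{single} use of it ($n'=1$). This is sound here only because that single use corresponds to $\eta=n$ uses of the original channel: the $O(1)$ Fano constant is amortized as $O(1/n)$ per original channel use and disappears in the limit, while the normalized mutual-information bounds stay inside $\frac1n\set{R}_{\BC}(n,\cdots)$. Making this precise for a \emph{non-degraded} MIMO BC is the delicate point, since one must express the Fano bounds through the auxiliary/dirty-paper structure underlying the capacity characterization of \cite{WSS06} rather than through naive $I(M_i;\vect{Y}_i)$ terms; the bijectivity and invertibility supplied by Note~\ref{lm1} are what guarantee that ranging over $(\mat{B}_1^{\B},\mat{B}_2^{\B})$ captures every linear-feedback code and that no generality is lost in the precoding step.
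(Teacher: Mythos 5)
Your achievability argument is the paper's: Note~\ref{lm1} gives the bijection $(\mat{A}_1^{\B},\mat{A}_2^{\B})\leftrightarrow(\mat{B}_1^{\B},\mat{B}_2^{\B})$, the inner code \eqref{eq:bc_inputs} converts each length-$\eta$ subblock into one use of the feedback-free BC \eqref{eq:outputsBC} with input power \eqref{eq:bc_mimo_pc}, and a capacity-achieving outer code over the $n'$ subblocks yields $\frac1\eta\set{R}_{\BC}(\cdots)$. The skeleton of your converse also matches the paper: set $\eta=n$, stack the code's feedback coefficients into strictly-lower block-triangular $\mat{B}_i^{\B}$, observe that the outputs take the form \eqref{eq:BC_outputs} with $\vect{U}=\vect{\tilde W}^{(n)}$ a function of the messages only, and apply Fano to get $nR_i\leq I(M_i;\vect{Y}_i^{(n)})+\epsilon_n$.

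The gap is in the step you yourself flag as delicate: placing the point $\bigl(I(M_1;\vect{Y}_1^{(n)}),\,I(M_2;\vect{Y}_2^{(n)})\bigr)$ inside $\set{R}_{\BC}(n,\cdots)$. You propose to do this by invoking the \emph{converse} of the transformed MIMO BC from \cite{WSS06} on a single use of that channel and massaging the Fano bounds into its dirty-paper form. This does not close as stated: a weak converse constrains sequences of codes with vanishing error over a growing number of i.i.d.\ uses of the transformed channel, whereas here there is exactly one use ($n'=1$) with error probability $P_{\e,\BC}^{(n)}>0$, so there is no sequence of codes for the transformed channel to which the converse of \cite{WSS06} applies; and no argument is given for why the naive mutual informations $I(M_i;\vect{Y}_i^{(n)})$ are dominated by a point of the DPC region. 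The paper closes this step from the \emph{achievability} side instead: since $\set{R}_{\BC}$ is by definition the capacity region of the transformed channel, it suffices to show that any pair $\bigl(I(\Theta_1;\vect{Y}_1^{(\eta)}),I(\Theta_2;\vect{Y}_2^{(\eta)})\bigr)$ with $\Theta_1\perp\Theta_2$ and $\vect{\tilde W}^{(\eta)}$ independent of the noises is \emph{achievable} over it; this follows from elementary random coding with joint-typicality decoding (each receiver treating the other message's contribution as noise), and specializing $\Theta_i=M_i$ finishes the proof. No appeal to the structure of the \cite{WSS06} characterization is needed, and the ``delicate point'' disappears. You should replace your outer-bound step by this inner-bound argument.
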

\begin{proof} 
We already concluded the achievability part (see the paragraphs preceding Note~\ref{lm1}).
The converse is proved in Section~\ref{sec:con_prop_bc}.
\end{proof}

\subsubsection{A class of linear-feedback schemes for the MAC}\label{sec:schemeMAC}
We fix the blocklength $n$. The schemes in our class are parametrized by 
\begin{itemize}
\item a positive integer $\eta$; 
\item $\nu_1$-by-$\kappa$ matrices $\{\mat{C}_{1,\tau,\ell} \}$, for $\ell=2,\ldots, \eta$ and $\tau=1,\ldots, \ell-1$;  
\item    $\nu_2$-by-$\kappa$ matrices $\{\mat{C}_{2,\tau, \ell} \}$, for $\ell=2,\ldots, \eta$ and $\tau=1,\ldots, \ell-1$;
\item two  encoding mappings $f_1^{(n')} \colon \set{M}_1 \to \set{R}^{(\nu_1 \eta) n'}$ and  $f_2^{(n')} \colon \set{M}_2 \to \set{R}^{(\nu_2 \eta) n'}$ that produce $n'\triangleq \lfloor \frac{n}{\eta}\rfloor$ codevectors (column-vectors) of sizes $\nu_1\eta$ and $\nu_2\eta$, respectively; and 
\item a decoding mapping $g^{(n')}\colon \Reals^{(\kappa \eta) n'} \to \set{M}_1\times \set{M}_2$ that decodes a block of $n'$ output vectors (column-vectors) of length $\kappa \eta$. 
\end{itemize}
 Similar to the BC schemes, the parameter $\eta$ characterizes the length of the subblocks in our scheme. That means, the total blocklength $n$ is again divided into  $n'$ subblocks of equal length $\eta$. The matrices $\{\mat{C}_{1,\tau, \ell} \}$ and  $\{\mat{C}_{2, \tau,\ell} \}$ describe the inner code that is used  within each of the $n'$ subblocks of length $\eta$. Finally, the parameters $f_1^{(n')}, f_2^{(n')}, g^{(n')}$ describe the outer code that is applied to code over the $n'$ subblocks without using the feedback.

Before describing how the inner code works and how to design the outer code, we need to introduce some notation. 
Let, for $i\in\{1,2\}$, 
\begin{IEEEeqnarray}{rCl}
\vect{X}_i&\triangleq & \trans{\begin{pmatrix}                        \trans{\vect{X}_{i,1}},&\dots,&\trans{\vect{X}_{i,\eta}}
\end{pmatrix}},
\end{IEEEeqnarray}
denote the $\eta \nu_i$-dimensional column-vector that is obtained by stacking the first $\eta$  channel input vectors $\vect{X}_{i,1},\ldots, \vect{X}_{i,\eta}$ (which are all $\nu_i$-dimensional column-vectors) on top of each other. Similarly, let 
\begin{IEEEeqnarray}{rCl}
\vect{Y}&\triangleq& \trans{\begin{pmatrix}
                         \trans{\vect{Y}_1},&\dots,&\trans{\vect{Y}_\eta}
                        \end{pmatrix}}\\
\vect{Z}&\triangleq& \trans{\begin{pmatrix}
                         \trans{\vect{Z}_1},&\dots,&\trans{\vect{Z}_\eta}
                        \end{pmatrix}}                     
\end{IEEEeqnarray}
denote the $\eta \kappa$-dimensional column vectors that are obtained by stacking the first $\eta$  noise vectors $\vect{Z}_1,\dots, \vect{Z}_\eta$ and channel output vectors $\vect{Y}_1,\dots, \vect{Y}_\eta$ on top of each other. Using the definition of the block channel matrices in~\eqref{eq:blockmat}, we can summarize the input-output relation for the first block of $\eta$ channel uses as
\begin{equation}\label{eq:echtinputoutputMAC}
\vect{Y}= \tBHone \vect{X}_1 + \tBHtwo \vect{X}_2 + \vect{Z}. 
\end{equation}

Let $\vect{U}_1$ and $\vect{U}_2$ denote the $\eta\nu_1$ and  $\eta\nu_2$-length codevectors (column-vectors) produced by $f_1^{(n')}$ and $f_2^{(n')}$ for this first block, 
and define the strictly-lower block-triangular matrices
\begin{IEEEeqnarray}{rCl}\label{eq:CB}
\mat{C}_i^\B&=&\begin{bmatrix}   
\mat{0}&& \dots &&\mat{0} \\ 
\mat{C}_{i,1,2} &\mat{0}  \\
\mat{C}_{i,1,3}&\mat{C}_{i,2,3} &\mat{0} \\ 
\vdots&&&\ddots\\
\mat{C}_{i,1,\eta}&\mat{C}_{i,2,\eta}&\dots&\mat{C}_{i,(\eta-1),\eta}&\mat{0}
\end{bmatrix},\quad i\in\{1,2\},
\end{IEEEeqnarray}
where here $\mat{0}$ denotes an $\nu_i$-by-$\kappa$ zero matrix.  
Also, let 
\begin{IEEEeqnarray}{rCl}\label{eq:CD}
 \mat{D}_i^{\B}&\triangleq&\mat{C}_i^{\B}\left(\mat{I}-\tBHone \mat{C}^{\B}_1-\tBHtwo \mat{C}_2^{\B}\right)^{-1},~i\in \{1,2\},
\end{IEEEeqnarray} and let $\mat{Q}_1$  be the unique positive square root of the (positive-definite)   $\nu_1\eta$-by-$\nu_1\eta$  matrix
\begin{subequations}
\label{eq:Mi}
\begin{IEEEeqnarray}{rCl}
\mat{M}_1&\triangleq&\trans{(\mat{I}+\mat{D}_1^{\B}\tBHone)}(\mat{I}+ \mat{D}_1^{\B}\tBHone)+\trans{(\mat{D}_2^{\B}\tBHone)}(\mat{D}_2^{\B} \tBHone)
\label{eq:M1}
 \end{IEEEeqnarray}
 and  $\mat{Q}_2$ be the unique positive square root of the (positive-definite)   $\nu_2\eta$-by-$\nu_2\eta$  matrix
 \begin{IEEEeqnarray}{rCl}
 \mat{M}_2&\triangleq&\trans{(\mat{I}+\mat{D}_2^{\B}\tBHtwo)} (\mat{I}+ \mat{D}_2^{\B}\tBHtwo)+\trans{(\mat{D}_1^{\B} \tBHtwo)}(\mat{D}_1^{\B}\tBHtwo).
 \end{IEEEeqnarray}
\end{subequations}

We can now describe how the inner code---specified by the matrices $\{\mat{C}_{1,\tau,\ell} \}$ and $\{\mat{C}_{2,\tau,\ell} \}$---transforms the first block of $\eta$ channel uses into a single channel use of the new MIMO MAC. The transformation of the other blocks is done in a similar way.
Transmitter $i$'s, $i\in\{1,2\}$,  $\eta$ inputs in the first block are 
\begin{IEEEeqnarray}{rCl}
\vect{X}_i&=&\mat{Q}_i^{-1} \vect{U}_{i} +\mat{C}_i^{\B} \vect{Y}\label{eq:input_mac}.
\end{IEEEeqnarray}
Thus, by~\eqref{eq:echtinputoutputMAC}, the corresponding outputs $\vect{Y}$ satisfy
\begin{IEEEeqnarray}{rCl}\label{eq:iiii}
\vect{Y}&=&\tBHone \mat{Q}_1^{-1} \vect{U}_{1}   +\tBHtwo \mat{Q}_2^{-1} \vect{U}_{2} +\left(\tBHone \mat{C}_1^{\B} + \tBHtwo \mat{C}_2^{\B}\right) \vect{Y} + \vect{Z}\
\end{IEEEeqnarray}
Subtracting $(\tBHone \mat{C}_1^{\B} + \tBHtwo \mat{C}_2^{\B}) \vect{Y}$ from both sides of~\eqref{eq:iiii} and then multiplying both sides from the left by the matrix $( \mat{I} - \tBHone \mat{C}_1^{\B} - \tBHtwo \mat{C}_2^{\B})^{-1}$, we  obtain
\begin{IEEEeqnarray}{rCl}\label{eq:blockoutputs}
\vect{Y}&=&( \mat{I} - \tBHone \mat{C}_1^{\B} - \tBHtwo \mat{C}_2^{\B})^{-1} \cdot \left( \tBHone \mat{Q}_1^{-1} \vect{U}_{1}   +\tBHtwo \mat{Q}_2^{-1} \vect{U}_{2} + \vect{Z}\right). \label{eq:y_mac1}\end{IEEEeqnarray}
In view of the definition in \eqref{eq:CD}, the inputs  in~\eqref{eq:input_mac}  satisfy
\begin{IEEEeqnarray}{rCl}
\vect{X}_i
&=&\mat{Q}_i^{-1} \vect{U}_i+\mat{D}_i^{\B}  \bigr(\tBHone \mat{Q}_1^{-1} \vect{U}_1 + \tBHtwo \mat{Q}_2^{-1}\vect{U}_2+ \vect{Z}\bigr). \label{eq:blockinputs}\IEEEeqnarraynumspace
\end{IEEEeqnarray}

\begin{lemma}\label{lem1}
In our scheme, the channel inputs $\{\vect{X}_{1,t}\}_{t=1}^n$ and $\{\vect{X}_{2,t}\}_{t=1}^n$ to the original MIMO Gaussian MAC satisfy the total average block-power constraint~\eqref{eq:powerconstraintMAC} whenever  
\begin{equation}
\textnormal{tr}\left(\mat{D}_1^{\B} \trans{\mat{(\mat{D}_1^{\B})}}\right)+ \textnormal{tr}\left(\mat{D}_2^{\B} \trans{\mat{(\mat{D}_2^{\B})}}\right) \leq \eta P
\end{equation} 
and the codevectors produced by $f_{1}^{(n')}$ and $f_{2}^{(n')}$ are total average block-power constrained to power
\begin{equation}\label{eq:new_mac_pc}
\eta P- \textnormal{tr}\left(\mat{D}_1^{\B} \trans{\mat{(\mat{D}_1^{\B})}}\right)- \textnormal{tr}\left(\mat{D}_2^{\B} \trans{\mat{(\mat{D}_2^{\B})}}\right).
\end{equation}
\end{lemma}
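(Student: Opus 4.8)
The plan is to compute the per-sub-block expected total transmit power directly from the input expression \eqref{eq:blockinputs}, show that it equals the outer-code power plus the two noise-filtering terms $\tr{\mat{D}_1^{\B}\trans{(\mat{D}_1^{\B})}}$ and $\tr{\mat{D}_2^{\B}\trans{(\mat{D}_2^{\B})}}$, and then sum this identity over the $n'$ sub-blocks and invoke the outer-code constraint \eqref{eq:new_mac_pc} to obtain \eqref{eq:powerconstraintMAC}.

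First I would rewrite \eqref{eq:blockinputs} by collecting the coefficients of $\vect{U}_1$, $\vect{U}_2$, and $\vect{Z}$:
\begin{IEEEeqnarray}{rCl}
\vect{X}_1 &=& (\mat{I}+\mat{D}_1^{\B}\tBHone)\mat{Q}_1^{-1}\vect{U}_1 + \mat{D}_1^{\B}\tBHtwo\mat{Q}_2^{-1}\vect{U}_2 + \mat{D}_1^{\B}\vect{Z}, \nonumber\\
\vect{X}_2 &=& \mat{D}_2^{\B}\tBHone\mat{Q}_1^{-1}\vect{U}_1 + (\mat{I}+\mat{D}_2^{\B}\tBHtwo)\mat{Q}_2^{-1}\vect{U}_2 + \mat{D}_2^{\B}\vect{Z}. \nonumber
\end{IEEEeqnarray}
Since the outer code does not use feedback, $\vect{U}_1$ and $\vect{U}_2$ are independent of $\vect{Z}$, and they are mutually independent and zero-mean (being functions of the independent messages $M_1,M_2$); hence all cross-terms vanish when forming $\mathbf{E}[\|\vect{X}_1\|^2]+\mathbf{E}[\|\vect{X}_2\|^2]$. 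Writing these second moments as traces and using $\mathbf{E}[\vect{Z}\trans{\vect{Z}}]=\mat{I}$ leaves the two noise terms $\tr{\mat{D}_i^{\B}\trans{(\mat{D}_i^{\B})}}$ together with one quadratic form in $\mat{K}_{U_i}\triangleq\mathbf{E}[\vect{U}_i\trans{\vect{U}_i}]$ for each $i$.

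The crux is to collect, for each $i$, the two contributions of $\vect{U}_i$ arising from $\vect{X}_1$ and $\vect{X}_2$. Using the cyclic property of the trace, the total $\vect{U}_1$-contribution equals $\tr{\mat{Q}_1^{-1}\mat{K}_{U_1}\mat{Q}_1^{-1}\mat{G}_1}$, where $\mat{G}_1=\trans{(\mat{I}+\mat{D}_1^{\B}\tBHone)}(\mat{I}+\mat{D}_1^{\B}\tBHone)+\trans{(\mat{D}_2^{\B}\tBHone)}(\mat{D}_2^{\B}\tBHone)$, and I would observe that $\mat{G}_1$ is exactly the matrix $\mat{M}_1$ of \eqref{eq:M1}. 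Because $\mat{Q}_1$ is the symmetric positive square root of $\mat{M}_1$, i.e.\ $\mat{Q}_1\mat{Q}_1=\mat{M}_1$, this quadratic form collapses (by cyclicity again) to $\tr{\mat{K}_{U_1}}=\mathbf{E}[\|\vect{U}_1\|^2]$; the symmetric computation gives $\mathbf{E}[\|\vect{U}_2\|^2]$ for $\vect{U}_2$. This is precisely why $\mat{Q}_i$ is defined via the square root of $\mat{M}_i$: it renormalizes each codevector so that it contributes exactly its own power. I expect this identification of $\mat{M}_i$ and the square-root cancellation to be the main (and essentially the only) substantive step.

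Combining, the per-sub-block total power is
\begin{IEEEeqnarray}{rCl}
\mathbf{E}[\|\vect{X}_1\|^2]+\mathbf{E}[\|\vect{X}_2\|^2] = \mathbf{E}[\|\vect{U}_1\|^2]+\mathbf{E}[\|\vect{U}_2\|^2] + \tr{\mat{D}_1^{\B}\trans{(\mat{D}_1^{\B})}} + \tr{\mat{D}_2^{\B}\trans{(\mat{D}_2^{\B})}}. \nonumber
\end{IEEEeqnarray}
Finally I would sum this identity over all $n'$ sub-blocks (the matrices $\mat{D}_i^{\B}$ and $\mat{Q}_i$ being identical in each block, so each contributes the same noise terms). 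By the outer-code constraint \eqref{eq:new_mac_pc}, the summed codevector powers are at most $n'\big(\eta P-\tr{\mat{D}_1^{\B}\trans{(\mat{D}_1^{\B})}}-\tr{\mat{D}_2^{\B}\trans{(\mat{D}_2^{\B})}}\big)$, so the grand total is at most $n'\eta P$. Since $n'=\lfloor n/\eta\rfloor$ gives $n'\eta\leq n$, dividing by $n$ yields $\frac1n\sum_{t=1}^n\big(\mathbf{E}[\|\vect{X}_{1,t}\|^2]+\mathbf{E}[\|\vect{X}_{2,t}\|^2]\big)\leq \frac{n'\eta}{n}P\leq P$, which is \eqref{eq:powerconstraintMAC}.
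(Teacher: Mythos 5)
Your proposal is correct and follows essentially the same route as the paper's proof: expand $\E{\|\vect{X}_1\|^2}+\E{\|\vect{X}_2\|^2}$ from \eqref{eq:blockinputs}, use the cyclic property of the trace to recognize the matrix $\mat{M}_i$ of \eqref{eq:Mi}, and let the square-root property $\mat{Q}_i\mat{Q}_i=\mat{M}_i$ collapse each codevector's contribution to $\E{\|\vect{U}_i\|^2}$. Your explicit handling of the vanishing cross-terms and of the summation over the $n'$ sub-blocks only spells out steps the paper leaves implicit.
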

\begin{proof}
See Section~\ref{sec:pflm2}.
\end{proof} 
\begin{definition}\label{def:RMAC}Let $\set{R}_{\MAC}\left(\eta, \mat{D}_1^{\B}, \mat{D}_2^{\B},\tBHone, \tBHtwo;P\right)$ denote the capacity region of the MIMO Gaussian MAC \emph{without feedback} in~\eqref{eq:y_mac1} under average block-power constraint
~\eqref{eq:new_mac_pc} on the input vectors $\vect{U}_1$ and $\vect{U}_2$.
 \end{definition}

The outer code $\{f_1^{(n')}, f_2^{(n')}, g^{(n')}\}$ is designed so that it achieves the nofeedback capacity of the new MIMO Gaussian MAC  in~\eqref{eq:blockinputs} under average input-power constraint $\eta P-  \textnormal{tr}(\mat{D}_1^{\B} \trans{\mat{(\mat{D}_1^{\B})}})- \textnormal{tr}(\mat{D}_2^{\B} \trans{\mat{(\mat{D}_2^{\B})}})$. 

Combining all this, we conclude that over the original MIMO Gaussian MAC our overall scheme (consisting of inner and outer code) achieves  the rate region $\set{R}_{\MAC}\left(\eta, \mat{D}_1^{\B}, \mat{D}_2^{\B},\tBHone, \tBHtwo;P\right)$ scaled by a factor~$\frac{1}{\eta}$.  In view of the following Note~\ref{lm2}, it thus follows that our schemes achieve the rate region in~\eqref{eq:MACprop}. 

\begin{note}\label{lm2} Let $\set{\tilde T}\triangleq \set{\tilde T}_1\times \set{\tilde T}_2$ where, for $i\in\{1,2\}$, $\set{\tilde T}_i$ denotes  the set of strictly-lower block-triangular matrices with block matrices of size $\nu_i\times\kappa$. The mapping described in~\eqref{eq:CD} is of the form  
\begin{IEEEeqnarray}{rCCC}
\tilde \omega \colon &\set{\tilde T} &\to &\set{\tilde T} \nonumber \\
&(\mat{C}_1^\B, \mat{C}_2^\B)& \mapsto &(\mat{D}_1^{\B}, \mat{D}_2^{\B}), 
\end{IEEEeqnarray}
and is bijective.
\end{note}
\begin{proof}
Analogous to the proof of Note~\ref{lm1}. Details omitted.
\end{proof}

\begin{proposition}
 \label{prop:mac}
 The linear-feedback capacity of the  Gaussian MIMO MAC with channel matrices $\trans{\mat{ H}_1}$ and $\trans{\mat{ H}_2}$ under a sum-power constraint $P$ satisfies
\begin{IEEEeqnarray}{rCl}
\set{C}_{\MAC}^{\linfb}\left(\trans{\mat{ H}}_1,\trans{\mat{ H}}_2;P\right)
&=&\textnormal{cl}\left(\bigcup_{\eta,\mat{D}_1^{\B},\mat{D}_2^{\B}}\frac{1}{\eta}\set{R}_{\MAC}\left(\eta,\mat{D}_1^{\B},\mat{D}_2^{\B},\tBHone,\tBHtwo;P\right)\right)\label{eq:MACprop}
\IEEEeqnarraynumspace
\end{IEEEeqnarray}
where the union is over all positive integers $\eta$ and all strictly-lower block-triangular $(\eta \nu_1)$-by-$(\eta \kappa)$ and $(\eta\nu_2)$-by-$(\eta\kappa)$ matrices $\mat{D}_1^{\B}$ and $\mat{D}_2^{\B}$ with blocks of sizes $\nu_1\times\kappa$ and $\nu_2\times\kappa$ that satisfy  
\begin{equation}
\textnormal{tr}\left(\mat{D}_1^{\B} \trans{\mat{(\mat{D}_1^{\B})}}\right)+ \textnormal{tr}\left(\mat{D}_2^{\B} \trans{\mat{(\mat{D}_2^{\B})}}\right)\leq \eta P.
\end{equation}
\end{proposition}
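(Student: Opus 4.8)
The achievability direction---that the right-hand side of~\eqref{eq:MACprop} is contained in $\set{C}_{\MAC}^{\linfb}(\trans{\mat{H}}_1,\trans{\mat{H}}_2;P)$---has essentially been established in the paragraphs preceding Note~\ref{lm2}: the inner/outer code realizes $\frac{1}{\eta}\set{R}_{\MAC}(\eta,\mat{D}_1^\B,\mat{D}_2^\B,\tBHone,\tBHtwo;P)$ for every admissible choice of parameters, Lemma~\ref{lem1} certifies the power constraint, and the bijectivity in Note~\ref{lm2} guarantees that letting $(\mat{D}_1^\B,\mat{D}_2^\B)$ range over all strictly-lower block-triangular matrices exhausts all inner codes. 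The plan is therefore to prove the converse, namely that every rate-pair achievable by a linear-feedback scheme lies in the closure of the union in~\eqref{eq:MACprop}.

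The key idea of the converse is to regard an entire length-$n$ block as a single use of a high-dimensional channel, i.e.\ to take $\eta=n$. First I would stack the inputs, outputs, and noises as in~\eqref{eq:echtinputoutputMAC} and rewrite the linear-feedback inputs~\eqref{eq:lin_com} in the block form $\vect{X}_i=\vect{W}_i+\mat{C}_i^\B\vect{Y}$, where $\vect{W}_i$ collects the information-carrying vectors and hence depends on $M_i$ alone, while causality forces each $\mat{C}_i^\B$ to be strictly-lower block-triangular. Since $\tBHone$ and $\tBHtwo$ are block-diagonal, $\mat{I}-\tBHone\mat{C}_1^\B-\tBHtwo\mat{C}_2^\B$ has unit diagonal blocks and is invertible; solving the fixed-point equation for $\vect{Y}$ then reproduces exactly the effective feedback-free channel~\eqref{eq:y_mac1} and the representation~\eqref{eq:blockinputs}, with $\mat{D}_i^\B$ given by~\eqref{eq:CD}. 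By Note~\ref{lm2} the induced pair $(\mat{D}_1^\B,\mat{D}_2^\B)$ is itself strictly-lower block-triangular, hence an admissible point of the union.

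The crucial structural observation is that $\vect{W}_1,\vect{W}_2$ are mutually independent (as $M_1$ and $M_2$ are) and independent of $\vect{Z}$, so $(M_1,M_2)\mapsto(\vect{W}_1,\vect{W}_2)\mapsto\vect{Y}$ is a genuine \emph{feedback-free} Gaussian MAC. I would then invoke the standard Fano/cut-set bounds, $n(R_1+R_2)\le I(M_1,M_2;\vect{Y})+n\epsilon_n$ together with the two individual bounds. Because the transformation $(\mat{I}-\tBHone\mat{C}_1^\B-\tBHtwo\mat{C}_2^\B)^{-1}$ is invertible and deterministic, it leaves each mutual information unchanged while restoring the noise to identity covariance; combined with the data-processing inequality and the fact that the Gaussian distribution maximizes differential entropy for a fixed covariance, every mutual information is bounded above by the corresponding log-determinant of the channel~\eqref{eq:y_mac1}, evaluated at the covariances $\mat{K}_i\triangleq\mathrm{Cov}(\vect{W}_i)$. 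These are precisely the three defining inequalities of $\set{R}_{\MAC}(n,\mat{D}_1^\B,\mat{D}_2^\B,\tBHone,\tBHtwo;P)$.

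It remains to match the power constraint, which I expect to be the main obstacle. Expanding $\E{\|\vect{X}_1\|^2}+\E{\|\vect{X}_2\|^2}$ from~\eqref{eq:blockinputs} and grouping the terms multiplying $\mat{K}_1$, $\mat{K}_2$, and the identity noise covariance, one recovers exactly the matrices $\mat{M}_1,\mat{M}_2$ of~\eqref{eq:Mi} plus $\tr{\mat{D}_1^\B\trans{(\mat{D}_1^\B)}}+\tr{\mat{D}_2^\B\trans{(\mat{D}_2^\B)}}$. Writing $\mat{Q}_i=\mat{M}_i^{1/2}$ and substituting $\vect{U}_i\triangleq\mat{Q}_i\vect{W}_i$, the contribution $\tr{\mat{M}_i\mat{K}_i}$ collapses to $\tr{\mathrm{Cov}(\vect{U}_i)}$, so the block-power constraint~\eqref{eq:powerconstraintMAC} with $\eta=n$ becomes exactly the constraint~\eqref{eq:new_mac_pc} on $\vect{U}_1,\vect{U}_2$, while the channel rewritten in terms of $\vect{U}_i$ is exactly~\eqref{eq:y_mac1}. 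Hence $(R_1,R_2)$ satisfies the three inequalities of $\frac{1}{n}\set{R}_{\MAC}(n,\mat{D}_1^\B,\mat{D}_2^\B,\tBHone,\tBHtwo;P)$ up to $\epsilon_n$; letting $n\to\infty$ and taking the closure places every achievable pair inside the union in~\eqref{eq:MACprop}. The delicate points are the exact bookkeeping that produces $\mat{M}_i$---and hence the $\mat{Q}_i$-normalization that renders the $\vect{U}_i$-power constraint clean---and the verification that the log-determinant bounds obtained from free optimization over $\mat{K}_1,\mat{K}_2$ coincide with the defining region $\set{R}_{\MAC}$.
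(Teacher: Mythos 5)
Your proposal is correct and follows the same skeleton as the paper's converse: identify $\eta$ with the blocklength $n$, fold the whole length-$n$ linear-feedback scheme into a single use of the effective feedback-free MAC~\eqref{eq:y_mac1} via the map~\eqref{eq:CD} (causality giving the strictly-lower block-triangular structure), and use the normalization $\vect{U}_i=\mat{Q}_i\vect{W}_i$ so that the original sum-power constraint becomes~\eqref{eq:new_mac_pc} --- exactly the computation of Lemma~\ref{lem1} reused in the converse. The one place you genuinely diverge is the final step that places the Fano-bounded rate pair inside $\set{R}_{\MAC}$. The paper keeps only the two \emph{unconditional} bounds $nR_i\le I(M_i;\vect{Y}^{(n)})+\epsilon_n$ and observes that any pair $\bigl(I(\Theta_1;\vect{Y}),I(\Theta_2;\vect{Y})\bigr)$, with $(\Theta_1,\vect{U}_1)$ independent of $(\Theta_2,\vect{U}_2)$ and of the noise, is achievable by random coding over the effective channel and hence lies in its capacity region $\set{R}_{\MAC}$ \emph{by definition}; no Gaussian maximization and no explicit log-determinants are needed. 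You instead run the textbook Gaussian-MAC converse (conditional individual bounds, sum bound, maximum-entropy) to land in the Gaussian pentagon with covariances $\mathrm{Cov}(\vect{U}_i)$, whose traces satisfy the transformed power budget. Both routes are valid: the paper's achievability-based step is lighter precisely because $\set{R}_{\MAC}$ is defined as a capacity region rather than by explicit inequalities, whereas your route makes the rate constraints explicit at the cost of the extra max-entropy argument --- the ``delicate point'' you flag about matching the log-determinant bounds to $\set{R}_{\MAC}$ is real but is just the standard Cheng--Verd\'u characterization under a sum-power constraint.
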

\begin{proof} The achievability follows from the considerations above. 
The converse is proved in Section~\ref{sec:conv_prop_mac}.\end{proof}

\subsubsection{Dual linear-feedback schemes for MAC and BC}
Recall that for any matrix $\mat{M}$, we defined $\mat{\bar{M}}\triangleq \mat{E}\trans{\mat{M}}\mat{E}$, where $\mat{E}$ denotes the exchange matrix with appropriate dimensions.
\begin{proposition}
\label{prop:equal_regions}Let $\bBH_i \eqdef\mat{I}_\eta \otimes\mat{\bar{H}}_i$.
If 
\begin{IEEEeqnarray}{rCl}
\mat{B}_i^{\B}&=&\bar{\mat{D}}_i^{\B},\quad i\in\{1,2\},\label{eq:choice}
\end{IEEEeqnarray}
 then the following two regions coincide: 
\begin{IEEEeqnarray}{rCl}
\set{R}_\BC\left(\eta,\mat{B}_1^{\B},\mat{B}_2^{\B},\mat{ H}_1^\B,\mat{ H}_2^\B;P\right)&=& 
\set{R}_\MAC\left(\eta,\mat{D}_1^{\B},\mat{D}_2^{\B},\bBHone, \bBHtwo;P\right).\label{eq:equal_regions}
\end{IEEEeqnarray}
\end{proposition}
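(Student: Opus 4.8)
The plan is to remove the feedback-induced noise coloring from both channels and then appeal to the classical no-feedback MAC--BC (transpose) duality. On the BC side, the channel~\eqref{eq:outputsBC} has white inputs but correlated noise; since the no-feedback BC capacity region depends only on the marginal channel law seen by each receiver, Receiver~$i$ may left-multiply $\vect{Y}_i$ by $\mat{\Sigma}_i^{-1/2}$, where $\mat{\Sigma}_i$ is its noise covariance, for example
\begin{equation}
\mat{\Sigma}_1=(\mat{I}+\BHone\mat{B}_1^{\B})\trans{(\mat{I}+\BHone\mat{B}_1^{\B})}+(\BHone\mat{B}_2^{\B})\trans{(\BHone\mat{B}_2^{\B})}.
\end{equation}
This invertible receiver operation does not change the no-feedback capacity region and turns the BC into a standard white-noise BC with effective channel matrices $\mat{G}_i\triangleq\mat{\Sigma}_i^{-1/2}\BH_i$. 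On the MAC side I would left-multiply the output~\eqref{eq:y_mac1} (with the channel matrices $\tBHone,\tBHtwo$ replaced by $\bBHone,\bBHtwo$) by the invertible matrix $(\mat{I}-\bBHone\mat{C}_1^{\B}-\bBHtwo\mat{C}_2^{\B})$; this again preserves the region and yields a white-noise MAC $\tilde{\vect{Y}}=\bBHone\mat{Q}_1^{-1}\vect{U}_1+\bBHtwo\mat{Q}_2^{-1}\vect{U}_2+\vect{Z}$ with effective channel matrices $\bBH_i\mat{Q}_i^{-1}$, where $\mat{Q}_i$ is the positive square root of $\mat{M}_i$ evaluated with the channel $\bBH_i$. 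Both reduced problems carry the same sum input power, since $\mat{B}_i^{\B}=\bar{\mat{D}}_i^{\B}$ and $\mat{E}^2=\mat{I}$ give $\tr{\mat{B}_i^{\B}\trans{(\mat{B}_i^{\B})}}=\tr{\mat{D}_i^{\B}\trans{(\mat{D}_i^{\B})}}$.

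The heart of the argument is an identity relating the two families of effective channels through the reverse-image operator. Using $\mat{B}_i^{\B}=\bar{\mat{D}}_i^{\B}$, $\bBH_i=\overline{\BH_i}$, and Note~\ref{note:bar} (in particular $\bar{\mat{I}}=\mat{I}$, the product rule $\bar{\mat{X}}\bar{\mat{Y}}=\overline{\mat{Y}\mat{X}}$, and the transpose--bar commutation $\overline{\trans{\mat{A}}}=\trans{(\bar{\mat{A}})}$, which together give $\overline{\mat{A}\trans{\mat{A}}}=\trans{(\bar{\mat{A}})}\bar{\mat{A}}$), I would verify term by term that applying the operator to $\mat{\Sigma}_i$ reproduces exactly the matrix $\mat{M}_i$ of~\eqref{eq:Mi}; for instance $\overline{\mat{I}+\BHone\mat{B}_1^{\B}}=\mat{I}+\mat{D}_1^{\B}\bBHone$. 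As $\mat{\Sigma}_i$ is symmetric this reads $\mat{E}\mat{\Sigma}_i\mat{E}=\mat{M}_i$. Since $\mat{\Sigma}_i$ is positive-definite and $\mat{E}\mat{\Sigma}_i^{1/2}\mat{E}$ is a symmetric positive-definite square root of $\mat{E}\mat{\Sigma}_i\mat{E}$, uniqueness of the positive square root forces $\mat{Q}_i=\mat{E}\mat{\Sigma}_i^{1/2}\mat{E}$, hence $\mat{Q}_i^{-1}=\mat{E}\mat{\Sigma}_i^{-1/2}\mat{E}$. Substituting $\bBH_i=\mat{E}\trans{\BH_i}\mat{E}$ then gives
\begin{equation}
\bBH_i\mat{Q}_i^{-1}=\mat{E}\,\trans{\big(\mat{\Sigma}_i^{-1/2}\BH_i\big)}\,\mat{E}=\mat{E}\,\trans{\mat{G}_i}\,\mat{E},
\end{equation}
so the MAC effective channel is the transpose of the BC effective channel, conjugated by exchange matrices.

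Finally I would invoke the classical no-feedback MAC--BC duality~\cite{VJG03,VisTse,WSS06}: under a common sum-power constraint the white-noise BC with channels $\mat{G}_i$ and the white-noise MAC with channels $\trans{\mat{G}_i}$ have the same capacity region. It then remains to absorb the two exchange matrices in $\mat{E}\trans{\mat{G}_i}\mat{E}$: left-multiplication by $\mat{E}$ is an invertible receiver transform that keeps the MAC noise white (as $\mat{E}$ is orthogonal), and right-multiplication by $\mat{E}$ is the per-user input substitution $\vect{U}_i\mapsto\mat{E}\vect{U}_i$, which leaves $\|\vect{U}_i\|$ and hence the sum-power budget unchanged; neither alters the MAC capacity region. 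Chaining the two whitening reductions, this orthogonal invariance, and the transpose duality then establishes~\eqref{eq:equal_regions}.

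I expect the delicate step to be the reverse-image identity $\mat{E}\mat{\Sigma}_i\mat{E}=\mat{M}_i$, which is precisely where the choices $\mat{B}_i^{\B}=\bar{\mat{D}}_i^{\B}$ and the use of $\bBH_i$ rather than $\trans{\BH_i}$ pay off, and which should be carried out via the operator identities of Note~\ref{note:bar} rather than by index bookkeeping. A secondary point needing care is confirming that the invertible whitening and output maps genuinely preserve the respective no-feedback capacity regions and that the classical duality is applied with the sum-power budgets exactly matched at $\eta P-\sum_{i}\tr{\mat{D}_i^{\B}\trans{(\mat{D}_i^{\B})}}$.
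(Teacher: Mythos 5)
Your proposal is correct and follows essentially the same route as the paper's proof: whiten both channels by invertible per-receiver transformations, match the power budgets via $\tr{\mat{B}_i^{\B}\trans{(\mat{B}_i^{\B})}}=\tr{\mat{D}_i^{\B}\trans{(\mat{D}_i^{\B})}}$, establish the key identity $\mat{E}\mat{\Sigma}_i\mat{E}=\mat{M}_i$ (the paper writes it as $\mat{E}\mat{M}_i\mat{E}=\mat{N}_i$ and deduces $\mat{S}_i=\mat{E}\mat{Q}_i\mat{E}$ by uniqueness of the positive square root), and conclude by the classical no-feedback MAC--BC duality. The only differences are presentational: the paper absorbs the exchange matrices into the BC transformation (output map $\mat{E}\mat{S}_i^{-1}$ and input substitution $\vect{\breve U}=\mat{E}\vect{U}$) before invoking duality and carries out the covariance identity by explicit Kronecker-product algebra, whereas you absorb them on the MAC side afterwards and use the operator identities of Note~\ref{note:bar}.
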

%
%
\begin{proof}
See Section~\ref{sec:proofprop3}.
\end{proof}

When $\{\mat{A}^{\B}_i,\mat{B}^{\B}_i\}_{i=1}^2$ satisfy~\eqref{eq:B} and $\{\mat{C}^{\B}_i,\mat{D}^{\B}_i\}_{i=1}^2$ satisfy~\eqref{eq:CD}, Condition~\eqref{eq:choice} is equivalent to 
\begin{equation}\label{eq:ACone}
\mat{A}^{\B}_i = \mat{\bar{C}}_i^{\B}.
\end{equation}

Combining Proposition~\ref{prop:equal_regions}, Equality~\eqref{eq:ACone}, and Remark~\ref{rem:trans} we obtain:

\begin{corollary}\label{cor:param} Consider a MIMO Gaussian BC with channel matrices $(\mat{H}_1, \mat{H}_2)$ and its dual MAC with channel matrices $(\trans{\mat{H}}_1, \trans{\mat{H}}_2)$.
Fix the MAC-scheme parameters $\eta$, $\{\mat{C}_{1,\tau, \ell} \}$, $\{\mat{C}_{2, \tau,\ell} \}$, and let $f_1^{(n')}$, $f_2^{(n')}$, $g^{(n')}$ be an optimal outer code for these choices. Choose now the BC-scheme parameters
\begin{IEEEeqnarray}{rCl}\label{eq:equivalence_cond}
\mat{A}_{i, \tau,\ell} = \mat{\bar C}_{i,\eta-\tau, \eta-\ell+2}, 
\end{IEEEeqnarray}
and an optimal outer code $f^{(n')}$, $g_1^{(n')}$, and $g_2^{(n')}$ as described in \cite{WSS06}. Then, our MAC and BC-schemes achieve the same rate regions:
\begin{IEEEeqnarray}{rCl}
\set{R}_\BC\left(\eta,\mat{B}_1^{\B},\mat{B}_2^{\B},\mat{ H}_1^\B,\mat{ H}_2^\B;P\right)&=& 
\set{R}_\MAC\left(\eta,\mat{D}_1^{\B},\mat{D}_2^{\B},\tBHone, \tBHtwo;P\right).
\end{IEEEeqnarray}

In the SISO case, all conditions \eqref{eq:equivalence_cond} are summarized by
\begin{IEEEeqnarray}{rCl}\label{eq:AC}
\mat{A}_i^\B = \mat{\bar C}_i^{\B}. 
\IEEEeqnarraynumspace
\end{IEEEeqnarray}
\end{corollary}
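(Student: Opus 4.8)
The plan is to present the corollary as a three-link chain. I would (i) recast the entrywise parameter choice \eqref{eq:equivalence_cond} as the single matrix identity $\mat{A}_i^\B=\bar{\mat{C}}_i^\B$; (ii) feed this through Equality~\eqref{eq:ACone} and Proposition~\ref{prop:equal_regions} to express the BC region with the genuine block matrices $\mat{H}_1^\B,\mat{H}_2^\B$ as a MAC region with the \emph{reversed} channels $\bBHone,\bBHtwo$; and (iii) invoke the exchange-matrix device behind Remark~\ref{rem:trans} to swap $\bBHone,\bBHtwo$ for the transposed (dual) channels $\tBHone,\tBHtwo$. All the analysis is already carried by Proposition~\ref{prop:equal_regions} and Remark~\ref{rem:trans}, so the corollary is essentially bookkeeping plus one change of variables.

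For step~(i), I would expand $\mat{C}_i^\B=\sum_{\tau<\ell}(\vect{e}_\ell\trans{\vect{e}_\tau})\otimes\mat{C}_{i,\tau,\ell}$, where $\vect{e}_\ell$ is the $\ell$-th standard basis vector of $\Reals^\eta$, and use the factorizations $\mat{E}_{\eta\kappa}=\mat{E}_\eta\otimes\mat{E}_\kappa$ and $\mat{E}_{\eta\nu_i}=\mat{E}_\eta\otimes\mat{E}_{\nu_i}$. Definition~\eqref{eq:reverse} then gives $\bar{\mat{C}}_i^\B=\sum_{\tau<\ell}\bigl(\mat{E}_\eta\,\vect{e}_\tau\trans{\vect{e}_\ell}\,\mat{E}_\eta\bigr)\otimes\bar{\mat{C}}_{i,\tau,\ell}$, so that the reversed block $\bar{\mat{C}}_{i,\tau,\ell}$ occupies a definite position of $\bar{\mat{C}}_i^\B$; reading off the block in each position is precisely the entrywise statement~\eqref{eq:equivalence_cond}. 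By point~3 of Note~\ref{note:bar} the result is again strictly-lower block-triangular, so the index ranges on both sides match and the prescription is well defined. The SISO identity~\eqref{eq:AC} is the special case $\kappa=\nu_1=\nu_2=1$, where each block is a scalar and the bar acts trivially on it.

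For steps~(ii)--(iii), note that because $\{\mat{A}_i^\B,\mat{B}_i^\B\}$ obey~\eqref{eq:B} and $\{\mat{C}_i^\B,\mat{D}_i^\B\}$ obey~\eqref{eq:CD}, Equality~\eqref{eq:ACone} makes $\mat{A}_i^\B=\bar{\mat{C}}_i^\B$ equivalent to $\mat{B}_i^\B=\bar{\mat{D}}_i^\B$, which is exactly hypothesis~\eqref{eq:choice} of Proposition~\ref{prop:equal_regions}. That proposition yields $\set{R}_\BC(\eta,\mat{B}_1^\B,\mat{B}_2^\B,\mat{H}_1^\B,\mat{H}_2^\B;P)=\set{R}_\MAC(\eta,\mat{D}_1^\B,\mat{D}_2^\B,\bBHone,\bBHtwo;P)$. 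From $\bar{\mat{H}}_i=\mat{E}_\kappa\trans{\mat{H}_i}\mat{E}_{\nu_i}$ and $\tBH_i=\mat{I}_\eta\otimes\trans{\mat{H}_i}$ one reads off the block identity $\bBH_i=(\mat{I}_\eta\otimes\mat{E}_\kappa)\,\tBH_i\,(\mat{I}_\eta\otimes\mat{E}_{\nu_i})$, which is the blockwise form of the substitution used in the proof of Remark~\ref{rem:trans}. Since $\mat{I}_\eta\otimes\mat{E}$ is orthogonal, it leaves the white-noise covariance and the input powers unchanged, so it carries the no-feedback MAC underlying $\set{R}_\MAC(\,\cdot\,;\bBHone,\bBHtwo)$ onto the one underlying $\set{R}_\MAC(\,\cdot\,;\tBHone,\tBHtwo)$ with the same capacity region. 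Chaining the two equalities proves the displayed identity; finally, picking the outer codes $f_1^{(n')},f_2^{(n')},g^{(n')}$ and $f^{(n')},g_1^{(n')},g_2^{(n')}$ to achieve the no-feedback capacities $\set{R}_\MAC$ and $\set{R}_\BC$ (the latter as in~\cite{WSS06}) guarantees that both schemes attain these now-identical regions.

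I expect the only delicate point to be step~(iii): one must verify that replacing $\tBH_i$ by $\bBH_i$ while holding $\mat{D}_1^\B,\mat{D}_2^\B$ fixed amounts to a single orthogonal transformation of the effective no-feedback channel~\eqref{eq:y_mac1}, tracking how the internally induced matrices $\mat{C}_i^\B$ and square roots $\mat{Q}_i$ transform. The block expansion of the reverse-image operator and the appeal to Proposition~\ref{prop:equal_regions} are routine.
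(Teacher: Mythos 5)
Your architecture is the same as the paper's --- Proposition~\ref{prop:equal_regions} plus the exchange-matrix relabeling behind Remark~\ref{rem:trans} --- only with the two moves composed in the opposite order: the paper first transports the MAC scheme from the channels $\tBHone,\tBHtwo$ to the reversed channels $\bBHone,\bBHtwo$ and then invokes Proposition~\ref{prop:equal_regions}, whereas you invoke the proposition first and relabel afterwards. Your step~(i), reading \eqref{eq:equivalence_cond} as the blockwise form of $\mat{A}_i^\B=\bar{\mat{C}}_i^\B$ via $\mat{E}_{\eta\kappa}=\mat{E}_\eta\otimes\mat{E}_\kappa$, is a sound way to get \eqref{eq:ACone}.

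The gap is in how steps (ii) and (iii) are glued, and it is exactly the ``delicate point'' you flag but then resolve incorrectly. The exchange-matrix device does \emph{not} carry $\set{R}_\MAC(\eta,\mat{D}_1^{\B},\mat{D}_2^{\B},\bBHone,\bBHtwo;P)$ onto $\set{R}_\MAC(\eta,\mat{D}_1^{\B},\mat{D}_2^{\B},\tBHone,\tBHtwo;P)$ with the \emph{same} $\mat{D}_i^{\B}$: premultiplying inputs and fed-back outputs by $\mat{E}$ conjugates every feedback block, so the relabeling sends $\mat{C}_{i,\tau,\ell}\mapsto\mat{E}\mat{C}_{i,\tau,\ell}\mat{E}$ and hence $\mat{D}_i^{\B}\mapsto(\mat{I}_\eta\otimes\mat{E}_{\nu_i})\,\mat{D}_i^{\B}\,(\mat{I}_\eta\otimes\mat{E}_\kappa)$; the obstruction is visible already in $\mat{M}_1$, since $\mat{D}_1^{\B}\bBHone=\mat{D}_1^{\B}(\mat{I}_\eta\otimes\mat{E}_\kappa)\tBHone(\mat{I}_\eta\otimes\mat{E}_{\nu_1})$ cannot be turned into an orthogonal conjugate of $\mat{D}_1^{\B}\tBHone$ without moving $\mat{D}_1^{\B}$ itself. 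Correspondingly, the equivalence \eqref{eq:ACone}$\Leftrightarrow$\eqref{eq:choice} that you use in step~(ii) is valid when the $\mat{D}_i^{\B}$ are defined by the analogue of \eqref{eq:CD} with the \emph{reversed} channels (which is what Proposition~\ref{prop:equal_regions} consumes), not for the $\mat{D}_i^{\B}$ of the dual MAC built with $\tBHone,\tBHtwo$: with the latter one computes $\bar{\mat{D}}_i^{\B}=\bigl(\mat{I}-\sum_j\mat{A}_j^{\B}\,\overline{\trans{(\mat{H}_j^{\B})}}\bigr)^{-1}\mat{A}_i^{\B}$ and $\overline{\trans{(\mat{H}_j^{\B})}}=\mat{I}_\eta\otimes(\mat{E}_{\nu_j}\mat{H}_j\mat{E}_\kappa)\neq\mat{H}_j^{\B}$ in general, so $\bar{\mat{D}}_i^{\B}\neq\mat{B}_i^{\B}$. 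The two discrepancies cancel only once the conjugation is tracked explicitly, which is precisely the role of the paper's auxiliary matrices $\mat{C}_i^{\B'}$ and $\mat{D}_i^{\B'}$ in \eqref{eq:Ctrans}--\eqref{eq:Dp} and of the two separate identities \eqref{eq:equal_regions12} and \eqref{eq:equal_regions123}. In the SISO case all conjugating matrices reduce to $1$ and your chain closes as written, but for the MIMO statement you must either insert the primed matrices or verify the cancellation across the two steps.
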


\begin{proof}
 See Section~\ref{sec:proofcor5}.
\end{proof}

In view of Corollary~\ref{cor:param} and the capacity-achieving schemes in \cite{OZAROW84} and \cite{JAFAR06}, 
for the SISO, the SIMO, and the MISO MAC, we can readily deduce the  parameters of our linear-feedback schemes in Section~\ref{sec:schemeBC} that achieve the linear-feedback capacity of the dual BCs.

\section{Extension I: One-Sided Feedback}
\label{sec:ext_one}
In this section we assume that there is feedback from only one side. That means, in the BC, there is 
feedback from only one of the two receivers, and in the MAC only one of the two transmitters has feedback.
\subsection{MIMO Gaussian  BC with One-Sided Feedback}
\;
\begin{figure}[ht]
\centerline{
\begin{tikzpicture}[xscale=1,yscale=1]
\draw [thick] (0,2.1) rectangle (1.75,2.9);
\node[right] at (-0.1,2.55){Transmitter};
\node[left] at (0,2.55){$(M_1,M_2)$};
\draw [->][thick](1.75,2.5)--(2.4,3.05);
\draw [thick](2.5,3.1) circle [radius=0.13];
\node[above left] at (2.5,3.1){$\mat{H}_1$};
\node  at (2.5,3.1){$\times$};
\draw [->][thick](2.6,3.2)--(3,3.5)--(3.5,3.5);
\node[right] at (1.85,2.5){$\vect{x}_t$};
\draw [thick](3.63,3.5) circle [radius=0.13];
\node at (3.63,3.5){$+$};
\draw [->][thick](3.63,4)--(3.63,3.65);
\node [left] at (3.68,3.8){$\vect{Z}_{1,t}$};
\draw [->][thick](3.78,3.5)--(4.45,3.5);
\draw [thick] (4.45,3.1) rectangle (6.1,3.9);
\node[right] at (4.4,3.5){Receiver~1};
\node[right] at (6,3.5){$\hat{M}_1$};
\draw [->][thick](1.75,2.5)--(2.4,2);
\draw [thick](2.5,1.9) circle [radius=0.13];
\node[below left] at (2.5,1.8){$\mat{H}_2$};
\node  at (2.5,1.9){$\times$};
\draw [->][thick](2.6,1.83)--(3,1.5)--(3.5,1.5);
\draw [thick](3.63,1.5) circle [radius=0.13];
\node at (3.63,1.5){$+$};
\draw [->][thick](3.63,1)--(3.63,1.35);
\node [left] at (3.68,1.2){$\vect{Z}_{2,t}$};
\draw [->][thick](3.78,1.5)--(4.45,1.5);
\draw [thick] (4.45,1.1) rectangle (6.1,1.9);
\node[right] at (4.4,1.5){Receiver~2};
\node[right] at (6,1.5){$\hat{M}_2$};
\node[below] at (4.1,3.5){$\vect{Y}_{1,t}$};
\node[above] at (4.1,1.5){$\vect{Y}_{2,t}$};
\draw [->][dashed,thick][red](5.25,3.9)--(5.25,4.3)--(0.9,4.3)--(0.9,2.9);
\end{tikzpicture}}
\caption{Two-user MIMO Gaussian BC with one-sided feedback.\label{fig:bc_one}} 
\end{figure}
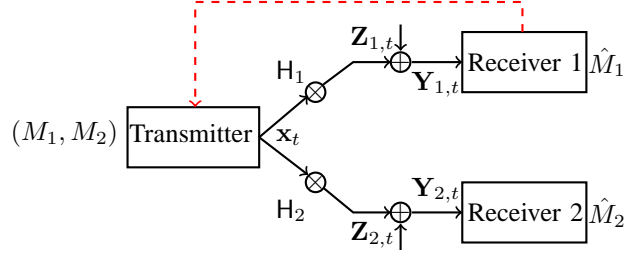

Consider the Gaussian MIMO BC described in~\eqref{eq:BCmodel}, Section~\ref{sec:BC}, but with feedback only from Receiver~1 (see Figure~\ref{fig:bc_one}). The inputs are thus of the form
\begin{IEEEeqnarray}{rCL}\label{eq:inputsBCone}
\vect{X}_t = \varphi_t^{(n)}(M_1, M_2, \vect{Y}_{1,1}, \ldots, \vect{Y}_{1,t-1}),\quad t\in\{1,\dots,n\}.\IEEEeqnarraynumspace
\end{IEEEeqnarray}
We will again restrict to linear-feedback schemes where the inputs are generated as 

\begin{IEEEeqnarray}{rCl}
\vect{X}_t&=&\vect{W}_t+ \sum_{\tau=1}^{t-1} \mat{A}_{1,\tau,t}\vect{Y}_{1,\tau}, \quad t\in \{1,\dots,n\}, \label{eq:lin_com_2_bcone}
\end{IEEEeqnarray}
where $\vect{W}_t= \xi_t^{(n)}(M_1, M_2)$ for arbitrary functions~$\xi_t^{(n)}$ and where $\mat{A}_{1,\tau,t}$ are arbitrary $\kappa$-by-$\nu_1$ matrices.

Decodings, power constraint,  and the definitions of error probabilities and capacity regions are as described in Section~\ref{sec:BC}. 

We denote the linear-feedback capacity region with one-sided feedback from Receiver~1 by $\set{C}_{\BC, \One}^{\linfb}(\mat{H}_1, \mat{H}_2; P)$. It is unknown to date. Inner bounds (i.e., achievable regions) have been proposed by Bhaskaran~\cite{sibi} and Steinberg, Lapidoth, and Wigger~\cite{LSW10}. 

Analogous to Proposition~\ref{prop:bc}, we can derive a multi-letter expression for the linear-feedback capacity region 
$\set{C}_{\BC, \One}^{\linfb}({\mat{H}}_1, {\mat{H}}_2; P)$. Recall the definition of the regions 
$\set{R}_{\BC}$ in Definition~\ref{def:RBC}.
\begin{proposition}\label{prop:bcone}
\begin{IEEEeqnarray*}{rCl}
\set{C}_{\BC,\One}^{\linfb}({\mat{ H}}_1,{\mat{ H}}_2;P)=\textnormal{cl}\left(\bigcup_{\eta,\mat{B}_1^{\B}}\frac{1}{\eta}\set{R}_{\BC}\left(\eta,\mat{B}_1^{\B},\mat{0},\BHone,\BHtwo;P\right)\right)
\IEEEeqnarraynumspace
\end{IEEEeqnarray*}
where the union is over all positive integers $\eta$ and all strictly-lower block-triangular $(\eta \kappa)$-by-$(\eta \nu_1)$ matrices $\mat{B}_1^{\B}$  with blocks of sizes $\kappa\times\nu_1$ that satisfy $\textnormal{tr}\left(\mat{B}_1^{\B} \trans{\mat{(\mat{B}_1^{\B})}}\right)\leq \eta P,$ and where $\mat{0}$ denotes the $(\eta \kappa)$-by-$(\eta \nu_2)$ all-zero matrix.
\end{proposition}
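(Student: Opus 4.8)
The plan is to mirror the proof of Proposition~\ref{prop:bc}, specializing the two-sided argument to the one-sided setting in which the encoder may use only the feedback $\vect{Y}_1$ from Receiver~1. I would establish the two inclusions (achievability and converse) separately, and in both the decisive observation is that the absence of feedback from Receiver~2 is captured by the single algebraic constraint $\mat{B}_2^{\B}=\mat{0}$.

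For the achievability (``$\supseteq$'') direction, I would take the class of linear-feedback subblock schemes of Section~\ref{sec:schemeBC} and restrict it to inner codes with $\mat{A}_2^{\B}=\mat{0}$. Setting $\mat{A}_2^{\B}=\mat{0}$ in the input relation~\eqref{eq:bc_inputs} gives $\vect{X}=(\mat{I}-\mat{A}_1^{\B}\BHone)\vect{U}+\mat{A}_1^{\B}\vect{Y}_1$, which depends on the Receiver-1 feedback only and is therefore a legitimate one-sided linear-feedback scheme of the form~\eqref{eq:lin_com_2_bcone}. By the definition~\eqref{eq:B} of the map $\omega$, the choice $\mat{A}_2^{\B}=\mat{0}$ forces $\mat{B}_2^{\B}=(\mat{I}-\mat{A}_1^{\B}\BHone)^{-1}\mat{0}=\mat{0}$, while bijectivity of $\omega$ (Note~\ref{lm1}) guarantees that every admissible strictly-lower block-triangular $\mat{B}_1^{\B}$ is realized in this way. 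The power budget then reduces to $\tr{\mat{B}_1^{\B}\trans{(\mat{B}_1^{\B})}}\leq\eta P$, and, exactly as in Proposition~\ref{prop:bc}, the resulting scheme achieves $\frac{1}{\eta}\set{R}_{\BC}(\eta,\mat{B}_1^{\B},\mat{0},\BHone,\BHtwo;P)$. Taking the closure of the union over all admissible $\eta$ and $\mat{B}_1^{\B}$ yields the inclusion.

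For the converse (``$\subseteq$'') direction, I would start from an arbitrary one-sided linear-feedback code of the form~\eqref{eq:lin_com_2_bcone} and rewrite its inputs in noise form. Since $\vect{Y}_{1,\tau}=\mat{H}_1\vect{X}_\tau+\vect{Z}_{1,\tau}$, recursively substituting shows that each $\vect{X}_t$ is a linear combination of the information-carrying vectors and of the noise vectors $\vect{Z}_{1,1},\dots,\vect{Z}_{1,t-1}$ \emph{only}; the coefficient of every $\vect{Z}_{2,\tau}$ vanishes identically. Hence, in the equivalent representation~\eqref{eq:lin_com_2_bc}, all matrices multiplying the Receiver-2 noise are zero, i.e.\ $\mat{B}_2^{\B}=\mat{0}$. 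The converse of Proposition~\ref{prop:bc} (Section~\ref{sec:con_prop_bc}) then applies under this additional constraint and places the achievable rate-pair inside $\frac{1}{\eta}\set{R}_{\BC}(\eta,\mat{B}_1^{\B},\mat{0},\BHone,\BHtwo;P)$.

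The main obstacle, and the only point requiring genuine care, lies in the converse bookkeeping: one must check that the two-sided converse of Proposition~\ref{prop:bc} specializes cleanly once the $\vect{Z}_2$-coefficients are zeroed out, so that no slack is introduced and the parametrization by $\mat{B}_1^{\B}$ alone remains tight. This is secured by the two structural facts used above---that one-sided feedback leaves the $\vect{Z}_2$-coefficients identically zero, and that the map $\omega$ of~\eqref{eq:B} sends $\mat{A}_2^{\B}=\mat{0}$ to $\mat{B}_2^{\B}=\mat{0}$ and is bijective---so that the one-sided region is exactly the $\mat{B}_2^{\B}=\mat{0}$ slice of the two-sided region, both complete and exhaustive.
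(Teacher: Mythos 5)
Your proposal is correct and follows essentially the same route as the paper, which simply notes that the proof of Proposition~\ref{prop:bc} carries over with $\mat{B}_2^{\B}=\mat{0}$, equivalent via~\eqref{eq:B} (and its inverse~\eqref{eq:DC}) to $\mat{A}_{2,\tau,\ell}=\mat{0}$. Your elaboration of both directions—that one-sided feedback zeroes the $\vect{Z}_2$-coefficients in the noise representation for the converse, and that the $\mat{B}_2^{\B}=\mat{0}$ slice is exactly reachable by inner codes with $\mat{A}_2^{\B}=\mat{0}$ for achievability—is exactly what the paper's terse proof implicitly relies on.
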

\begin{proof}
Analogous to the proof of Proposition~\ref{prop:bc}, but where the matrix $\mat{B}_2^{\B}$ needs to be the $(\eta \kappa)$-by-$(\eta \nu_2)$ all-zero matrix, which by~\eqref{eq:B} implies that also  $\mat{A}_{2,\tau,\ell}=\mat{0}$ for all $\tau,\ell$.
\end{proof}

\subsection{MIMO Gaussian  MAC with One-Sided Feedback}
\; 
 \begin{figure}[ht]
\centerline{
\begin{tikzpicture}[xscale=1,yscale=1]
\draw [thick] (-0.5,3.1) rectangle (1.5,3.9);
\node[right] at (-0.55,3.5){Transmitter~1};
\node[left] at (-0.5,3.5){$M_1$};
\draw [->][thick](1.5,3.5)--(2.35,3.5);
\draw [->][thick](2.6,3.5)--(3.05,2.6);
\draw [thick](2.45,3.5) circle [radius=0.13];
\node[above right] at (2.5,3.5){$\trans{\mat{H}_1}$};
\node [right] at (2.2,3.5){$\times$};
\node[above] at (2,3.5){$\vect{x}_{1,t}$};
\draw [->][thick](1.5,1.5)--(2.35,1.5);
\draw [->][thick](2.6,1.5)--(3.05,2.44);
\draw [thick](2.45,1.5) circle [radius=0.13];
\node [right] at (2.2,1.5){$\times$};
\node[below right] at (2.5,1.5){$\trans{\mat{H}_2}$};
\node[below] at (2,1.5){$\vect{x}_{2,t}$};
\draw [thick] (-0.5,1.1) rectangle (1.5,1.9);
\node[right] at (-0.55,1.5){Transmitter~2};
\node[left] at (-0.5,1.5){$M_2$};
\draw [thick] (4,2.1) rectangle (5.5,2.9);
\node[right] at (4,2.55){Receiver};
\node[right] at (5.4,2.55){$(\hat{M}_1,\hat{M}_2)$};
\draw [->][thick](3.3,2.5)--(4,2.5);
\node[below] at (3.7,2.5){$\vect{Y}_t$};
\draw [thick](3.15,2.5) circle [radius=0.13];
\node at (3.15,2.5){$+$};
\draw [->][thick](3.15,3.18)--(3.15,2.62);
\node [right] at (3.15,3.02){$\vect{Z}_t$};
\draw [->][dashed,thick][red](4.75,2.9)--(4.75,4.3)--(0.55,4.3)--(0.55,3.9);
\end{tikzpicture}}
\caption{Two-user MIMO Gaussian MAC with one-sided feedback.\label{fig:mac_one}}
\end{figure}
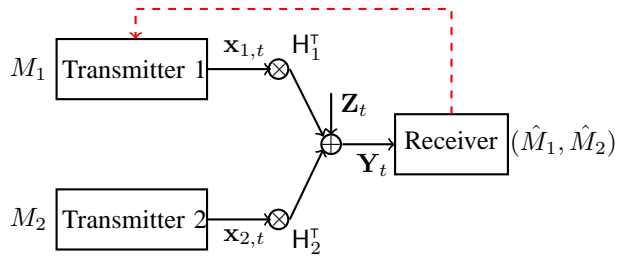

Consider the Gaussian MIMO MAC described in~\eqref{eq:mac_channel_output}, Section~\ref{sec:MAC}, but where only Transmitter~1 has feedback from the receiver (see Figure~\ref{fig:mac_one}). The inputs are thus of the form
\begin{subequations}
\begin{IEEEeqnarray}{rCL}\label{eq:MACinputsone}
\vect{X}_{1,t}&=&\varphi_{1,t}^{(n)} (M_1, \vect{Y}_1,\dots,\vect{Y}_{t-1})\\
\vect{X}_{2,t}&=&\varphi_{2,t}^{(n)} (M_2).
\end{IEEEeqnarray}
\end{subequations}

We will again restrict to linear-feedback schemes where the inputs at Transmitter~1 are generated as 
\begin{IEEEeqnarray}{rCl}
\vect{X}_{1,t}&=& \vect{W}_{1,t} + \sum_{\tau=1}^{t-1} \mat{C}_{1,\tau,t} \vect{Y}_\tau,\IEEEeqnarraynumspace\label{eq:lin_comone}
\end{IEEEeqnarray}
where $\vect{W}_{1,t}$ is a vector that only depends on the message $M_1$ but not on the feedback, $\vect{ W}_{1,t}= \xi_{1,t}^{(n)}(M_1)$ for arbitrary functions~$\xi_{1,t}^{(n)}$. 

Decoding, power constraint,  and the definitions of error probabilities and capacity regions are as described in Section~\ref{sec:MAC}. 

We denote the linear-feedback capacity region of the Gaussian MIMO MAC with one-sided feedback from Receiver~1 by $\set{C}_{\MAC, \One}^{\linfb}\left(\trans{\mat{H}}_1, \trans{\mat{H}}_2; P\right)$. It is unknown to date. Inner bounds (i.e., achievable regions) were presented in \cite{WVS83,carleial, CoverLeung, LapidothWigger}.

Analogous to Proposition~\ref{prop:mac},  we can derive a multi-letter expression for the linear-feedback capacity region 
$\set{C}_{\MAC, \One}^{\linfb}\left(\trans{\mat{H}}_1,\trans{\mat{H}}_2;P\right)$. Recall the definition of the regions 
$\set{R}_{\MAC}$ in Definition~\ref{def:RMAC}.
\begin{proposition}\label{prop:macone}
\begin{IEEEeqnarray*}{rCl}
\set{C}_{\MAC,\One}^{\linfb}\left(\trans{\mat{ H}}_1,\trans{\mat{ H}}_2;P\right)
&=&\textnormal{cl}\left(\bigcup_{\eta,\mat{D}_1^{\B}}\frac{1}{\eta}\set{R}_{\MAC}\left(\eta,\mat{D}_1^{\B},\mat{0},\tBHone,\tBHtwo;P\right)\right)
\IEEEeqnarraynumspace
\end{IEEEeqnarray*}
where the union is over all positive integers $\eta$ and all strictly-lower block-triangular $(\eta \nu_1)$-by-$(\eta \kappa)$ matrices $\mat{D}_1^{\B}$ with block sizes $\nu_1\times \kappa$ that satisfy 
$\textnormal{tr}\left(\mat{D}_1^{\B} \trans{\mat{(\mat{D}_1^{\B})}}\right)\leq\eta P,$ and where $\mat{0}$ denotes the $(\eta \nu_2)$-by-$(\eta \kappa)$ all-zero matrix.
\end{proposition}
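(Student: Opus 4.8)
The plan is to mirror the proof of Proposition~\ref{prop:mac} and to show that imposing one-sided feedback (only Transmitter~1 observes the channel output) is captured \emph{exactly} by forcing $\mat{D}_2^{\B}=\mat{0}$ in the block construction of Section~\ref{sec:schemeMAC}. The key bridge is the following equivalence: in the one-sided setup Transmitter~2's input may not depend on the feedback, so in the linear-feedback parametrization~\eqref{eq:lin_com} all of its feedback matrices vanish, $\mat{C}_{2,\tau,t}=\mat{0}$, and hence $\mat{C}_2^{\B}=\mat{0}$. Because the matrix $(\mat{I}-\tBHone\mat{C}_1^{\B}-\tBHtwo\mat{C}_2^{\B})$ is invertible, definition~\eqref{eq:CD} yields the clean equivalence $\mat{C}_2^{\B}=\mat{0}\Longleftrightarrow\mat{D}_2^{\B}=\mat{0}$. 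Combined with the bijectivity of $\tilde\omega$ (Note~\ref{lm2}), this shows that the admissible parameters for one-sided feedback are precisely the pairs $(\mat{C}_1^{\B},\mat{0})$, equivalently $(\mat{D}_1^{\B},\mat{0})$.

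For achievability I would fix any $\eta$ and any strictly-lower block-triangular $\mat{D}_1^{\B}$ obeying $\tr{\mat{D}_1^{\B}\trans{(\mat{D}_1^{\B})}}\leq\eta P$, set $\mat{D}_2^{\B}=\mat{0}$, and take $(\mat{C}_1^{\B},\mat{0})=\tilde\omega^{-1}(\mat{D}_1^{\B},\mat{0})$. Running the inner/outer code of Section~\ref{sec:schemeMAC} with these matrices produces a genuine one-sided scheme: substituting $\mat{C}_2^{\B}=\mat{0}$ into~\eqref{eq:input_mac} gives $\vect{X}_2=\mat{Q}_2^{-1}\vect{U}_2$, which depends only on $M_2$, so Transmitter~2 indeed ignores the feedback. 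Lemma~\ref{lem1} specialized to $\mat{D}_2^{\B}=\mat{0}$ certifies the total power constraint, and the overall scheme achieves $\frac{1}{\eta}\set{R}_{\MAC}(\eta,\mat{D}_1^{\B},\mat{0},\tBHone,\tBHtwo;P)$. Taking the closure of the union over $\eta$ and $\mat{D}_1^{\B}$ establishes the ``$\supseteq$'' inclusion.

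For the converse I would reuse the reduction of Section~\ref{sec:conv_prop_mac}, which brings an arbitrary linear-feedback MAC scheme into the block form and bounds its achievable rates by some $\frac{1}{\eta}\set{R}_{\MAC}(\eta,\mat{D}_1^{\B},\mat{D}_2^{\B},\tBHone,\tBHtwo;P)$. The only change is that the one-sided input structure of~\eqref{eq:MACinputsone}, in which $\vect{X}_{2,t}$ is a function of $M_2$ alone, forces $\mat{C}_2^{\B}=\mat{0}$ and therefore $\mat{D}_2^{\B}=\mat{0}$; hence every achievable rate-pair lies in the union restricted to $\mat{D}_2^{\B}=\mat{0}$, which matches the claimed region and gives ``$\subseteq$''.

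I expect the only nontrivial point to be confirming that the converse reduction borrowed from Proposition~\ref{prop:mac}---which passes from a generic, variable-length linear-feedback scheme to the block parametrization---preserves the all-zero structure of Transmitter~2's feedback matrices, i.e.\ does not silently reintroduce feedback at Transmitter~2. Since that reduction acts on the two transmitters' matrices separately and an all-zero strictly-lower block-triangular matrix stays all-zero throughout, this should go through without new ideas, making it the routine but load-bearing step of the argument.
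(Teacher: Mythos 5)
Your proposal is correct and follows essentially the same route as the paper: the paper's proof is exactly the observation that the one-sided constraint forces $\mat{C}_{2,\tau,\ell}=\mat{0}$, hence $\mat{C}_2^{\B}=\mat{0}$, which by the invertibility in~\eqref{eq:CD} is equivalent to $\mat{D}_2^{\B}=\mat{0}$, after which the achievability and converse of Proposition~\ref{prop:mac} go through verbatim. Your additional check that the converse reduction preserves the all-zero structure of Transmitter~2's matrices is a sensible explicit verification of what the paper leaves implicit.
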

\begin{proof}
Analogous to the proof of Proposition~\ref{prop:mac}, but where the matrix $\mat{D}_2^{\B}$ needs to be the $(\eta \nu_2)$-by-$(\eta \kappa)$ all-zero matrix which implies that $\{\mat{C}_{2, \tau,\ell} \}$ are all equal to the $\nu_2$-by-$\kappa$ all-zero matrix. \end{proof}

\subsection{Duality Result}
\begin{theorem}
\label{thm:dualone}
 \begin{equation}  
  \set{C}_{\BC,\One}^{\linfb}\left(\mat{H}_1,\mat{H}_2;P\right)
= \set{C}_{\MAC,\One}^{\linfb}\left(\trans{\mat{H}_1},\trans{\mat{H}_2};P\right).\end{equation}
\end{theorem}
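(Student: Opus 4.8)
The plan is to replay the proof of Theorem~\ref{thm:dual} almost verbatim, threading the one-sided restrictions $\mat{B}_2^{\B}=\mat{0}$ and $\mat{D}_2^{\B}=\mat{0}$ through the same chain of identities. The two endpoints are already supplied: Proposition~\ref{prop:bcone} writes $\set{C}_{\BC,\One}^{\linfb}(\mat{H}_1,\mat{H}_2;P)$ as the closure of $\bigcup_{\eta,\mat{B}_1^{\B}}\frac{1}{\eta}\set{R}_{\BC}(\eta,\mat{B}_1^{\B},\mat{0},\BHone,\BHtwo;P)$, while Proposition~\ref{prop:macone} writes $\set{C}_{\MAC,\One}^{\linfb}(\trans{\mat{H}}_1,\trans{\mat{H}}_2;P)$ as the closure of $\bigcup_{\eta,\mat{D}_1^{\B}}\frac{1}{\eta}\set{R}_{\MAC}(\eta,\mat{D}_1^{\B},\mat{0},\tBHone,\tBHtwo;P)$. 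Hence it suffices to identify these two families of regions.

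First I would match the building blocks. Applying Proposition~\ref{prop:equal_regions} in the special case where the second matrix of each pair is zero, and using the correspondence $\mat{B}_i^{\B}=\bar{\mat{D}}_i^{\B}$, yields
\[
\set{R}_{\BC}\left(\eta,\mat{B}_1^{\B},\mat{0},\BHone,\BHtwo;P\right)
=\set{R}_{\MAC}\left(\eta,\bar{\mat{B}}_1^{\B},\mat{0},\bBHone,\bBHtwo;P\right),
\]
with $\mat{D}_1^{\B}=\bar{\mat{B}}_1^{\B}$. The one consistency check that makes the one-sided case work is that the all-zero constraint is self-dual: by definition~\eqref{eq:reverse} we have $\bar{\mat{0}}=\mat{0}$, and by point~1 of Note~\ref{note:bar} the reverse-image operator is an involution, so $\mat{B}_2^{\B}=\mat{0}$ forces $\mat{D}_2^{\B}=\bar{\mat{B}}_2^{\B}=\mat{0}$ and conversely. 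Thus ``only Receiver~1 has feedback'' on the BC maps exactly onto ``only Transmitter~1 has feedback'' on the MAC, which is precisely the pairing asserted by the theorem.

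Next I would lift the term-by-term equality to an equality of unions. By point~3 of Note~\ref{note:bar} together with the involution property, the map $\mat{B}_1^{\B}\mapsto\bar{\mat{B}}_1^{\B}=\mat{D}_1^{\B}$ is a bijection from the strictly-lower block-triangular $(\eta\kappa)$-by-$(\eta\nu_1)$ matrices (blocks $\kappa\times\nu_1$) onto the strictly-lower block-triangular $(\eta\nu_1)$-by-$(\eta\kappa)$ matrices (blocks $\nu_1\times\kappa$); since $\mat{E}$ is orthogonal it preserves the trace, so the feasible set $\tr{\mat{B}_1^{\B}\trans{(\mat{B}_1^{\B})}}\le\eta P$ maps onto $\tr{\mat{D}_1^{\B}\trans{(\mat{D}_1^{\B})}}\le\eta P$. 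Consequently the two unions, and hence their $\tfrac1\eta$-scaled closures, coincide, and by Proposition~\ref{prop:macone} applied to the MAC whose output matrices are $\bar{\mat{H}}_1,\bar{\mat{H}}_2$ (so that the block channel matrices are $\bBHone,\bBHtwo$) this closure equals $\set{C}_{\MAC,\One}^{\linfb}(\bar{\mat{H}}_1,\bar{\mat{H}}_2;P)$.

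Finally I would bridge the gap between $\bar{\mat{H}}_i$ and $\trans{\mat{H}}_i$, which is the only step that is not a literal citation of the two-sided argument and hence the one point deserving care. I would invoke the one-sided analogue of Remark~\ref{rem:trans}, namely
\[
\set{C}_{\MAC,\One}^{\linfb}\left(\bar{\mat{H}}_1,\bar{\mat{H}}_2;P\right)
=\set{C}_{\MAC,\One}^{\linfb}\left(\trans{\mat{H}}_1,\trans{\mat{H}}_2;P\right),
\]
whose proof is identical to that of Remark~\ref{rem:trans}: pre- and post-multiplying every transmitted and every observed vector by the exchange matrix $\mat{E}$ converts one MAC into the other, and because this relabeling is linear and does not refer to which transmitter is granted feedback, it leaves the class of one-sided linear-feedback schemes invariant. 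Chaining the four displayed equalities completes the proof. I expect this last transplant of Remark~\ref{rem:trans} to be the main obstacle, though a mild one; everything else is a direct transcription of the proof of Theorem~\ref{thm:dual} with $\mat{B}_2^{\B}$ and $\mat{D}_2^{\B}$ frozen to zero.
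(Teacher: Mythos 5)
Your proposal is correct and follows essentially the same route as the paper: it combines Propositions~\ref{prop:bcone} and~\ref{prop:macone}, Proposition~\ref{prop:equal_regions} specialized via $\bar{\mat{0}}=\mat{0}$ to the pairs $(\mat{B}_1^{\B},\mat{0})$ and $(\mat{D}_1^{\B},\mat{0})$, and the one-sided analogue of Remark~\ref{rem:trans}. The only difference is that you spell out the bijection between the feasible parameter sets and the invariance of the trace constraint, which the paper leaves implicit.
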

\begin{proof}
Follows from Propositions~\ref{prop:bcone} and \ref{prop:macone} and Remark~\ref{rem:trans} which continues to hold in the one-sided feedback setup, and because  $\bar{\mat{0}}= \mat{0}$ and Propositon~\ref{prop:equal_regions} imply the following:

If $\mat{B}_1^{\B}=\bar{\mat{D}}_1^{\B}$, then \begin{IEEEeqnarray}{rCl}
\set{R}_\BC\left(\eta,\mat{B}_1^{\B},\mat{0},\mat{ H}_1^\B,\mat{ H}_2^\B;P\right)&=& 
\set{R}_\MAC\left(\eta,\mat{D}_1^{\B},\mat{0},\bBHone, \bBHtwo;P\right). \IEEEeqnarraynumspace
\end{IEEEeqnarray}
\end{proof}

\section{Extension II: $K\geq 2$ Users}
\label{sec:ext_many}
In this section we consider the $K$-user Gaussian BC and MAC with feedback, when $K\geq2$.
\subsection{$K\geq 2$-user MIMO Gaussian  BC with Feedback}
\begin{figure}[ht]
\centerline{
\begin{tikzpicture}[xscale=1,yscale=1]
\draw [thick] (0,2.1) rectangle (1.75,2.9);
\node at (5.25,2.55){$\vdots$};
\node[right] at (-0.1,2.55){Transmitter};
\node[left] at (0.08,2.55){$(M_1,\dots,M_K)$};
\draw [->][thick](1.75,2.5)--(2.4,3.05);
\draw [thick](2.5,3.1) circle [radius=0.13];
\node[above left] at (2.5,3.1){$\mat{H}_1$};
\node  at (2.5,3.1){$\times$};
\draw [->][thick](2.6,3.2)--(3,3.5)--(3.5,3.5);
\node[right] at (1.85,2.5){$\vect{x}_t$};
\draw [thick](3.63,3.5) circle [radius=0.13];
\node at (3.63,3.5){$+$};
\draw [->][thick](3.63,4)--(3.63,3.65);
\node [left] at (3.68,3.8){$\vect{Z}_{1,t}$};
\draw [->][thick](3.78,3.5)--(4.45,3.5);
\draw [thick] (4.45,3.1) rectangle (6.2,3.9);
\node[right] at (4.4,3.5){Receiver~1};
\node[right] at (6.1,3.5){$\hat{M}_1$};
\draw [->][thick](1.75,2.5)--(2.4,2);
\draw [thick](2.5,1.9) circle [radius=0.13];
\node[below left] at (2.5,1.8){$\mat{H}_K$};
\node  at (2.5,1.9){$\times$};
\draw [->][thick](2.6,1.83)--(3,1.5)--(3.5,1.5);
\draw [thick](3.63,1.5) circle [radius=0.13];
\node at (3.63,1.5){$+$};
\draw [->][thick](3.63,1)--(3.63,1.35);
\node [left] at (3.68,1.2){$\vect{Z}_{K,t}$};
\draw [->][thick](3.78,1.5)--(4.45,1.5);
\draw [thick] (4.45,1.1) rectangle (6.2,1.9);
\node[right] at (4.4,1.5){Receiver~$K$};
\node[right] at (6.1,1.5){$\hat{M}_K$};
\node[below] at (4.1,3.5){$\vect{Y}_{1,t}$};
\node[above] at (4.1,1.5){$\vect{Y}_{K,t}$};
\draw [->][dashed,thick][red](5.25,3.9)--(5.25,4.3)--(0.9,4.3)--(0.9,2.9);
\draw [->][dashed,thick][red](5.25,1.1)--(5.25,0.7)--(0.9,0.7)--(0.9,2.1); 
\end{tikzpicture}}
\caption{$K$-user MIMO Gaussian BC with feedback.\label{fig:bc_K}} 
\end{figure}
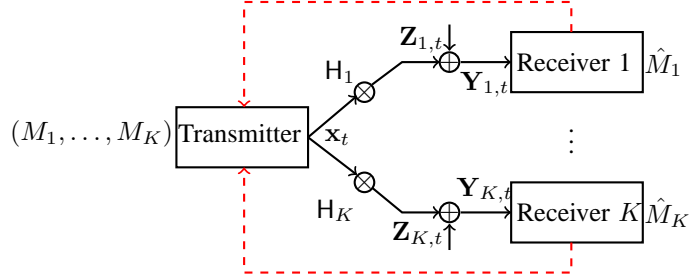
We consider the $K\geq 2$-receiver Gaussian BC with perfect output-feedback depicted in Figure~\ref{fig:bc_K}.
At each time $t\in \mathbb{N}$, if $\vect{x}_t$ denotes the  real vector-valued input symbol sent by the transmitter, 
Receiver~$i$, for $i\in\{1,\dots,K\},$ observes the real vector-valued channel output
\begin{IEEEeqnarray}{rCL}
\vect{Y}_{i,t}&=&\mat{H}_i \vect{x}_t+\vect{Z}_{i,t},
\end{IEEEeqnarray}
where $\mat{H}_i$ is a deterministic nonzero real $\nu_i$-by-$\kappa$ channel matrix known to transmitter and receivers, and the sequence of noises $\{(\vect{Z}_{1,t},\dots,\vect{Z}_{K,t})\}_{t=1}^n$ is a sequence of i.i.d.~centered Gaussian random vectors, each  of identity covariance matrix.

We will again restrict to linear-feedback schemes where the inputs, at each time $t\in \{1,\dots,n\}$, are generated as 
\begin{IEEEeqnarray}{rCl}
\vect{X}_t&=&\vect{W}_t+\sum_{i=1}^{K} \sum_{\tau=1}^{t-1} \mat{A}_{i,\tau,t}\vect{Y}_{i,\tau},  \label{eq:lin_comb_many_BC}
\end{IEEEeqnarray}
where $\vect{W}_t= \xi_t^{(n)}(M_1, \dots,M_K)$, for an arbitrary function~$\xi_t^{(n)}$, is thus a vector that only depends on the messages but not on the feedback.

Decodings, power constraint,  and the definitions of error probabilities and capacity regions are similar to Section~\ref{sec:BC} when we consider $K$ instead of two users. 

We denote the linear-feedback capacity region for this setup by $\set{C}_{\BC}^{\linfb}(\mat{H}_1, \dots,\mat{H}_K; P)$. It is unknown to date. Achievable regions are presented in \cite{OZAROW-LEUNG} and \cite{AMF12}.

Analogous to the definition of the regions $\set{R}_{\BC}$ in Definition~\ref{def:RBC}, we define  $\set{R}_{\BC}\left(\eta,\mat{B}_1^{\B},\dots,\mat{B}_K^{\B},\BHone,\dots,\BH_K;P\right)$ as the capacity region of the MIMO BC 
\begin{IEEEeqnarray}{rCl}
\label{eq:BCoutputsmany}
\vect{Y}_i&=&\BH_i \vect{U} +\BH_i \left(\sum_{j=1}^K \mat{B}_j^{\B} \vect{Z}_j\right)
+\vect{Z}_i,\quad i\in \{1,\dots,K\},\IEEEeqnarraynumspace
\end{IEEEeqnarray}
when the channel inputs $\vect{U}$ is average block-power constrained to 
\begin{equation}
\eta P -\sum_{j=1}^K \tr{\mat{B}_j^{\B} \trans{(\mat{B}_j^{\B})})}.
\end{equation}
\begin{proposition}\label{prop:bcmany}
\begin{IEEEeqnarray*}{rCl}
\set{C}_{\BC}^{\linfb}(\mat{H}_1,\ldots,\mat{ H}_K;P)&=&\textnormal{cl}\left(\bigcup_{\eta,\mat{B}_1^{\B},\dots,\mat{B}_K^{\B}}\frac{1}{\eta}\set{R}_{\BC}\left(\eta,\mat{B}_1^{\B},\dots,\mat{B}_K^{\B},\BHone,\dots,\BH_K;P\right)\right)\IEEEeqnarraynumspace
\end{IEEEeqnarray*}
where the union is over all positive integers $\eta$ and all strictly-lower block-triangular $(\eta \kappa)$-by-$(\eta \nu_i)$ matrices $\mat{B}_i^{\B}$ with blocks of sizes $\kappa\times \eta_i$, for $i\in \{1,\dots,K\}$, that satisfy $\sum_{j=1}^K \tr{\mat{B}_j^{\B} \trans{(\mat{B}_j^{\B})})} \leq \eta P.$
\end{proposition}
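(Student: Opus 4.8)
The plan is to reproduce the two-user argument of Proposition~\ref{prop:bc} almost verbatim, with every sum over the two receivers replaced by a sum over $i\in\{1,\dots,K\}$. For the achievability direction I would fix a positive integer $\eta$, partition the blocklength into $n'=\lfloor n/\eta\rfloor$ subblocks of length $\eta$, and stack the $\eta$ inputs, outputs, and noises of one subblock into block-vectors $\vect{X}$, $\vect{Y}_i$, $\vect{Z}_i$, collecting the inner-code coefficients $\{\mat{A}_{i,\tau,\ell}\}$ into strictly-lower block-triangular matrices $\mat{A}_i^{\B}$ exactly as in~\eqref{eq:AB}. The transmitter then produces
\begin{equation}
\vect{X}=\Bigl(\mat{I}-\sum_{i=1}^K\mat{A}_i^{\B}\BH_i\Bigr)\vect{U}+\sum_{i=1}^K\mat{A}_i^{\B}\vect{Y}_i,
\end{equation}
where $\vect{U}$ is the outer codeword for the subblock. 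Substituting $\vect{Y}_i=\BH_i\vect{X}+\vect{Z}_i$ and solving for $\vect{X}$ reproduces the clean form $\vect{X}=\vect{U}+\sum_{i=1}^K\mat{B}_i^{\B}\vect{Z}_i$ with $\mat{B}_i^{\B}\triangleq(\mat{I}-\sum_{j=1}^K\mat{A}_j^{\B}\BH_j)^{-1}\mat{A}_i^{\B}$, and reading off the outputs yields precisely the induced memoryless $K$-user MIMO BC of~\eqref{eq:BCoutputsmany}.

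Three routine verifications then close this direction. First, since each $\mat{A}_j^{\B}$ is strictly-lower block-triangular and $\BH_j=\mat{I}_\eta\otimes\mat{H}_j$ is block-diagonal, the product $\sum_j\mat{A}_j^{\B}\BH_j$ is strictly-lower block-triangular, so $\mat{I}-\sum_j\mat{A}_j^{\B}\BH_j$ has identity diagonal blocks and is invertible, making both the solution for $\vect{X}$ and the definition of $\mat{B}_i^{\B}$ legitimate. Second, the $K$-user analog of Note~\ref{lm1}---that $(\mat{A}_1^{\B},\dots,\mat{A}_K^{\B})\mapsto(\mat{B}_1^{\B},\dots,\mat{B}_K^{\B})$ is a bijection of the product of strictly-lower block-triangular matrix spaces onto itself, proved by the same triangular back-substitution---lets me index the union directly by the $\mat{B}_i^{\B}$. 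Third, reading the power of $\vect{X}$ off $\vect{X}=\vect{U}+\sum_i\mat{B}_i^{\B}\vect{Z}_i$ shows that the block-power constraint~\eqref{eq:powerconstraintBC} is met as soon as $\sum_i\tr{\mat{B}_i^{\B}\trans{(\mat{B}_i^{\B})}}\le\eta P$ and $\vect{U}$ obeys the residual budget~\eqref{eq:bc_mimo_pc}. Choosing the outer code $\{f^{(n')},g_1^{(n')},\dots,g_K^{(n')}\}$ to be capacity-achieving for the nofeedback $K$-user MIMO Gaussian BC~\eqref{eq:BCoutputsmany}---a channel covered by Weingarten, Steinberg, and Shamai~\cite{WSS06}---then delivers the rate region $\frac{1}{\eta}\set{R}_{\BC}(\eta,\mat{B}_1^{\B},\dots,\mat{B}_K^{\B},\BHone,\dots,\BH_K;P)$, and the closed union over $\eta$ and the admissible $\mat{B}_i^{\B}$ gives the inner bound.

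For the converse I would argue, as in Section~\ref{sec:con_prop_bc}, that every linear-feedback scheme already has the above structure. Given an arbitrary code of blocklength $n$ whose inputs obey~\eqref{eq:lin_comb_many_BC}, I rewrite the inputs in the equivalent noise-driven representation (the $K$-user version of the passage from~\eqref{eq:lin_com_1_bc} to~\eqref{eq:lin_com_2_bc}), set $\eta=n$ so that $n'=1$, and stack the entire block. The message-dependent part then plays the role of the single outer codeword $\vect{U}$, the feedback part is captured by strictly-lower block-triangular matrices $\mat{B}_i^{\B}$, and the receivers observe exactly the induced channel~\eqref{eq:BCoutputsmany}. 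Hence the code is a nofeedback code for this induced $K$-user MIMO BC whose input power is at most~\eqref{eq:bc_mimo_pc}, so its rate vector lies in $\frac{1}{n}\set{R}_{\BC}(n,\mat{B}_1^{\B},\dots,\mat{B}_K^{\B},\BHone,\dots,\BH_K;P)$; taking the closed union over all schemes shows the left-hand region is contained in the right-hand one and matches the achievability bound.

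The only genuinely new ingredient relative to Proposition~\ref{prop:bc} is that the outer code and the converse invoke the nofeedback capacity of the $K$-user (rather than two-user) MIMO Gaussian BC, which remains available from~\cite{WSS06}; the rest is purely notational. Accordingly, I expect the main point requiring care to be the power-budget bookkeeping in the converse---verifying that the feedback-induced term $\sum_i\tr{\mat{B}_i^{\B}\trans{(\mat{B}_i^{\B})}}$ and the residual power~\eqref{eq:bc_mimo_pc} add up consistently once $\eta=n$, and that the induced channel is i.i.d.\ across subblocks so that the single-letter nofeedback converse applies block-wise. No step poses a conceptual obstacle beyond those already resolved in the two-user proof.
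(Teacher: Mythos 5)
Your proposal is correct and follows exactly the route the paper intends: it extends the two-user inner/outer-code construction of Section~\ref{sec:schemeBC} (including the $K$-user analog of Note~\ref{lm1} and the power bookkeeping) and the converse of Section~\ref{sec:con_prop_bc} with $\eta=n$, which is precisely what the paper's one-line proof ("modify the two-user scheme and converse, details omitted") asks for. The only cosmetic remark is that in the converse there is a single subblock ($n'=1$), so the i.i.d.-across-subblocks concern you flag only arises in the achievability direction.
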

\begin{proof}
Similar to the proof of Proposition~\ref{prop:bc} if  the linear-feedback schemes described in Section~\ref{sec:schemeBC} and the converse are modified so as to allow for an arbitrary number $K\geq 2$ of users.
Details omitted.
\end{proof}

\subsection{$K\geq 2$-user MIMO Gaussian  MAC with Feedback}
\begin{figure}[ht]
\centerline{
\begin{tikzpicture}[xscale=1,yscale=1]
\draw [thick] (-0.6,3.1) rectangle (1.6,3.9);
\node[right] at (-0.65,3.5){Transmitter~1};
\node at (0.5,2.5){$\vdots$};
\node[left] at (-0.55,3.5){$M_1$};
\draw [->][thick](1.6,3.5)--(2.35,3.5);
\draw [->][thick](2.6,3.5)--(3.05,2.6);
\draw [thick](2.45,3.5) circle [radius=0.13];
\node[above right] at (2.5,3.5){$\trans{\mat{H}_1}$};
\node [right] at (2.2,3.5){$\times$};
\node[above] at (2,3.5){$\vect{x}_{1,t}$};
\draw [->][thick](1.6,1.5)--(2.35,1.5);
\draw [->][thick](2.6,1.5)--(3.05,2.44);
\draw [thick](2.45,1.5) circle [radius=0.13];
\node [right] at (2.2,1.5){$\times$};
\node[below right] at (2.5,1.5){$\trans{\mat{H}_K}$};
\node[below] at (2,1.5){$\vect{x}_{K,t}$};
\draw [thick] (-0.6,1.1) rectangle (1.6,1.9);
\node[right] at (-0.65,1.5){Transmitter~$K$};
\node[left] at (-0.55,1.5){$M_K$};
\draw [thick] (4,2.1) rectangle (5.5,2.9);
\node[right] at (4,2.55){Receiver};
\node[right] at (5.4,2.55){$(\hat{M}_1,\dots,\hat{M}_K)$};
\draw [->][thick](3.3,2.5)--(4,2.5);
\node[below] at (3.7,2.5){$\vect{Y}_t$};
\draw [thick](3.15,2.5) circle [radius=0.13];
\node at (3.15,2.5){$+$};
\draw [->][thick](3.15,3.18)--(3.15,2.62);
\node [right] at (3.15,3.02){$\vect{Z}_t$};
\draw [->][dashed,thick][red](4.75,2.9)--(4.75,4.3)--(0.55,4.3)--(0.55,3.9);
\draw [->][dashed,thick][red](4.75,2.1)--(4.75,0.7)--(0.55,0.7)--(0.55,1.1);
\end{tikzpicture}}
\caption{$K$-user MIMO Gaussian MAC with feedback.\label{fig:mac_K}}
\end{figure}
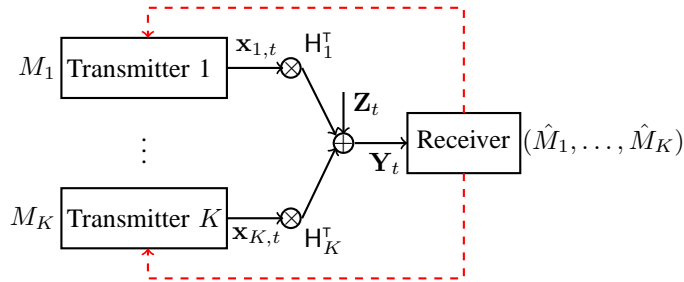

We consider the $K\geq 2$-transmitter Gaussian MAC with perfect output-feedback depicted in Figure~\ref{fig:mac_K}.
At each time $t\in \mathbb{N}$, if $\vect{x}_{i,t}$, for $i\in\{1,\dots,K\}$ denotes the  real vector-valued input symbol sent by Transmitter~$i$, the  
receiver observes the real vector-valued channel output
\begin{IEEEeqnarray}{rCL}
\vect{Y}_{t}&=&\sum_{i=1}^K\trans{\mat{H}_i} \vect{x}_{i,t}+\vect{Z}_{t},
\end{IEEEeqnarray}
where $\mat{H}_i$, for $i\in\{1,\dots,K\}$, is a constant nonzero real $\nu_i$-by-$\kappa$ channel matrix and the sequence of noises $\{\vect{Z}_{t}\}_{t=1}^n$  is a sequence of i.i.d.~centered Gaussian random vectors of identity covariance matrices.

We will again restrict to linear-feedback schemes where the inputs at Transmitter~$i$, for $i\in\{1,\dots,K\}$, are generated as 
\begin{IEEEeqnarray}{rCl}
\vect{X}_{i,t}&=& \vect{W}_{i,t} + \sum_{i=1}^K\sum_{\tau=1}^{t-1} \mat{C}_{i,\tau,t} \vect{Y}_\tau,\IEEEeqnarraynumspace
\end{IEEEeqnarray}
where $\vect{W}_{i,t}= \xi_{i,t}^{(n)}(M_i)$ 
for an arbitrary function~$\xi_{i,t}^{(n)}$ is thus a vector that only depends on the message $M_i$ but not on the feedback. 

Decoding, power constraint,  and the definitions of error probabilities and capacity regions are as described in Section~\ref{sec:MAC} extended to $K\geq2$ users. 
The linear-feedback capacity region is denoted by $\set{C}_{\MAC}^{\linfb}\left(\trans{\mat{H}}_1, \dots,\trans{\mat{H}}_K; P\right)$. It is unknown when $K>2$.  

We will be specially interested in the SISO case ($\nu_i=\kappa=1$) when the channel matrices $\mat{H}_1, \ldots, \mat{H}_K$ reduce to scalars $h_1,\ldots, h_K$. We denote the linear-feedback capacity region for this case by $\C_{\MAC,\SISO,\Sigma}^{\linfb}(h_1,\ldots, h_K;P)$.  Also this SISO capacity region is unknown when $K>2$. However, for equal channel coefficients $h_1=\ldots, h_K=h$, the results by Kramer~\cite{Kramer02} and Ardenistazadeh et al. \cite{AWKJ} combined with a symmetry argument as presented in Appendix~\ref{app:cor} immediately yield: 
\begin{equation}\label{eq:sumK}
\C_{\MAC,\SISO,\Sigma}^{\linfb}(h,\ldots, h;P) =\frac12 \log\left(1+P \phi(K,P)\right), 
\end{equation}
where $\phi(K,P)$ is the unique solution in $[1,K]$ to the following equation in $\phi$:
\begin{equation}
\left(1+P\phi\right)^{K-1}= \left(1+\frac{P}{K}\phi (K-\phi)\right).
\end{equation}

Analogous to the definition of the regions $\set{R}_{\MAC}$ in Definition~\ref{def:RMAC}, we define $\set{R}_{\MAC}\left(\eta,\mat{D}_1^{\B},\dots,\mat{D}_K^{\B},\tBHone,\dots,\trans{\mat{(\mat{H}_K^{\B})}};P\right)$ as the capacity region of the MIMO MAC 
\begin{IEEEeqnarray}{lCl}
\vect{Y}&=&\left(\mat{I}+\sum_{i=1}^K\trans{(\mat{H}_i^{\B})}\mat{D}_i^{\B}\right)\cdot \left(
\sum_{i=1}^K\trans{(\mat{H}_i^{\B})} \mat{Q}_i^{-1} \vect{U}_i +\vect{Z}\right),\label{eq:MACoutputmany}\IEEEeqnarraynumspace
\end{IEEEeqnarray}
when the inputs $\vect{U}_1,\dots,\vect{U}_K$ are average block-sumpower constrained to
\begin{equation}
\eta P- \sum_{j=1}^K \tr{\mat{D}_j^{\B} \trans{(\mat{D}_j^{\B})}},
\end{equation}
where $\mat{Q}_i$, for $i\in \{1,\dots, K\},$ is the unique positive square root of 
\begin{IEEEeqnarray}{lCl}
\mat{M}_i&=&\trans{(\mat{I}+\mat{D}_i^{\B}\trans{(\mat{H}_i^{\B})})}(\mat{I}+ \mat{D}_i^{\B}\trans{(\mat{H}_i^{\B})})+\sum_{j=1;j\neq i}^K\trans{(\mat{D}_j^{\B}\trans{(\mat{H}_j^{\B})})}(\mat{D}_j^{\B} \trans{(\mat{H}_j^{\B})}).\IEEEeqnarraynumspace
\end{IEEEeqnarray}

\begin{proposition}\label{prop:macmany}
\begin{IEEEeqnarray*}{l}
\set{C}_{\MAC}^{\linfb}(\trans{\mat{ H}}_1,\dots,\trans{\mat{ H}}_K;P)=
\textnormal{cl}\left(\bigcup_{\eta,\mat{D}_1^{\B},\dots,\mat{D}_K^{\B}}\frac{1}{\eta}\set{R}_{\MAC}\left(\eta,\mat{D}_1^{\B},\dots,\mat{D}_K^{\B},\tBHone,\dots,\trans{\mat{(\mat{H}_K^{\B})}};P\right)\right),
\end{IEEEeqnarray*}
where the union is over all positive integers $\eta$ and all strictly-lower block-triangular $(\eta \nu_i)$-by-$(\eta \kappa)$ matrices $\mat{D}_i^{\B}$ of blocks with sizes $\nu_i\times \kappa$, for $i\in \{1,\dots,K\}$, that satisfy
$\sum_{j=1}^K \tr{\mat{D}_j^{\B} \trans{(\mat{D}_j^{\B})}}\leq \eta P$.
\end{proposition}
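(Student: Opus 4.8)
The plan is to follow the proof of Proposition~\ref{prop:mac} almost verbatim, extending every step from two to $K$ users, and to split the argument into an achievability part (which rests on the coding construction already set up around~\eqref{eq:MACoutputmany}) and a converse part (which mirrors the converse referenced in Section~\ref{sec:conv_prop_mac}).

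For achievability, I would extend the inner code of Section~\ref{sec:schemeMAC} so that Transmitter~$i$, for $i\in\{1,\dots,K\}$, sends $\vect{X}_i=\mat{Q}_i^{-1}\vect{U}_i+\mat{C}_i^{\B}\vect{Y}$ over each subblock of length $\eta$, with the block relation $\vect{Y}=\sum_{i=1}^K\trans{(\mat{H}_i^{\B})}\vect{X}_i+\vect{Z}$. Since $\mat{I}-\sum_{i=1}^K\trans{(\mat{H}_i^{\B})}\mat{C}_i^{\B}$ is block-lower-triangular with identity diagonal, it is invertible, and the identity $\big(\mat{I}-\sum_i\trans{(\mat{H}_i^{\B})}\mat{C}_i^{\B}\big)^{-1}=\mat{I}+\sum_i\trans{(\mat{H}_i^{\B})}\mat{D}_i^{\B}$, with the $K$-user analogue of~\eqref{eq:CD}, turns the subblock into a single use of the feedback-free MIMO MAC~\eqref{eq:MACoutputmany}. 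The only computation that needs genuine checking is the $K$-user version of Lemma~\ref{lem1}: substituting the codeword coefficients of~\eqref{eq:blockinputs} into $\sum_i\mathbf{E}[\|\vect{X}_i\|^2]$ and using that the $\vect{U}_i$ are independent (they depend on distinct messages) while $\vect{Z}$ has identity covariance, the matrix multiplying $\mathbf{E}[\vect{U}_k\trans{\vect{U}_k}]$ collapses to $\mat{Q}_k^{-1}\mat{M}_k\mat{Q}_k^{-1}=\mat{I}$ precisely because $\mat{Q}_k$ is the positive square root of $\mat{M}_k$ (which is itself positive definite, as $\mat{I}+\mat{D}_i^{\B}\trans{(\mat{H}_i^{\B})}$ is invertible). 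This yields $\sum_i\mathbf{E}[\|\vect{X}_i\|^2]=\sum_i\mathbf{E}[\|\vect{U}_i\|^2]+\sum_i\tr{\mat{D}_i^{\B}\trans{(\mat{D}_i^{\B})}}$, so the budget~\eqref{eq:new_mac_pc} is exactly the input power left for the outer code. Extending Note~\ref{lm2} to $K$ users (the map $\{\mat{C}_i^{\B}\}\mapsto\{\mat{D}_i^{\B}\}$ remains a bijection of $K$-tuples of strictly-lower block-triangular matrices) lets us optimize directly over $\{\mat{D}_i^{\B}\}$, and letting the outer code achieve the no-feedback capacity $\set{R}_{\MAC}$ of~\eqref{eq:MACoutputmany} then delivers every rate tuple in the claimed union.

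For the converse, I would take $\eta=n$ and view an entire block of $n$ channel uses of an arbitrary linear-feedback code as a \emph{single} use of a vector channel. Collecting the feedback matrices into $\mat{C}_i^{\B}$, setting $\vect{U}_i=\mat{Q}_i\vect{W}_i$ (which depends only on $M_i$), and performing the same fixed-point inversion shows that the stacked block output is exactly~\eqref{eq:MACoutputmany}, a memoryless MIMO MAC \emph{without} feedback whose inputs are functions of the messages alone. The power identity from the extended Lemma~\ref{lem1} bounds $\sum_i\mathbf{E}[\|\vect{U}_i\|^2]$ by $nP-\sum_i\tr{\mat{D}_i^{\B}\trans{(\mat{D}_i^{\B})}}$, matching the constraint in Definition~\ref{def:RMAC}. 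Applying the no-feedback MIMO MAC converse of~\cite{CHENG-VERDU93} (Fano's inequality together with the fact that Gaussian inputs maximize the relevant log-determinant mutual-information terms) then places the rate tuple in $\frac{1}{\eta}\set{R}_{\MAC}(n,\mat{D}_1^{\B},\dots,\mat{D}_K^{\B},\tBHone,\dots,\trans{(\mat{H}_K^{\B})};P)$, hence in the asserted union; taking $n\to\infty$ together with the convex closure completes the inclusion.

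I expect the converse to be the main obstacle. The delicate points there are justifying that every linear-feedback code genuinely reduces to a feedback-free use of~\eqref{eq:MACoutputmany} (invertibility of $\mat{I}-\sum_i\trans{(\mat{H}_i^{\B})}\mat{C}_i^{\B}$ and well-definedness of each $\mat{Q}_i$ hold automatically, but the bookkeeping must be done uniformly in $n$), and checking that the power identity continues to hold for arbitrary, possibly nonzero-mean, message-dependent $\vect{U}_i$ so that the budget~\eqref{eq:new_mac_pc} is respected; the cross terms vanish because the $\vect{U}_i$ are independent and may be taken zero-mean without loss of rate. The achievability, by contrast, is routine once the square-root normalization in the extended Lemma~\ref{lem1} and the bijectivity in the extended Note~\ref{lm2} are in place.
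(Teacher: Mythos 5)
Your proposal is correct and follows essentially the same route as the paper, which itself only states that the proof is analogous to that of Proposition~\ref{prop:mac}: the same inner/outer-code achievability based on the $K$-user extensions of Lemma~\ref{lem1} and Note~\ref{lm2}, and the same block-stacking converse that uses Fano's inequality and reduces a blocklength-$n$ linear-feedback code to a single use of the feedback-free channel~\eqref{eq:MACoutputmany}. The only (immaterial) deviation is the final converse step, where you place the mutual informations $I(M_i;\vect{Y}^{(n)})$ inside $\set{R}_{\MAC}$ via the Gaussian maximum-entropy converse of~\cite{CHENG-VERDU93} rather than via the paper's random-coding achievability argument with auxiliaries $\Theta_i=M_i$; both are valid, and your explicit remark that the $\vect{U}_i$ may be taken zero-mean is a point the paper leaves implicit.
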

\begin{proof}
Analogous to the proof of Proposition~\ref{prop:mac}, but where the linear-feedback schemes described in Section~\ref{sec:schemeMAC} and the converse need to be modified so as to allow for an arbitrary number $K\geq 2$ of users. 
Details omitted.
\end{proof}

\subsection{Duality Result}
Our main result on duality can  also be extended to the MIMO BC and MAC with more than two users.
\begin{theorem}
\label{thm:dual_many}
 The linear-feedback capacity regions of the $K\geq 2$-user MIMO Gaussian BC with channel matrices $\mat{H}_1,\dots,\mat{H}_K$ under sum-power constraint $P$ and the $K\geq 2$-user MIMO Gaussian MAC with channel matrices $\trans{\mat{H}_1},\dots,\trans{\mat{H}_K}$ under sum-power constraint $P$ coincide:
 \begin{equation}  
  \set{C}_{\BC}^{\linfb}\left(\mat{H}_1,\dots,\mat{H}_K;P\right)
= \set{C}_{\MAC}^{\linfb}\left(\trans{\mat{H}_1},\dots,\trans{\mat{H}_K};P\right).
\end{equation}
\end{theorem}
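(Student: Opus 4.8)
The plan is to lift the proof of the two-user Theorem~\ref{thm:dual} verbatim in the number of users, replacing each of its three structural ingredients by its $K$-user counterpart. First I would invoke the multi-letter characterizations of Propositions~\ref{prop:bcmany} and~\ref{prop:macmany}: the former writes $\set{C}_{\BC}^{\linfb}(\mat{H}_1,\dots,\mat{H}_K;P)$ as the closure of $\bigcup \frac{1}{\eta}\set{R}_{\BC}(\eta,\mat{B}_1^{\B},\dots,\mat{B}_K^{\B},\BHone,\dots,\BH_K;P)$ over all admissible tuples $(\mat{B}_1^{\B},\dots,\mat{B}_K^{\B})$, and the latter gives the symmetric expression for the MAC linear-feedback capacity region in terms of the regions $\set{R}_{\MAC}(\eta,\mat{D}_1^{\B},\dots,\mat{D}_K^{\B},\cdot\,;P)$. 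The goal is then to match the two unions term by term through the reverse-image operator.

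The crux is a $K$-user analogue of Proposition~\ref{prop:equal_regions}: upon setting $\mat{B}_i^{\B}=\bar{\mat{D}}_i^{\B}$ for every $i\in\{1,\dots,K\}$, I claim
\[
\set{R}_{\BC}(\eta,\mat{B}_1^{\B},\dots,\mat{B}_K^{\B},\BHone,\dots,\BH_K;P)
=\set{R}_{\MAC}(\eta,\mat{D}_1^{\B},\dots,\mat{D}_K^{\B},\bBHone,\dots,\bBH_K;P).
\]
This is precisely the classical nofeedback MIMO MAC--BC duality of~\cite{VJG03,VisTse,WSS06} applied to the two feedback-free enhanced channels~\eqref{eq:BCoutputsmany} and~\eqref{eq:MACoutputmany}: using $\bBH_i=\overline{\BH_i}$ together with the product and inverse rules of point~2 of Note~\ref{note:bar}, the effective enhanced BC channel becomes (up to the exchange-matrix conjugation built into $\bar{\cdot}$) the transpose of the effective enhanced MAC channel. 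I would establish this by repeating the derivation of Section~\ref{sec:proofprop3}, which only uses the nofeedback duality; since that duality holds for every number of users, nothing in the argument is special to $K=2$.

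The remaining steps are book-keeping. Point~3 of Note~\ref{note:bar} shows that $\bar{\cdot}$ maps the strictly-lower block-triangular matrices of block-size $\kappa\times\nu_i$ bijectively onto those of block-size $\nu_i\times\kappa$, and a short computation using $\mat{E}\mat{E}=\mat{I}$ gives $\tr{\bar{\mat{D}}_i^{\B}\trans{(\bar{\mat{D}}_i^{\B})}}=\tr{\mat{D}_i^{\B}\trans{(\mat{D}_i^{\B})}}$, so the sum-trace power constraint is preserved and the involution $\mat{D}_i^{\B}\mapsto\bar{\mat{D}}_i^{\B}$ re-indexes the BC union as the MAC union. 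This yields $\set{C}_{\BC}^{\linfb}(\mat{H}_1,\dots,\mat{H}_K;P)=\set{C}_{\MAC}^{\linfb}(\bar{\mat{H}}_1,\dots,\bar{\mat{H}}_K;P)$, where the right-hand side is read through Proposition~\ref{prop:macmany} applied to the MAC with channel matrices $\bar{\mat{H}}_i$, noting that its block channel matrix is $\mat{I}_\eta\otimes\bar{\mat{H}}_i=\bBH_i$. Finally I would invoke the $K$-user extension of Remark~\ref{rem:trans}---whose exchange-matrix change-of-basis argument is oblivious to the number of users---to replace each $\bar{\mat{H}}_i$ by $\trans{\mat{H}}_i$, giving the claimed identity.

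I expect the main obstacle to be the $K$-user version of Proposition~\ref{prop:equal_regions}. Although the algebra parallels the two-user case, one must verify that the enhanced $K$-user BC and MAC remain exact transposes after the substitution $\mat{B}_i^{\B}=\bar{\mat{D}}_i^{\B}$; in particular, that the whitening matrices $\mat{Q}_i$ and the noise-shaping terms $\BH_i\big(\sum_{j}\mat{B}_j^{\B}\vect{Z}_j\big)$ correspond correctly under the reverse-image operator for all $K$ cross terms simultaneously. Everything else is a routine lift of the two-user argument.
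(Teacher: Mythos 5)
Your proposal is correct and follows essentially the same route as the paper: the paper's own proof of Theorem~\ref{thm:dual_many} likewise combines Propositions~\ref{prop:bcmany} and~\ref{prop:macmany}, the $K$-user extension of Remark~\ref{rem:trans}, and a $K$-user extension of Proposition~\ref{prop:equal_regions} justified by the fact that nofeedback MAC--BC duality holds for any number of users. Your additional book-keeping (trace preservation under $\bar{\cdot\,}$ and the bijection on strictly-lower block-triangular matrices) correctly fills in the details the paper leaves implicit.
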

\proof{The proof follows by Proposition~\ref{prop:bcmany} and~\ref{prop:macmany}, Remark~\ref{rem:trans} which continues to hold for this setup, and Proposition~\ref{prop:equal_regions} which can be extended to $K\geq2$ users since the nofeedback MAC-BC duality holds for $K\geq2$ users~\cite{VJG03}.}

Specializing this theorem to the SISO case under equal channel gains $h_1=\ldots, h_K=h$, we obtain: 
\begin{corollary}
\begin{equation}\label{eq:maxsymsumrateK}
 C_{\BC,\SISO,\Sigma}^{\linfb}(h,\ldots, h;P) = C_{\MAC,\SISO,\Sigma}^{\linfb}(h,\ldots, h;P) 
 \end{equation}
where a computable expression for $C_{\MAC,\SISO,\Sigma}^{\linfb}(h,\ldots, h;P)$ is given in~\eqref{eq:sumK}.
\end{corollary}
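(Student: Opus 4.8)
The plan is to derive the corollary directly from Theorem~\ref{thm:dual_many} by specializing it to the scalar case and then reading off the sum-rate. First I would note that in the SISO setup we have $\nu_i=\kappa=1$, so each channel matrix $\mat{H}_i$ collapses to the scalar coefficient $h$, and crucially the transpose is vacuous: $\trans{\mat{H}_i}=h$ as well. Hence the dual MAC appearing in Theorem~\ref{thm:dual_many} carries exactly the same scalar gains $h_1=\dots=h_K=h$, and the theorem specializes to the equality of the two full linear-feedback capacity regions
\[
\set{C}_{\BC}^{\linfb}(h,\ldots,h;P)=\set{C}_{\MAC}^{\linfb}(h,\ldots,h;P).
\]

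Next I would observe that the linear-feedback sum-capacity is merely a functional of the rate region: by definition $C_{\BC,\SISO,\Sigma}^{\linfb}(h,\ldots,h;P)$ is the supremum of $R_1+\dots+R_K$ over $\set{C}_{\BC}^{\linfb}(h,\ldots,h;P)$, and likewise $C_{\MAC,\SISO,\Sigma}^{\linfb}(h,\ldots,h;P)$ is the supremum of the same linear functional over $\set{C}_{\MAC}^{\linfb}(h,\ldots,h;P)$. Since these two regions coincide by the previous paragraph, the suprema of one and the same linear functional over them are identical, which establishes \eqref{eq:maxsymsumrateK}. No extra closure or continuity argument is needed, because both regions are already closed by construction (they are written as convex closures in Propositions~\ref{prop:bcmany} and~\ref{prop:macmany}).

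Finally, to supply the promised computable expression I would invoke the previously recorded identity \eqref{eq:sumK}, which evaluates $C_{\MAC,\SISO,\Sigma}^{\linfb}(h,\ldots,h;P)$ in closed form through the fixed point $\phi(K,P)$; this follows from Kramer's~\cite{Kramer02} and Ardestanizadeh~\emph{et al.}'s~\cite{AWKJ} characterization of the symmetric MAC together with the symmetry reduction of Appendix~\ref{app:cor}. Combining that value with the region equality yields the stated formula for the BC sum-capacity.

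The only step requiring genuine care is the passage from matrices to scalars, namely verifying that the transpose operation is trivial in the scalar case (so the dual MAC really carries gain $h$, and not some reflected or inverted quantity) and that the symmetry hypothesis $h_1=\dots=h_K=h$ is preserved under duality. Once this is confirmed the result is immediate; I do not anticipate a substantive obstacle, since all of the content is inherited from Theorem~\ref{thm:dual_many} and from the known symmetric MAC sum-capacity.
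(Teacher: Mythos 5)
Your proposal is correct and follows essentially the same route as the paper: the paper also obtains this corollary by directly specializing Theorem~\ref{thm:dual_many} to the SISO case with equal gains (where the transpose is vacuous), noting that equal regions have equal sum-rate suprema, and then citing the already-established expression~\eqref{eq:sumK} for the symmetric MAC. The additional care you take in checking that duality preserves the scalar gain $h$ and the symmetry hypothesis is sound but does not change the argument.
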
 
The achievability of the sum-rate in~\eqref{eq:maxsymsumrateK} for the $K$-user scalar Gaussian BC with equal channel gains was already established by the control-theory-inspired scheme in \cite{AMF12}. Our result here establishes that for the symmetric scalar Gaussian BC  and arbitrary number of users $K>2$ this scheme is indeed  sum-rate optimal among all linear-feedback schemes.

\section{Proofs}
\label{sec:proofs}

\subsection{Converse to Proposition \ref{prop:bc}}
\label{sec:con_prop_bc}
We wish to prove
\begin{IEEEeqnarray}{rCl}
\label{eq:conv_bc}
\set{C}_{\BC}^{\linfb}\left(\mat{H}_1,\mat{H}_2;P\right)
 &\subseteq& \textnormal{cl}\left( \bigcup_{(\eta,\mat{B}_1^\B, \mat{B}_2^\B)}   \frac{1}{\eta}  \mathcal{R}_{\BC}\left(\eta,\mat{B}_1^\B,\mat{ B}_2^\B,\mat{H}_1^\B,\mat{H}_2^\B;P\right) \right). \IEEEeqnarraynumspace
\end{IEEEeqnarray}
Fix $(R_1,R_2)\in\set{C}_{\BC}^\linfb\left(\mat{H}_1,\mat{H}_2;P\right)$ and for these rates and for each blocklength $n$ 
we fix encoding and decoding functions $\tilde{ \xi}^{(n)}, \phi_1^{(n)}, \phi_2^{(n)}$ and linear-feedback matrices $\{\mat{B}_{i,\tau,\ell}^{(n)}\}$ 
such that the sequence of probabilities of error $P_{\e,\BC}^{(n)}\to 0$ as $n\to\infty$ and the power constraint~\eqref{eq:powerconstraintBC} is satisfied for each $n$. (Thus, we use the form in \eqref{eq:lin_com_2_bc} to describe the channel inputs.)

Applying Fano's inequality, we obtain that for each $i\in\{1,2\}$ and for each positive integer $n$,  
\begin{IEEEeqnarray}{rCl}
n R_i
&\leq& I(M_i;\vect{Y}_i^{(n)})+\epsilon_n,
\end{IEEEeqnarray}
where $\frac{\epsilon_n}{n}\to 0$ as $n \to \infty$ and where $\vect{Y}_i^{(n)}$ denotes the $n\nu_i$-dimensional column-vector that is obtained by stacking on top of each other all the $n$ vectors observed at Receiver~$i$ when the blocklength-$n$ scheme is applied. 

Letting $n \to \infty$, we have
 \begin{IEEEeqnarray}{rCl}
 R_i&\leq&\varlimsup_{n\to \infty}  \frac1n I(M_i;\vect{Y}_i^{(n)}),\quad i \in \{1,2\}.\label{eq:rconv1}
 \end{IEEEeqnarray}
 
Since the RHS of \eqref{eq:conv_bc} is closed, it suffices to prove that for all $\delta >0,$ the pair $(R'_1,R'_2),$ 
\begin{subequations}\label{eq:Rdelta}
\begin{IEEEeqnarray}{rCl}
R'_1&\triangleq& \eta (R_1-\delta),\\
R'_2&\triangleq& \eta (R_2-\delta),
\end{IEEEeqnarray}
\end{subequations}
lies in $\mathcal{R}_{\BC}\left(\eta,\mat{B}_1^\B,\mat{B}_2^\B,\mat{H}_1^\B,\mat{H}_2^\B;P\right)$ for some positive integer~$\eta$ and strictly-lower block-triangular $\eta\kappa$-by-$\eta\nu_1$ and $\eta\kappa$-by-$\eta\nu_2$ matrices $\mat{B}_1^\B$ and $\mat{B}_2^\B$ of block sizes $\kappa\times\nu_1$ and $\kappa\times\nu_2$. 

By~\eqref{eq:rconv1} and \eqref{eq:Rdelta}, there exists a finite blocklength $n$ such that
\begin{subequations}
\label{eq:BC_r_prim}
\begin{IEEEeqnarray}{rCl}
R'_1&\leq&I(M_1;\vect{Y}_1^{(n)}), \\ 
R'_2&\leq& I(M_2;\vect{Y}_2^{(n)}).
\end{IEEEeqnarray}
\end{subequations}
In the sequel, let $n$ be so that \eqref{eq:BC_r_prim} holds. Also, based on the parameters $\{\mat{B}_{i,\tau,\ell}^{(n)}\}$ of the blocklength-$n$ scheme, define
\begin{IEEEeqnarray}{rCl}
\mat{B}_i^{\B}&=&\begin{bmatrix}   
\mat{0}&& \dots &&\mat{0} \\ 
\mat{B}_{i,1,2}^{(n)} &\mat{0}  \\
\mat{B}_{i,1,3}^{(n)}&\mat{B}_{i,2,3}^{(n)} &\mat{0} \\ 
\vdots&&&\ddots\\
\mat{B}_{i,1,n}^{(n)}&\mat{B}_{i,2,n}^{(n)}&\dots&\mat{B}_{i,(n-1),n}^{(n)}&\mat{0}
\end{bmatrix},\quad i \in \{1,2\}.
\end{IEEEeqnarray}
The corresponding channel outputs $\vect{Y}_1^{(n)}$ and $\vect{Y}_2^{(n)}$ are
\begin{subequations}
\label{eq:BC_outputs}
\begin{IEEEeqnarray}{rCl}
\vect{Y}_1^{(n)}& = \BHone \vect{\tilde{W}}^{(n)} + (\mat{I}+\BHone\mat{B}_1^{\B}) \vect{Z}_1^{(n)} + \BHone \mat{B}_2^{\B} \vect{Z}_2^{(n)},\IEEEeqnarraynumspace\\
\vect{Y}_2^{(n)}& = \BHtwo \vect{\tilde{W}}^{(n)} + (\mat{I}+\BHtwo\mat{B}_2^{\B} )\vect{Z}_2^{(n)} + \BHtwo \mat{B}_1^{\B} \vect{Z}_1^{(n)},\IEEEeqnarraynumspace
\end{IEEEeqnarray}
\end{subequations}
where $\vect{Z}_1^{(n)}= \trans{\begin{pmatrix}\trans{\vect{Z}_{1,1}},\dots,\trans{\vect{Z}_{1,n}}\end{pmatrix}}$, $\vect{Z}_2^{(n)}=\trans{\begin{pmatrix}\trans{\vect{Z}_{2,1}},\dots,\trans{\vect{Z}_{2,n}}
\end{pmatrix}}$, and  $\vect{ \tilde W}^{(n)}$ is the $n\kappa$-dimensional vector that is obtained when stacking on top of each other all the $n$ codevectors ($\kappa$-dimensional column-vectors) that are produced by the encoding function $\tilde{\xi}^{(n)}$. Notice that the power-constraint~\eqref{eq:powerconstraintBC} is  equivalent to requiring that 
\begin{equation}
\E{\|\vect{\tilde W}^{(n)}\|^2}\leq n P - \tr{\mat{B}_1^\B\trans{\mat{(\mat{B}_1^\B)}}}- \tr{\mat{B}_2^\B\trans{\mat{(\mat{B}_2^\B})}}.
\end{equation}

Let now $\eta=n$ and consider the BC in~\eqref{eq:BC_outputs} where the transmitter is equipped with $\eta \kappa$ antennas and Receiver~$i$ with $\eta \nu_i$ antennas, for $i\in\{1,2\}$, and where $\vect{\tilde W}^{(\eta)}$ denotes the $\eta\kappa$-dimensional input-vector. Recall that we denoted by $\mathcal{R}_{\BC}(\eta,\mat{B}_1^\B,\mat{B}_2^\B,\mat{H}_1^\B,\mat{H}_2^\B;P)$  the capacity region of this channel under an expected average block-power constrained $(\eta P - \tr{\mat{B}_1^\B\trans{\mat{(\mat{B}_1^\B)}}}- \tr{\mat{B}_2^\B\trans{\mat{(\mat{B}_2^\B})}})$ on the input $\vect{\tilde W}^{(\eta)}$.
Using random coding and joint typicality decoding, it can be shown that the nonnegative rate-pair $(\tilde{R}_1,\tilde{R}_2)$ lies in this capacity region $\mathcal{R}_{\BC}(\eta,\mat{B}_1^\B,\mat{B}_2^\B,\mat{H}_1^\B,\mat{H}_2^\B;P)$ if it satisfies 
\begin{subequations}
\begin{IEEEeqnarray}{cCl}
\tilde{R}_1&\leq& I(\Theta_1;\vect{Y}_1^{(\eta)})\\
 \tilde{R}_2&\leq& I(\Theta_2;\vect{Y}_2^{(\eta)})
\end{IEEEeqnarray}
\end{subequations}
for some independent auxiliary random variables $\Theta_1$ and $\Theta_2$ and a choice of  $\vect{\tilde W}^{(\eta)}$ such that $(\Theta_1,\Theta_2, \vect{\tilde W}^{(\eta)})$ are independent  of $(\vect{Z}_1^{(\eta)},\vect{Z}_2^{(\eta)})$. 

Specializing this last argument to $\Theta_1=M_1$ and $\Theta_2=M_2$, by \eqref{eq:BC_r_prim}, we conclude that for any $\delta>0$ the rate-pair $(R_1', R_2')$ defined in \eqref{eq:Rdelta} lies in $\mathcal{R}_{\BC}(\eta,\mat{B}_1^\B,\mat{B}_2^\B,\mat{H}_1^\B,\mat{H}_2^\B;P)$, which concludes the proof.

\subsection{Proof of Lemma~\ref{lem1}}\label{sec:pflm2}
For the inputs transmitted in the first $\eta$-length block and described by~\eqref{eq:input_mac}, it holds:
\begin{IEEEeqnarray}{rCl}
\E{\|\vect{X}_{1}\|^2} & = &  \tr{ \E{\vect{X}_1 \trans{\vect{X}_1}}} \nonumber \\
 & = & \tr{(\mat{Q}_1^{-1} +\mat{D}_1^\B\tBHone \mat{Q}_1^{-1} )  \E{\vect{U}_1 \trans{\vect{U}}_1}  \trans{(\mat{Q}_1^{-1}  +\mat{D}_1^\B\tBHone \mat{Q}_1^{-1} )}}\nonumber \\
 &  &  +  \tr{\mat{D}_1^\B  \tBHtwo \mat{Q}_2^{-1} \E{\vect{U}_2 \trans{\vect{U}}_2} \trans{\mat{D}_1^\B (\tBHtwo \mat{Q}_2^{-1})} }+ \tr{\mat{D}_1^\B \trans{\mat{(\mat{D}_1^\B)}} },\nonumber
\end{IEEEeqnarray}
and similarly, 
\begin{IEEEeqnarray}{rCl}
\E{\|\vect{X}_{2}\|^2} & = &  \tr{ \E{\vect{X}_2 \trans{\vect{X}_2}}} \nonumber \\
 & = & \tr{(\mat{Q}_2^{-1}+\mat{D}_2^\B\tBHtwo\mat{Q}_2^{-1} )  \E{\vect{U}_2 \trans{\vect{U}}_2}  \trans{(\mat{Q}_2^{-1}+\mat{D}_2^\B\tBHtwo \mat{Q}_2^{-1} )}  }\nonumber \\
 &  &     +\tr{\mat{D}_2^\B  \tBHone \mat{Q}_1^{-1} \E{\vect{U}_1 \trans{\vect{U}}_1}(\mat{D}_2^\B\tBHone \mat{Q}_1^{-1})}+ \tr{ \mat{D}_2^\B \trans{\mat{(\mat{D}_2^\B)}}}.\nonumber
\end{IEEEeqnarray}
Notice now that since $\tr{\mat{AB}}=\tr{\mat{BA}}$, by the definition of $\mat{M}_1$ in~\eqref{eq:M1} and because $\mat{Q}_1^{-1}$ is symmetric
\begin{IEEEeqnarray}{rCl}\lefteqn{
\textnormal{tr}\Big( (\mat{Q}_1^{-1}+\mat{D}_1^\B\tBHone \mat{Q}_1^{-1} )  \E{\vect{U}_1 \trans{\vect{U}}_1}  \trans{(\mat{Q}_1^{-1}+\mat{D}_1^\B\tBHone \mat{Q}_1^{-1} )}\Big)}\qquad\nonumber \\ 
&& \hspace{1cm} + \tr{\mat{D}_2^\B  \tBHone \mat{Q}_1^{-1} \E{\vect{U}_1 \trans{\vect{U}}_1} \trans{(\mat{D}_2^\B\tBHone \mat{Q}_1^{-1})} }\nonumber\\
& = & \textnormal{tr}\Big( \mat{Q}_1^{-1}  \big( \trans{(\mat{I}+\mat{D}_1^\B\tBHone)}  (\mat{I}+\mat{D}_1^\B\tBHone) \nonumber \\ 
& & \hspace{1cm}+ \trans{(\mat{D}_2^\B \tBHone)}( \mat{D}_2^\B  \tBHone) \big) \mat{Q}_1^{-1}  \cdot\E{\vect{U}_1 \trans{\vect{U}}_1} \Big)\nonumber \\
& = & \tr{  {\mat{Q}}_1^{-1} \mat{M}_1  \mat{Q}_1^{-1} \E{\vect{U}_1 \trans{\vect{U}}_1} }\nonumber \\
&= & \E{\|\vect{U}_1\|^2}. 
\end{IEEEeqnarray}
Similarly,
\begin{IEEEeqnarray}{rCl}\lefteqn{
\textnormal{tr}\big( (\mat{Q}_2^{-1}+\mat{D}_2^\B\tBHtwo \mat{Q}_2^{-1} )  \E{\vect{U}_2 \trans{\vect{U}}_2}  \trans{(\mat{Q}_2^{-1}+\mat{D}_2^\B\tBHtwo \mat{Q}_2^{-1} )} \big)}\quad\nonumber \\
& & \hspace{1cm} + \tr{\mat{D}_1^\B  \tBHtwo \mat{Q}_2^{-1} \E{\vect{U}_2 \trans{\vect{U}}_2} \trans{(\mat{D}_1^\B\tBHtwo\mat{Q}_2^{-1})} }\nonumber \\
&= & \E{\|\vect{U}_2\|^2}. \hspace{6.5cm}
\end{IEEEeqnarray}
Combining all these equalities, by the linearity of the trace, we obtain that 
\begin{IEEEeqnarray*}{rCl}\E{\|\vect{X}_{1}\|^2}+ \E{\|\vect{X}_{2}\|^2} 
 &=&\E{\|\vect{U}_1\|^2}+  \E{\|\vect{U}_2\|^2}
 + \tr{\mat{D}_1^\B \trans{\mat{(\mat{D}_1^\B)}}}+\tr{\mat{D}_2^\B \trans{\mat{(\mat{D}_2^\B)}}}
\end{IEEEeqnarray*}
and can thus conclude that the input sequences satisfy the average total input-power constraint $P$ whenever  $\eta P- \tr{\mat{D}_1^\B \trans{\mat{(\mat{D}_1^\B)}}}-\tr{\mat{D}_2^\B \trans{\mat{(\mat{D}_2^\B)}}}\geq 0$ and the vectors $\vect{U}_1$ and $\vect{U}_2$ produced by the outer code satisfy the average total input-power constraint $\left(\eta P- \tr{\mat{D}_1^\B \trans{\mat{(\mat{D}_1^\B)}}}-\tr{\mat{D}_2^\B \trans{\mat{(\mat{D}_2^\B)}}}\right)$.

\subsection{Converse Proof to Proposition~\ref{prop:mac}}
\label{sec:conv_prop_mac}
We wish to prove
 \begin{IEEEeqnarray}{rCl}
 \label{eq:conv_mac}
 \set{C}_{\MAC}^{\linfb}\left(\trans{\mat{H}_1},\trans{\mat{H}_2};P\right)
&\subseteq&  \textnormal{cl}\left( \bigcup_{(\eta,\mat{D}_1^\B, \mat{D}_2^\B)}\frac{1}{\eta}  \mathcal{R}_{\MAC}\left(\eta,\mat{D}_1^\B,\mat{D}_2^\B,\tBHone,\tBHtwo;P\right)\right)\hspace*{-1mm}.\IEEEeqnarraynumspace
 \end{IEEEeqnarray}

Fix $(R_1,R_2)\in\set{C}_{\MAC}^{\linfb}\left(\trans{\mat{H}_1},\trans{\mat{H}_2};P\right)$ and for these rates and for each blocklength $n$ 
we fix encoding and decoding functions $\xi_1^{(n)}, \xi_2^{(n)}, \phi^{(n)},$ and linear-feedback matrices $\{\mat{C}_{i,\tau,\ell}^{(n)}\}$ 
such that the sequence of probabilities of error $P_{\e,\MAC}^{(n)}\to 0$ as $n\to\infty$ and the power constraint~\eqref{eq:powerconstraintMAC} is satisfied.

Applying Fano's inequality, we obtain that for each positive integer $n$,  
\begin{subequations}\label{eq:RMACdelta}
\begin{IEEEeqnarray}{cCl}
n R_1&\leq &I(M_1;\vect{Y}^{(n)}) 
+\epsilon_n,\\
 n R_2&\leq& I(M_2;\vect{Y}^{(n)})  
 +\epsilon_n, 
 \end{IEEEeqnarray} 
 \end{subequations}
where $\frac{\epsilon_n}{n}\to 0$ as $n \to \infty$ and where $\vect{Y}^{(n)}$ denotes the $n\kappa$-dimensional column-vector that is obtained by stacking on top of each other all the $n$ vectors observed at the receiver when the blocklength-$n$ scheme is applied. 

Letting $n \to \infty$, we have
 \begin{subequations}
 \label{eq:rconv2}
 \begin{IEEEeqnarray}{cCl}
 R_1&\leq& \varlimsup_{n\to \infty} \frac1n  I(M_1;\vect{Y}^{(n)})\\        
 R_2&\leq& \varlimsup_{n\to \infty} \frac1n  I(M_2;\vect{Y}^{(n)}).
 \end{IEEEeqnarray}
 \end{subequations}

Since the RHS of \eqref{eq:conv_mac} is closed, it suffices to prove that $\forall \delta >0,$ the pair $(R'_1,R'_2),$ 
\begin{subequations}
\label{eq:bc_r_prim2}
\begin{IEEEeqnarray}{rCl}
R'_1&\triangleq& \eta (R_1-\delta)\\
R'_2&\triangleq& \eta (R_2-\delta),
\end{IEEEeqnarray}
\end{subequations}
lies in $ \mathcal{R}_{\MAC}(\eta,\mat{D}_1^\B,\mat{D}_2^\B,\tBHone,\tBHtwo;P)$ for some positive integer $\eta$ and strictly-lower block-triangular $\eta\nu_1$-by-$\eta \kappa$ and $\eta\nu_2$-by-$\eta \kappa$ matrices $\mat{D}_1^\B$ and $\mat{D}_2^\B$ of block sizes $\nu_1\times\kappa$ and $\nu_2\times\kappa$, respectively.

By~\eqref{eq:rconv2} and~\eqref{eq:bc_r_prim2}, there exists a finite blocklength $n$ such that
\begin{subequations}
\label{eq:MAC_r_prim}
 \begin{IEEEeqnarray}{cCl}
 R'_1 &\leq&  I(M_1;\vect{Y}^{(n)}), \\
 R'_2 &\leq&  I(M_2;\vect{Y}^{(n)}). 
 \end{IEEEeqnarray}
 \end{subequations}
In the sequel, let $n$ be fixed and so that \eqref{eq:MAC_r_prim} holds. Also, based on the parameters $\{\mat{C}_{i,\tau,\ell}^{(n)}\}$ of the blocklength-$n$ scheme, let
\begin{IEEEeqnarray}{rCl}
\mat{C}_i^{\B}&=&\begin{bmatrix}   
\mat{0}&& \dots &&\mat{0} \\ 
\mat{C}_{i,1,2}^{(n)} &\mat{0}  \\
\mat{C}_{i,1,3}^{(n)}&\mat{C}_{i,2,3}^{(n)} &\mat{0} \\ 
\vdots&&&\ddots\\
\mat{C}_{i,1,n}^{(n)}&\mat{C}_{i,2,n}^{(n)}&\dots&\mat{C}_{i,(n-1),n}^{(n)}&\mat{0}
\end{bmatrix},\quad i\in \{1,2\},
\end{IEEEeqnarray}
and 
\begin{equation}
\mat{D}_i^{\B}=\mat{C}_i^{\B}\left(\mat{I}-\tBHone \mat{C}^{\B}_1-\tBHtwo \mat{C}_2^{\B}\right)^{-1},\quad i\in \{1,2\}.
\end{equation}
Let moreover, 
 $\mat{Q}_1$ and $\mat{Q}_2$ be the unique positive square roots of the (positive-definite)  matrices
\begin{subequations}
\begin{IEEEeqnarray*}{rCl}
\mat{M}_1&=&\trans{(\mat{I}+\mat{D}_1^{\B}\tBHone)}(\mat{I}+ \mat{D}_1^{\B}\tBHone)+\trans{(\mat{D}_2^{\B}\tBHone)}\mat{D}_2^{\B} \tBHone\\
 \mat{M}_2&=&\trans{(\mat{I}+\mat{D}_2^{\B}\tBHtwo)} (\mat{I}+ \mat{D}_2^{\B}\tBHtwo)+\trans{(\mat{D}_1^{\B} \tBHtwo)}\mat{D}_1^{\B}\tBHtwo
 \end{IEEEeqnarray*}
\end{subequations}
and define
\begin{subequations}
\begin{IEEEeqnarray}{rCl} 
\vect{U}_1^{(n)}& \triangleq& \mat{Q}_1 \vect{W}_2^{(n)} \\
\vect{U}_2^{(n)}& \triangleq& \mat{Q}_2 \vect{W}_2^{(n)} 
\end{IEEEeqnarray}
\end{subequations}
where $\vect{W}_i^{(n)}$ denotes the $n\nu_i$-dimensional column-vector that is obtained by stacking on top of each other all the $n$ vectors produced by the encoding function $\xi_i^{(n)}$.  

Using similar algebraic manipulations as leading to \eqref{eq:blockoutputs}, we can write $\vect{Y}^{(n)}$ as
\begin{IEEEeqnarray}{lCl}
\label{eq:MAC_output}
\vect{Y}^{(n)}&=&(\mat{I}+\tBHone\mat{D}_1^{\B}+\tBHtwo\mat{D}_2^{\B})\cdot\big(\tBHone \mat{Q}_1^{-1} \vect{U}_1^{(n)} + \tBHtwo \mat{Q}_2^{-1} \vect{U}_2^{(n)}+ \vect{Z}^{(n)}\big), \IEEEeqnarraynumspace
\end{IEEEeqnarray}
where $\vect{Z}^{(n)}=\trans{\begin{pmatrix}\trans{\vect{Z}_1},&\dots,&\trans{\vect{Z}_n}\end{pmatrix}}$. 
In the same way as in Lemma~\ref{lem1} it can be shown  that the power constraint~\eqref{eq:powerconstraintMAC} is equivalent to requiring that 
\begin{IEEEeqnarray}{rCl}
\E{\|\vect{U}_1^{(n)}\|^2}+\E{\|\vect{U}_2^{(n)}\|^2}
&\leq&\eta P-  \textnormal{tr}(\mat{D}_1^{\B} \trans{\mat{(\mat{D}_1^{\B})}})- \textnormal{tr}(\mat{D}_2^{\B} \trans{\mat{(\mat{D}_2^{\B})}}). 
 \end{IEEEeqnarray}

Let now $\eta=n$ and consider the MIMO MAC~\eqref{eq:MAC_output}, where Transmitter $i$, for $i\in \{1,2\}$, is equipped with $\eta \nu_i$ antennas, the receiver is equipped with $\eta \kappa$ antennas,
and where $\vect{U}_1^{(\eta)}$ and $\vect{U}_2^{(\eta)}$  denote the $\eta \nu_1$ and $\eta \nu_2$-dimensional independent input-vectors. Recall that we denoted by $\mathcal{R}_{\MAC}(\eta,\mat{D}_1^\B,\mat{D}_2^\B,\mat{H}_1^\B,\mat{H}_2^\B;P)$  the capacity region of this channel under an expected total average block-power constraint $(\eta P - \tr{\mat{D}_1^\B\trans{\mat{(\mat{D}_1^\B)}}}- \tr{\mat{D}_2^\B\trans{\mat{(\mat{D}_2^\B})}})$ on the inputs $\vect{U}_1^{(\eta)}$ and $\vect{U}_2^{(\eta)}$.
Using random coding and joint typicality decoding, 
it can be shown that the nonnegative rate-pair  $(\tilde R_1,\tilde R_2)$ lies in $\mathcal{R}_{\MAC}\left(\eta,\mat{D}_1^\B,\mat{D}_2^\B,\tBHone,\tBHtwo;P\right)$ if it satisfies
\begin{subequations}
\begin{IEEEeqnarray}{cCl}                 
  \tilde R_1&\leq&    I(\Theta_1; \vect{Y}^{(\eta)}), \label{eq:cond1}\\        
  \tilde R_2&\leq &   I(\Theta_2;\vect{Y}^{(\eta)}) \label{eq:cond2}
  \end{IEEEeqnarray}
  \end{subequations}
for some  auxiliary random variables $\Theta_1$ and $\Theta_2$ and some choice of the inputs $\vect{U}_1^{(\eta)}$ and $\vect{U}_2^{(\eta)}$ such that the pairs $(\Theta_1, \vect{U}_1^{(\eta)})$ and $(\Theta_2, \vect{U}_2^{(\eta)})$ are independent of each other and of the noise vectors $\vect{Z}_1^{(\eta)}, \vect{Z}_2^{(\eta)}$.
   
  Specializing this last argument to $\Theta_1=M_1$ and $\Theta_2=M_2$, by \eqref{eq:MAC_r_prim}, we conclude that the rate-pair $(R_1', R_2')$ defined in \eqref{eq:bc_r_prim2} lies in $\mathcal{R}_{\MAC}\left(\eta,\mat{D}_1^\B,\mat{D}_2^\B,\tBHone,\tBHtwo;P\right)$, which establishes the desired proof.

\subsection{Proof of Proposition~\ref{prop:equal_regions}}\label{sec:proofprop3}
Fix $\eta$, channel matrices $\mat{H}_1$ and $\mat{H}_2$, and strictly-lower block-triangular matrices $\mat{B}_1^{\B}, \mat{B}_2^{\B}$. Also, let $\mat{D}_1^{\B}, \mat{D}_2^{\B}$ be given as in~\eqref{eq:choice}. Notice that since $\mat{B}_1^{\B}$ and $\mat{B}_2^{\B}$ are strictly-lower block-triangular, so are $\mat{D}_1^{\B}$ and $\mat{D}_2^{\B}$.
Also, let $\BHone$ and $\BHtwo$ be defined by~\eqref{eq:blockmat} and for $i\in\{1,2\}$ let $\mat{\bar{H}}_i^{\B} = \mat{I}_\eta \otimes\mat{\bar{H}}_i$. 

We consider the MIMO MAC in~\eqref{eq:blockoutputs}, but where now $\trans{\mat{H}}_i$ and $\trans{(\BH_i)}$ are replaced by $\mat{\bar H}_i$ and $\bBH_i$. So, we consider the MIMO MAC: 
\begin{IEEEeqnarray}{rCl}\label{eq:blockoutputs2}
\vect{Y}'&=&(\mat{I}+\bBHone\mat{D}_1^{\B}+\bBHtwo\mat{D}_2^{\B})\cdot (\bBHone \mat{Q}_1^{-1}\vect{U}_1 + \bBHtwo\mat{Q}_2^{-1} \vect{U}_2+ \vect{Z}),
\end{IEEEeqnarray}
where now $\mat{Q}_1$ and $\mat{Q}_2$ are the unique positive-definite square-roots of the matrices
\begin{subequations}
\begin{IEEEeqnarray}{rCl}
\mat{M}_1&=&\trans{(\mat{I}+\mat{D}_1^{\B}\bBHone)}(\mat{I}+ \mat{D}_1^{\B}\bBHone)+\trans{(\mat{D}_2^{\B}\bBHone)}(\mat{D}_2^{\B} \bBHone),\IEEEeqnarraynumspace\\
 \mat{M}_2&=&\trans{(\mat{I}+\mat{D}_2^{\B}\bBHtwo)} (\mat{I}+ \mat{D}_2^{\B}\bBHtwo)+\trans{(\mat{D}_1^{\B} \bBHtwo)}(\mat{D}_1^{\B}\bBHtwo).
 \end{IEEEeqnarray}
\end{subequations}
That means $\mat{Q}_1$ and $\mat{Q}_2$ are the unique positive-definite symmetric matrices that satisfy 
\begin{subequations}\label{eq:Q1}
\begin{IEEEeqnarray}{rCl}
\mat{Q}_1 {\mat{Q}}_1 &= & \mat{M}_1\\
\mat{Q}_2 {\mat{Q}}_2 &= & \mat{M}_2.
\end{IEEEeqnarray}
\end{subequations}

Since the matrix $(\mat{I}+\bBHone\mat{D}_1^{\B}+\bBHtwo\mat{D}_2^{\B})$ is invertible, the capacity region of the MAC in~\eqref{eq:blockoutputs2} under any input power constraint equals the capacity region of the MAC 
\begin{IEEEeqnarray}{lCl}\label{eq:MIMOMAC}
\vect{Y}'_\MAC&=&\bBHone \mat{Q}_1^{-1}\vect{U}_1 + \bBHtwo\mat{Q}_2^{-1} \vect{U}_2+ \vect{Z}\label{eq:y_mac2}\IEEEeqnarraynumspace
\end{IEEEeqnarray}
under the same input power constraint. This holds because the receiver can multiply its output vectors by an invertible matrix without changing the capacity region of the MAC. 

We now turn to the BC~\eqref{eq:outputsBC}. 
Let $\mat{S}_1$ and $\mat{S}_2$ be the positive square roots of the positive-definite matrices
\begin{subequations} 
\begin{IEEEeqnarray}{rCl}
\mat{N}_1&\triangleq & (\mat{I}+\BHone\mat{B}_1^{\B})\trans{(\mat{I}+\BHone\mat{B}_1^{\B})}+(\BHone\mat{B}_2^{\B})\trans{(\BHone\mat{B}_2)}\\ 
\mat{N}_2&\triangleq &(\BHtwo \mat{B}_1^{\B}) \trans{(\BHtwo\mat{B}_1^{\B})}+(\mat{I}+\BHtwo\mat{B}_2^{\B})\trans{(\mat{I}+\BHtwo\mat{B}_2^{\B})}.
\end{IEEEeqnarray}
\end{subequations} 
That means, $\mat{S}_1$ and $\mat{S}_2$ are the unique positive-definite symmetric matrices that satisfy
\begin{subequations}\label{eq:S1}
\begin{IEEEeqnarray}{rCl}
\mat{S}_1 {\mat{S}}_1 &= & \mat{N}_1\\
\mat{S}_2 {\mat{S}}_2 &= & \mat{N}_2.
\end{IEEEeqnarray}
\end{subequations}
The matrices $\mat{S}_1$ and $\mat{S}_2$ are invertible. 
Therefore, since in a MIMO BC each receiver can multiply its output vectors by an invertible matrix (here $\mat{E}\mat{S}_i^{-1}$) without changing the capacity of the BC, under any power constraint on the input vectors $\vect{W}$, the MIMO BC in~\eqref{eq:outputsBC} has the same capacity region as the MIMO BC
\begin{IEEEeqnarray}{rCl}
\vect{{Y}}_{i}'&\triangleq&\mat{E}\mat{S}_i^{-1}{\BH_i} \vect{U}+ \vect{\tilde Z}_i,\quad i\in\{1,2\},\label{eq:new_mimo_bc1}
\end{IEEEeqnarray}
where  $\vect{\tilde Z}_1$ and $\vect{\tilde Z}_2$ denote independent centered Gaussian vectors of identity covariance matrices. 

Define  now a new input-vector $\vect{\breve{U}}$ which is obtained from ${\vect{U}}$ by reversing the order of the elements:
\begin{equation}\label{eq:Wtilde}
\vect{\breve{U}}\triangleq \mat{E} \vect{U}.
\end{equation}
Notice that $\|\vect{\breve{U}}\|^2$ and $\|\vect{U}\|^2$ are equal. Thus, when the input vectors $\vect{U}$ are average block-power constrained to 
\begin{equation}\label{power3}
\eta P- \textnormal{tr}(\mat{B}_1^{\B} \trans{\mat{(\mat{B}_1^{\B})}})- \textnormal{tr}(\mat{B}_2^{\B} \trans{\mat{(\mat{B}_2^{\B})}}),
\end{equation}
the MIMO BC in~\eqref{eq:new_mimo_bc1} has the same capacity region as  the MIMO BC
\begin{IEEEeqnarray}{rCl}
\vect{{Y}}'_{i,\BC}&\triangleq&\mat{E}\mat{S}_i^{-1}\BH_i \mat{E} \vect{\breve{U}}+ \vect{\tilde Z}_i,\quad i\in\{1,2\},\label{eq:new_mimo_bc}
\end{IEEEeqnarray}
when the input vectors $\vect{\breve{U}}$ are average block-power constrained to the same power~\eqref{power3}. 

We conclude the proof by showing that the capacity region of the  MIMO BC in~\eqref{eq:new_mimo_bc} under average input power constraint~\eqref{power3} 
and the capacity region of the MIMO MAC~\eqref{eq:MIMOMAC} under average input-power constraint 
\begin{equation}\label{power4} 
\eta P- \textnormal{tr}(\mat{D}_1^{\B} \trans{\mat{(\mat{D}_1^{\B})}})- \textnormal{tr}(\mat{D}_2^{\B} \trans{\mat{(\mat{D}_2^{\B})}})
\end{equation}
are the same.
To this end, we first notice that by Assumption~\eqref{eq:choice}, the two power constraints~\eqref{power3} and \eqref{power4} coincide. In fact,  for $i\in\{1,2\}$, 
\begin{IEEEeqnarray}{rCl}
\textnormal{tr}(\mat{B}_i^{\B} \trans{(\mat{B}_i^{\B})}) & = & \textnormal{tr}( \mat{E}\trans{(\mat{D}_i^{\B})} \mat{E} {\mat{E}\mat{D}_i^{\B} \mat{E} })\nonumber \\
&=&\textnormal{tr}( \mat{E}\trans{(\mat{D}_i^{\B})} \mat{D}_i^{\B} \mat{E} )\nonumber \\
&=&\textnormal{tr}(  \mat{D}_i^{\B} \mat{E} \mat{E}\trans{(\mat{D}_i^{\B})})\nonumber \\
& = & \textnormal{tr}(\mat{D}_i^{\B}\trans{(\mat{D}_i^{\B})} ),
\end{IEEEeqnarray}
where  the first, second, and fourth equality hold because $\mat{E}=\trans{\mat{E}}$ and $\mat{E}^{-1}=\mat{E}$, and the third equality holds because $\tr{\mat{A}\mat{B}}=\tr{\mat{B}\mat{A}}$ for any matrices $\mat{A}$ and $\mat{B}$. 
Moreover, we shall shortly show that the BC in~\eqref{eq:new_mimo_bc} and the MAC in~\eqref{eq:MIMOMAC} are dual in the sense that 
\begin{equation}\label{eq:condduality}
\mat{E}\mat{S}_i^{-1} \BH_i\mat{E}= \trans{(\bBH_i\mat{Q}_i^{-1})}, \qquad i\in\{1,2\}.
\end{equation}
The desired equality~\eqref{eq:equal_regions} in the proposition follows then immediately from  the nofeedback duality of the MIMO Gaussian MAC and BC, $\set{C}_{\BC}^{\textnormal{nofb}}(\mat{H}_1^{\B}, \mat{H}_2^{\B};\eta P)=\set{C}_{\MAC}^{\textnormal{nofb}}(\trans{(\mat{H}_1^{\B})}, \trans{(\mat{H}_2^{\B})};\eta P)$  \cite{VJG03,VisTse,WSS06}.

In the remaining of this section we prove~\eqref{eq:condduality}. Notice that by Assumption~\eqref{eq:choice}, 
\begin{IEEEeqnarray}{rCl}
\mat{E} \mat{ M}_1\mat{E}&=& \mat{E}\trans{(\mat{I}+\mat{D}_1^{\B}\bBHone)}(\mat{I}+ \mat{D}_1^{\B}\bBHone) \mat{E}+ \mat{E}\trans{(\mat{D}_2^{\B}\bBHone)}(\mat{D}_2^{\B}\bBHone)\mat{E}\nonumber \\
 & = & \mat{E} \trans{(\mat{I}+\mat{E}\trans{\mat{(\mat{B}_1^{\B})}}\tBHone\mat{E})}(\mat{I}+\mat{E} \trans{\mat{(\mat{B}_1^{\B})}}\tBHone\mat{E}) \mat{E} \nonumber \\ 
 & &+ \mat{E}\trans{(\mat{E}\trans{\mat{(\mat{B}_2^{\B})}}\tBHone\mat{E})}(\mat{E}\trans{\mat{(\mat{B}_2^{\B})}}\tBHone\mat{E}) \mat{E}\nonumber \\
  & = & (\mat{I}+ \BHone \mat{B}_1^{\B})\trans{(\mat{I}+\BHone\mat{B}_1^{\B})}+( \BHone\mat{B}_2^{\B})\trans{( \BHone\mat{B}_2^{\B})}\nonumber \\
  &= & \mat{N}_1,\label{eq:MN}
\end{IEEEeqnarray}
where in the second and third equalities we used again that $\mat{E}=\trans{\mat{E}}$ and  $\mat{E}\mat{E}=\mat{I}$, and in the second equality we also used 
\begin{IEEEeqnarray}{rCl}
\bBH_1&= & \mat{I}_\eta\otimes \left( \mat{E}_\kappa \trans{\mat{H}}_{1} \mat{E}_{\nu_1}\right)\nonumber \\
&=&\left( \mat{E}_\eta \mat{I}_\eta \mat{E}_\eta\right)\otimes \left( \mat{E}_\kappa \trans{\mat{H}}_{1} \mat{E}_{\nu_1}\right)\nonumber\\
&=&(\mat{E}_\eta \otimes \mat{E}_\kappa)(\mat{I}_\eta \otimes \trans{\mat{H}}_{1})(\mat{E}_\eta \otimes \mat{E}_{\nu_1})\nonumber\\
&=& \mat{E}_{\eta \kappa} \trans{(\mat{I}_\eta \otimes \mat{H}_1)} \mat{E}_{\eta\nu_1}\nonumber\\
&=&\mat{E}_{\eta\kappa}\trans{(\BH_1)}\mat{E}_{\eta\nu_1}.
\end{IEEEeqnarray}
Here, the third and fourth equalities hold because for any matrices $\mat{A}, \mat{B}, \mat{C}, \mat{D}$ with appropriate dimensions, the Kronecker product satisfies $(\mat{A}\mat{B})\otimes(\mat{C}\mat{D}) =( \mat{A}\otimes\mat{C})(\mat{B}\otimes\mat{D})$ and $\trans{(\mat{A}\otimes\mat{B})}= \trans{\mat{A}}\otimes\trans{\mat{B}}$.

Combining~\eqref{eq:MN} with~\eqref{eq:Q1} yields 
\begin{IEEEeqnarray}{rCl}
 \mat N_1 &=&\mat{E} \mat{M}_1\mat{E}=\mat{E} \mat{Q}_1  \mat{Q}_1\mat{E}=(\mat{E} \mat{Q}_1\mat{E}) (\mat{E} \mat{Q}_1 \mat{E}).
\end{IEEEeqnarray}
Thus, by~\eqref{eq:S1} and the uniqueness  of $\mat{S}_1$, 
\begin{IEEEeqnarray}{rCl}
\mat{S}_1 & =& \mat{E} \mat{Q}_1\mat{E}.
\end{IEEEeqnarray} 
In a similar way we can also prove that
\begin{IEEEeqnarray}{rCl}
\mat{S}_2 & =& \mat{E} \mat{Q}_2\mat{E}. 
\end{IEEEeqnarray} 
Equality~\eqref{eq:condduality} follows now because for each $i\in\{1,2\}$:
\begin{IEEEeqnarray}{rCl}
\mat{E}\mat{S}_i^{-1}\BH_i\mat{E}&= & \mat{Q}_i^{-1}\mat{E}  \BH_i\mat{E}\nonumber\\
& = &\mat{Q}_i^{-1}  \trans{(\bBH_i)}\nonumber \\
& = & \trans{(\bBH_i\invtrans{\mat{Q}}_i)}\nonumber \\
& = & \trans{(\bBH_i{\mat{Q}}_i^{-1})},
 \end{IEEEeqnarray}
 where here in the last equality we used that $\mat{Q}_i$ is symmetric and thus $\mat{Q}_i^{-1}=\invtrans{\mat{Q}}_i$.

\subsection{Proof of Corollary~\ref{cor:param}}\label{sec:proofcor5}
As a first step, define the  matrices
\begin{equation}\label{eq:Ctrans}
\mat{C}_{i,\tau,\ell}' \triangleq \mat{E}\mat{C}_{i,\tau,\ell}\mat{E},
\end{equation}
and construct the strictly-lower block-triangular matrices $\mat{C}_1^{\B'}$ and $\mat{C}_2^{\B'}$ similarly to~\eqref{eq:CB}
\begin{IEEEeqnarray}{rCl}
\mat{C}_i^{\B'}&=&\begin{bmatrix}   
\mat{0}&& \dots &&\mat{0} \\ 
\mat{C}_{i,1,2}' &\mat{0}  \\
\mat{C}_{i,1,3}'&\mat{C}_{i,2,3}' &\mat{0} \\ 
\vdots&&&\ddots\\
\mat{C}_{i,1,\eta}'&\mat{C}_{i,2,\eta}'&\dots&\mat{C}_{i,(\eta-1),\eta}'&\mat{0}
\end{bmatrix},\quad i\in\{1,2\},
\end{IEEEeqnarray}
Also, let
\begin{IEEEeqnarray}{rCl}\label{eq:Dp}
 \mat{D}_i^{\B'}&\triangleq&\mat{C}_i^{\B'}\left(\mat{I}-\bBHone \mat{C}^{\B'}_1-\bBHtwo \mat{C}_2^{\B'}\right)^{-1},~i\in \{1,2\}.\IEEEeqnarraynumspace
 \end{IEEEeqnarray}

We now  show that under Assumption~\eqref{eq:equivalence_cond},
\begin{IEEEeqnarray}{rCl}
\set{R}_\BC(\eta,\mat{B}_1^{\B},\mat{B}_2^{\B},\mat{ H}_1^\B,\mat{ H}_2^\B;P)&=& 
\set{R}_\MAC(\eta,\mat{D}_1^{\B'},\mat{D}_2^{\B'},\bBHone, \bBHtwo;P) \label{eq:equal_regions12}\IEEEeqnarraynumspace
\end{IEEEeqnarray}
and moreover, 
\begin{IEEEeqnarray}{rCl}
\set{R}_\MAC(\eta,\mat{D}_1^{\B},\mat{D}_2^{\B},\tBHone, \tBHtwo;P)&=& 
\set{R}_\MAC(\eta,\mat{D}_1^{\B'},\mat{D}_2^{\B'},\bBHone, \bBHtwo;P),\label{eq:equal_regions123}\IEEEeqnarraynumspace
\end{IEEEeqnarray}
which combined establish the desired proof. 

Equation~\eqref{eq:equal_regions123} follows by Remark~\ref{rem:trans} and  because through the operation~\eqref{eq:Ctrans} the encoders transform the channel matrix $\trans{\mat{H}}_i$ into $\mat{\bar{H}}_i$. The multiplication from the left by $\mat{E}$ makes that the inputs are premultiplied by~$\mat{E}$ before they are sent over the channel and the multiplication from the right makes that the feedback outputs are first multiplied by $\mat{E}$ before further use, see~\eqref{eq:input_mac}. (See also the proof of Remark~\ref{rem:trans}.)

To prove~\eqref{eq:equal_regions12}, we shall show that 
\begin{equation}\label{eq:DpB}
\mat{\bar D}_i^{\B'}=\mat{B}_i^{\B},
\end{equation}
which by Proposition~\ref{prop:equal_regions} establishes~\eqref{eq:equal_regions12}. 
Notice first that Condition~\eqref{eq:equivalence_cond} implies 
\begin{equation}
\mat{\bar A}_i^\B= \mat{ C}_i^{\B'}.
\end{equation}
Therefore, by~\eqref{eq:Dp}, and by the properties in Note~\ref{note:bar},
\begin{IEEEeqnarray}{rCl}
\mat{\bar D}_i^{\B'}
& = & \overline{ \mat{\bar A}_i^\B  \left(\mat{I}- \bBHone \mat{\bar A}^{\B}_1 - \bBHtwo \mat{\bar A}_2^{\B}\right)^{-1}}\nonumber\\
&=&\left(\mat{I}- \mat{A}^{\B}_1\BHone - \mat{A}_2^{\B}\BHtwo\right)^{-1}\mat{A}_i^{\B}\nonumber\\
&= & \mat{B}_i^{\B} 
\end{IEEEeqnarray}
and thus concludes the proof.

\appendices
\section{Proofs of Auxiliary Results}\label{app:proofs}

\subsection{Proof of~\eqref{eq:MACsym}}\label{app:cor}
Fix a nonzero real number $h$ and a positive real number $P$.
By~\eqref{sumcapMAC}, 
\begin{IEEEeqnarray}{rCl}\label{eq:maximization}
C_{\MAC,\SISO,\Sigma}^{\fb}(h, h;P)
& = & \quad \max_{\mathclap{\substack{P_1,P_2\geq 0:\\P_1+P_2=P}}}\quad  \frac{1}{2} \log \left( 1+ h^2 P +2 h^2 \sqrt{P_1P_2}\rho^\star(h,h;P_1,P_2)\right),\nonumber \\
& = &  \max_{\alpha\in [0,1]}\frac{1}{2} \log ( 1+ h^2 P + 2 h^2 P \zeta_{P,h}(\alpha)) 
\end{IEEEeqnarray}
where the function $ \zeta_{P,h}$ is defined as
\begin{IEEEeqnarray}{rCl}\label{eq:f}
 \zeta_{P,h}\colon [0,1]&\to& \left[0,\frac{1}{4}\right] \nonumber \\
 \alpha& \mapsto &\sqrt{\alpha (1-\alpha)}\rho^\star(h,h;\alpha P,(1-\alpha) P).
 \end{IEEEeqnarray}
We argue in the following that irrespective of the values of $h$ and $P$: 
\begin{equation}\label{eq:argmax}
 \argmax_{\alpha\in[0,1]}\zeta_{P,h}(\alpha)=\frac{1}{2},
\end{equation} 
and thus the sum-capacity $C_{\MAC,\SISO,\Sigma}^{\fb}(h, h;P)$ is as in~\eqref{eq:MACsym}.
More specifically, we show that if~\eqref{eq:argmax} was violated, then 
the sum-capacity of the  scalar Gaussian MAC with symmetric channel gains $h$ and symmetric individual power constraints $P/2$ differs from $\frac{1}{2} \log ( 1+ h^2 P +  +2 h^2 P \zeta_{P,h}(1/2))$, which contradicts the results in~\cite{OZAROW84}. In fact, let's assume for contradiction that there exists a $\alpha^\star\in[0,1]$ such that 
\begin{equation}\label{eq:argmax1}
\zeta_{P,h}(\alpha^\star)>\zeta_{P,h}(1/2).
\end{equation} 
By symmetry of the function $\zeta_{P,h}$, also
\begin{IEEEeqnarray}{rCl}
\zeta_{P,h}(1-\alpha^\star) >\zeta_{P,h}(1/2).
\end{IEEEeqnarray}
We consider the following time-sharing scheme over the scalar Gaussian MAC with symmetric channel gains and power constraints. During the first half of the channel uses
we apply Ozarow's scheme~\cite{OZAROW84} where Transmitter~1 uses average power $\alpha^\star P$ and Transmitter~2 uses average power $(1-\alpha^\star)P$. During the second half we again apply Ozarow's scheme, but now Transmitter~1 uses average power $(1-\alpha^\star)P$ and Transmitter~2 uses average power $\alpha^\star P$. Over the entire block of transmission, each transmitter thus uses average power $P/2$ and satisfies the   individual average power constraint. The described scheme achieves a sum-rate of 
\begin{IEEEeqnarray}{rCl}
  R_{\Sigma}&=&\frac14 \log (1+h^2 P+2h^2 P \zeta_{P,h}(\alpha^\star)) +\frac14 \log (1+h^2 P+2h^2 P \zeta_{P,h}(1-\alpha^\star))\\
  &=&\frac12 \log (1+h^2 P+2h^2 P \zeta_{P,h}(\alpha^\star)).\label{eq:argmax2} 
\end{IEEEeqnarray}
By~\eqref{eq:argmax1} and \eqref{eq:argmax2} the rate of our scheme thus exceeds the sum-capacity of the channel under symmetric individual power constraints, which establishes the desired contradiction.

\subsection{Proof of Note~\ref{lm1}}\label{sec:pflm1}
Recall the mapping $\omega$ defined by~\eqref{eq:B}
\begin{IEEEeqnarray}{rCl}\label{eq:DC1}
 \mat{B}_i^\B&\triangleq&\left(\mat{I}- \mat{A}_1^\B\BHone- \mat{A}_2^\B\BHtwo\right)^{-1}\mat{A}_i^\B,\quad i\in \{1,2\}. 
\end{IEEEeqnarray} 
One can verify that
\begin{IEEEeqnarray}{rCl}\label{eq:DC}
 \mat{A}_i^\B&\triangleq&\left(\mat{I}+ \mat{B}_1^\B\BHone+ \mat{B}_2^\B\BHtwo\right)^{-1}\mat{B}_i^\B,\quad i\in \{1,2\}. 
\end{IEEEeqnarray} 

Observe now that: 
\begin{itemize}
\item If a matrix $\mat{A}$ is strictly-lower block-triangular  with block sizes $\kappa_1\times\kappa_2$ and a matrix $\mat{B}$ is lower block-triangular with block sizes $\kappa_2\times\kappa_3$, then the product $\mat{A}\mat{B}$ is strictly-lower block-triangular with block sizes $\kappa_1\times\kappa_3$.
\item The inverse of a lower block-triangular matrix with block sizes $\kappa$-by-$\kappa$ is again lower block-triangular with the same block sizes.
\end{itemize}
With these observations and inspecting the expressions in~\eqref{eq:DC1} and \eqref{eq:DC}, the lemma follows.

\addcontentsline{toc}{chapter}{\numberline{}Bibliography}

\end{document}